\theoremstyle{plain}
\newtheorem{theorem}{Theorem}
\newtheorem{remark}{Remark}
\newtheorem{lemma}{Lemma}
\newtheorem{condition}{Condition}
\newtheorem{theorem*}{Theorem}
\def\neyman{\textsc{n}}
\def\ding{\textsc{dfm}}
\def\deri{\textnormal{d}}
\def\I{\mathbbm{1}}
\def\bs{}
\def\T{{ \mathrm{\scriptscriptstyle T} }}
\def\pr{\text{pr}}
\def\osl{{\color{black}\alpha}}
\begin{document}
\title{Power and Sample Size Calculations for Rerandomization}
\author{ZACH BRANSON}
\affil{Department of Statistics, Carnegie Mellon University, Pittsburgh, PA 15213, U.S.A., zach@stat.cmu.edu}

\author{XINRAN LI}
\affil{Department of Statistics, University of Illinois, Champaign, IL 61820, U.S.A., xinranli@illinois.edu}

\author{PENG DING}
\affil{Department of Statistics, University of California, Berkeley, CA 94720, U.S.A., pengdingpku@berkeley.edu}

\maketitle

\begin{abstract}
Power analyses are an important aspect of experimental design, because they help determine how experiments are implemented in practice. It is common to specify a desired level of power and compute the sample size necessary to obtain that power. Such calculations are well-known for completely randomized experiments, but there can be many benefits to using other experimental designs. For example, it has recently been established that rerandomization, where subjects are randomized until covariate balance is obtained, increases the precision of causal effect estimators. This work establishes the power of rerandomized treatment-control experiments, thereby allowing for sample size calculators. We find the surprising result that, while power is often greater under rerandomization than complete randomization, the opposite can occur for very small treatment effects. The reason is that inference under rerandomization can be relatively more conservative, in the sense that it can have a lower type-I error at the same nominal significance level, and this additional conservativeness adversely affects power.
This surprising result is due to treatment effect heterogeneity, a quantity often ignored in power analyses. We find that heterogeneity increases power for large effect sizes but decreases power for small effect sizes.
\end{abstract}

\section{Introduction}

We consider two-arm randomized experiments, with the aim of estimating causal effects. Randomized experiments are frequently considered the gold standard of causal inference because even simple estimators, such as the average difference in outcomes between groups, are unbiased if subjects are completely randomized between the two groups 
\citep{Neyman1923, imbens2015causal}. However, it is often beneficial to use covariate information when randomizing subjects. For example, estimators are usually more precise if subjects are grouped into similar blocks and randomized within  blocks \citep{fisher1936design, box1978statistics,imai2008variance, miratrix2013adjusting, tabord2018stratification, pashley2021insights, bai2022optimality}. More generally, one can use rerandomization \citep{morgan2012rerandomization}, where subjects are randomized until covariate balance is obtained. Block randomization is a special case, where subjects are randomized until the blocking variable is balanced. However, it's unclear how to form blocks when there are many, possibly continuous, covariates \citep{bruhn2009pursuit}. Meanwhile, rerandomization can naturally incorporate categorical and continuous covariates in the covariate balance criterion.

Since \cite{morgan2012rerandomization}, many works have established the benefits of rerandomization; this includes experiments with tiers of covariates \citep{morgan2015rerandomization}, sequential experiments \citep{zhou2018sequential}, factorial experiments \citep{branson2016improving,li2020rerandomizationFactorial}, stratified experiments \citep{wang2021rerandomization}, experiments with clusters \citep{lu2022design}, and experiments with high-dimensional covariates \citep{branson211ridge,zhang2021pca}. A common theme is that causal effect estimators are more precise under rerandomization than complete randomization as long as covariates are associated with outcomes. In particular, \cite{li2018asymptotic} established that, asymptotically, the mean-difference estimator has narrower symmetric quantile ranges under rerandomization than under complete randomization. Thus, confidence intervals for average causal effects are narrower under rerandomization.

Intuitively, because rerandomization increases estimation precision, we would expect it to also increase testing power. While this has been alluded to in the literature \citep{morgan2012rerandomization}, the power of rerandomized experiments has not been established. Power analyses are an important aspect of experimental design, because they help determine how experiments are conducted in practice. For example, when designing an experiment, it is common to specify a desired level of power, and then compute the sample size necessary to obtain that level of power \citep{lenth2001some,maxwell2008sample,chow2017sample}. There are many publicly available sample size calculators for completely randomized experiments.

This work establishes testing power under rerandomization, thereby allowing for sample size calculators. We focus on the mean-difference estimator, such that we can leverage results from \cite{li2018asymptotic}. We establish the surprising result that, while power is often greater under rerandomization than complete randomization, the opposite can occur when the average treatment effect is small. The main reason is that inference under rerandomization can be more conservative when there is treatment effect heterogeneity.
More precisely, variance estimators for individual effects overestimate the true variance by the same amount under both designs, which results in a larger proportion of overestimation under rerandomization. 
Specifically, at the same nominal level, 
testing under rerandomization can have a smaller actual type-I error than that under complete randomization.
This additional conservativeness has an adverse effect on power that can outweigh the precision benefits of rerandomization, but only for small treatment effects.

To compare power and sample size between complete randomization and rerandomization, we establish a dispersive ordering between their respective Normal and non-Normal distributions. Our results also quantify how power and sample size are affected by treatment effect heterogeneity, which is often ignored in power analyses. More generally, this work adds to the literature on power analyses for complex experiments, such as two-stage randomized experiments \citep{jiang2020statistical}, regression discontinuity designs \citep{schochet2009statistical}, and difference-in-difference designs \citep{schochet2021statistical}. In the supplementary material, we illustrate the sample size gains practitioners would see from rerandomization under various scenarios, and also show how to implement power and sample size calculations in our R package \texttt{rerandPower}.

\section{Notation for Treatment-Control Experiments}\label{s:notation}

Consider a treatment-control experiment with $N$ subjects indexed by $i=1,\dots,N$. Let $Z = (Z_1,\dots,Z_N)^\T$ denote the binary group indicator, where $Z_i = 1$ denotes treatment and $Z_i = 0$ denotes control, and define $X \equiv (X_1,\dots,X_N)^\T$ as the $N \times K$ covariate matrix. Finally, let $Y_i(1)$ and $Y_i(0)$ denote the potential outcomes for subject $i$, where $Y_i(1)$ denotes the outcome subject $i$ yields if assigned to $Z_i = 1$, and $Y_i(0)$ is analogously defined. Throughout, we assume $Y_i(1)$ and $Y_i(0)$ are fixed; observed outcomes are random only to the extent that $Z$ is random.

Because only $Y_i(1)$ or $Y_i(0)$ is observed for each subject, none of the individual treatment effects $\tau_i = Y_i(1) - Y_i(0)$ are fully observed. Nonetheless, average treatment effects, and other estimands, can still be estimated. We assume the goal is to well-estimate the average treatment effect, defined as 
\begin{align*}
	\tau &= N^{-1} \sum_{i=1}^N \tau_i = \bar{Y}(1) - \bar{Y}(0).
\end{align*}
There are many possible estimators for $\tau$; for simplicity, we focus on the mean-difference estimator
\begin{align*}
	\hat{\tau} = 
	N_1^{-1} \sum_{i=1}^N Z_i Y_i(1) - N_0^{-1} \sum_{i=1}^N (1 - Z_i)Y_i(0),
\end{align*}
where $N_1$ and $N_0$ denote the number of treated and control subjects, respectively.
We will study testing power based on $\hat{\tau}$ under rerandomization. Power will depend on $\tau$, the variance of potential outcomes, and the variance of individual treatment effects, which are respectively defined as:
\begin{align*}
		S_z^2 &= (N - 1)^{-1} \sum_{i=1}^N \left\{ Y_i(z) - \bar{Y}(z) \right\}^2
		\text{ for }z=0,1, 
		\quad 
		S^2_{\tau} = (N-1)^{-1} \sum_{i=1}^N \left( \tau_i - \tau \right)^2. 
\end{align*}

\section{Power and Sample Size Under Rerandomization} \label{s:rerandomizedExperiments}

\subsection{Inference and Power} \label{ss:rerandInferencePower}

In completely randomized experiments, a fixed number of $N_1$ subjects are assigned to treatment and a fixed number of $N_0$ subjects are assigned to control, completely at random (\citealt[Chapter 4]{imbens2015causal}). Alternatively, the covariates $X$ can inform the design, which will affect inference and power. We consider the rerandomization scheme of \cite{morgan2012rerandomization}, where subjects are completely randomized to treatment until $M \leq a$ for a prespecified threshold $a$, where $M$ is the Mahalanobis distance:
\begin{align}
	M = \frac{N_1 N_0}{N} \left( \bar{X}_1 - \bar{X}_0 \right)^{\T} (S^2_{X})^{-1} \left( \bar{X}_1 - \bar{X}_0 \right) \label{eqn:md},
\end{align}
where $\bar{X}_1$ and $\bar{X}_0$ are $K$-length vectors of covariate means in the treatment and control groups, respectively, and $S^2_{X} \equiv (N-1)^{-1} \sum_{i=1}^N (X_i - \bar{X}) (X_i - \bar{X})^\T$ denotes the covariance matrix of $X$. 

Under complete randomization, the mean-difference estimator $\hat{\tau}$ is unbiased and asymptotically Normally distributed \citep{Neyman1923,li2017general}:
\begin{equation}
\begin{aligned}
	&V^{-1/2} N^{1/2} (\hat{\tau}-\tau) \sim \mathcal{N}(0,1), \hspace{0.1 in} \text{ where } \hspace{0.1in} V = p_1^{-1}S_1^2 + p_0^{-1}S_0^2 - S^2_\tau, \label{eqn:compRandDistribution} 
\end{aligned}
\end{equation}
where $p_1=N_1/N$ and $p_0=N_0/N$.
Meanwhile, under rerandomization, $\hat{\tau} \mid M \leq a$ follows a non-Normal distribution that depends on
covariates' association with potential outcomes. Although the covariates and potential outcomes are fixed, a linear projection of potential outcomes on covariates can account for some of the potential outcomes' variance, and thus covariates can still have an association with the potential outcomes.
To quantify this association, we define the following quantities for $z \in \{0,1\}$:
\begin{align*}
    S^2_{z \mid X} &= S_{z, X} (S^2_{X})^{-1} S_{ X, z}, \quad \text{where} \quad S_{z, X} =  S^\T_{X, z} = (N-1)^{-1} \sum_{i=1}^N \{Y_i(z) - \bar{Y}(z)\}(X_i - \bar{X})^\T, \\
    S^2_{\tau \mid X} &= S_{\tau, X} (S^2_{X})^{-1} S_{ X, \tau}, \quad \text{where} \quad S_{\tau, X} =  S^\T_{X, \tau} = (N-1)^{-1} \sum_{i=1}^N (\tau_i - \tau)(X_i - \bar{X})^\T. 
\end{align*}
Here, $S^2_{z \mid X}$ is the variance of the linear projection of potential outcomes on $X$, and $S^2_{\tau \mid X}$ is analogously defined for treatment effects. The asymptotic distribution of $\hat{\tau}$ under rerandomization is \citep{li2018asymptotic}:
\begin{align}
    V^{-1/2} N^{1/2}\left( \hat{\tau} - \tau \right) \mid M \leq a \sim (1 - R^2)^{1/2} \epsilon_0 + R L_{K,a}, \label{eqn:rerandDistribution}
\end{align}
where $\epsilon_0 \sim \mathcal{N}(0,1)$, $R^2$ is the squared multiple correlation between $X$ and potential outcomes: 
\begin{align}
    R^2 = \frac{p_1^{-1} S^2_{1\mid X} + p_0^{-1} S_{0\mid X}^2 - S^2_{\tau\mid X}}{p_1^{-1} S^2_1 + p_0^{-1} S_0^2 - S^2_{\tau}}, \label{eqn:r2}
\end{align}
and $L_{K,a}\sim \chi_{K,a} S \beta_K^{1/2}$ with
\begin{equation}
	\chi^2_{K,a} \sim \chi^2_K  \mid  \chi^2_K \leq a, \quad 
	S \sim -1 + 2 \cdot \text{Bern}(1/2), \quad 
	\beta_K \sim \text{Beta}\{1/2, (K-1)/2\} . \label{eqn:representation}
\end{equation}
The rerandomization distribution (\ref{eqn:rerandDistribution}) and complete randomization distribution (\ref{eqn:compRandDistribution}) are identical only if $R^2 = 0$ or $a=\infty$. Otherwise, (\ref{eqn:rerandDistribution}) will be a non-Normal distribution that has less variance than a standard Normal distribution \citep{li2018asymptotic}. Both distributions involve $V$, which depends on the proportions $p_1$ and $p_0$ and variances $S_1^2$, $S_0^2$, and $S^2_\tau$. The proportions are fixed; meanwhile, $S_1^2$ and $S_0^2$ can be consistently estimated but $S^2_{\tau}$ cannot without additional assumptions. In practice, we can estimate $V$ conservatively. \citet{Neyman1923} proposed $\hat{V}_{\neyman} = p_1^{-1} s^2_1 + p_0^{-1} s_0^2$, where $s_1^2$ and $s_0^2$ are sample versions of $S^2_1$ and $S^2_0$. The estimator $\hat{V}_{\neyman}$ implicitly estimates treatment effect heterogeneity as $\hat{S}^2_{\tau} = 0$ and is thus conservative, in the sense that $E(\hat{V}_{\neyman} - V) = S^2_{\tau} \ge 0$. \citet{ding2019decomposing} noted that $S^2_{\tau} \geq S^2_{\tau\mid X}$, which can be consistently estimated by $s^2_{\tau\mid X} = (s_{1,X} - s_{0, X})(S_X^2)^{-1}(s_{X,1} - s_{X,0})$ \citep{li2018asymptotic}, and thus proposed an improved variance estimator $\hat{V}_{\ding} = \hat{V}_{\neyman} - s^2_{\tau\mid X}$. Here, $s_{1,X} = s_{X,1}^\T $ and $s_{0,X} = s_{X,0}^\T$ are sample analogs of $S_{1,X}$ and $s_{0,X}$, respectively. We will consider both estimators and use the generic notation $\hat{V}$. 
As demonstrated in \citet{li2017general} and \citet{ding2019decomposing}, $\hat{V}$ has a probability limit $\tilde{V}$ no less than the true variance $V$, i.e., $\hat{V} = \tilde{V} + o_{\pr}(1)$ with $\tilde{V} \ge V$.
The probability limits of $\hat{V}_{\neyman}$ and $\hat{V}_{\ding}$ are, respectively,
\begin{align}\label{eqn:V_tilde}
    \tilde{V}_{\neyman} = p_1^{-1} S_1^2 + p_0^{-1} S_0^2, 
    \quad
    \tilde{V}_{\ding} = p_1^{-1} S_1^2 + p_0^{-1} S_0^2 - S_{\tau \mid X}^2.
\end{align}
Thus, $\hat{V}_{\neyman}$ is consistent when $S^2_{\tau}=0$, 
and $\hat{V}_{\ding}$ is consistent more broadly when the individual effects can be linearly explained by the covariates.

The rerandomization distribution (\ref{eqn:rerandDistribution}) suggests the following $(1-\alpha)$-level confidence interval:
\begin{align}
	&\hat{\tau} \pm  \nu_{1-\alpha/2}(\hat{R}^2) \hat{V}^{1/2} N^{-1/2}, \label{eqn:rerandCI}
\end{align}
where $\nu_{1-\alpha/2}(\rho^2)$ denotes the ($1-\alpha/2$)-quantile of the distribution $(1 - \rho^2)^{1/2} \epsilon_0 + \rho L_{K,a}$. 
The representation (\ref{eqn:representation}) makes it simple to approximate this quantile via Monte Carlo after specifying $\hat{R}^2$. In \eqref{eqn:rerandCI}, $\hat{R}^2 = (p_1^{-1} s^2_{1\mid X} + p_0^{-1} s_{0\mid X}^2 - s^2_{\tau\mid X})/\hat{V}$, where $s^2_{1\mid X}$, $s^2_{0\mid X}$, and $s^2_{\tau \mid X}$ are sample analogues of $S^2_{1\mid X}$, $S^2_{0\mid X}$, and $S^2_{\tau\mid X}$, as defined in \cite{li2018asymptotic}. The estimator $\hat{R}^2$ has probability limit $\tilde{R}^2 = VR^2/\tilde{V} \le R^2$;
thus, $\hat{R}^2$ is conservative to the extent that $\hat{V}$ is conservative. Because $\nu_{1-\alpha/2}(R^2) \leq z_{1-\alpha/2}$ for all $\alpha \in (0, 1)$, the interval (\ref{eqn:rerandCI}) is narrower than the analogous interval for a completely randomized experiment if the covariates have any linear association with the outcomes (\citealt[Theorem 2]{li2018asymptotic}).

The interval (\ref{eqn:rerandCI}) implies the following test for the null $H_0: \tau = 0$ against the alternative $H_A: \tau > 0$:
\begin{align}
\begin{cases}
\text{Reject $H_0: \tau = 0$} &\mbox{ if } \hat{\tau} > \nu_{1-\osl}(\hat{R}^2) \hat{V}^{1/2} N^{-1/2}, \\
\text{Fail to reject $H_0: \tau = 0$} &\mbox{ otherwise.}
\end{cases}
\label{eqn:rerandTest}    
\end{align}
Similar to most power analyses \citep{lachin1981introduction,lerman1996study,wittes2002sample}, we focus on the one-sided test here because it simplifies sample size calculations. For any $\alpha$-level two-sided test, we can always bound its power and sample size requirement by that of a $\alpha/2$-level one-sided test. 
Furthermore, 
throughout the paper, we assume the significance level $\alpha$ of the test  in \eqref{eqn:rerandTest} is below or equal to $0.5$, which is the case for most if not all applications.
The following theorem establishes the power of the above test.

\begin{theorem}
\label{thm:powerRerand}
	Under rerandomization, the power of the test (\ref{eqn:rerandTest}) is asymptotically:
	\begin{align*}
	    \overline{\mathcal{V}}_{R^2}\left( \frac{\nu_{1-\osl} (\tilde{R}^2) \tilde{V}^{1/2} - \tau N^{1/2}}{V^{1/2} } \right),
	\end{align*}
	where 
	$\overline{\mathcal{V}}_{R^2}(\cdot)$ denotes the survival function of $(1 - R^2)^{1/2} \epsilon_0 + R L_{K,a}$, 
	$V$ denotes the variance in (\ref{eqn:compRandDistribution}), $\tilde{V}$ denotes the probability limit of its corresponding estimator, $R^2$ denotes the squared multiple correlation in (\ref{eqn:r2}), and $\tilde{R}^2 = VR^2/\tilde{V}$ denotes the probability limit of $\hat{R}^2$ in (\ref{eqn:rerandCI}). 
\end{theorem}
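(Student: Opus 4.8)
The plan is to rewrite the power of the test \eqref{eqn:rerandTest} as a single rejection probability for the standardized estimator and then pass to the limit, combining the conditional distributional limit in \eqref{eqn:rerandDistribution} with the consistency of $\hat V$ and $\hat R^2$. First I would subtract $\tau$ from both sides of the rejection rule and multiply through by $N^{1/2}/V^{1/2}>0$; since this operation preserves the inequality, the power equals
\[
	\Pr\!\left( \frac{N^{1/2}(\hat\tau-\tau)}{V^{1/2}} > \frac{\nu_{1-\osl}(\hat R^2)\hat V^{1/2} - \tau N^{1/2}}{V^{1/2}} \;\Big|\; M \le a \right).
\]
The statistic on the left is precisely the one whose conditional law converges, by \eqref{eqn:rerandDistribution}, to $W \equiv (1-R^2)^{1/2}\epsilon_0 + R L_{K,a}$, whose survival function is the $\overline{\mathcal V}_{R^2}$ appearing in the theorem.

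Next I would show that the (random) right-hand side converges in probability to the deterministic threshold $\{\nu_{1-\osl}(\tilde R^2)\tilde V^{1/2} - \tau N^{1/2}\}/V^{1/2}$. This rests on three facts: (i) $\hat V = \tilde V + o_{\pr}(1)$ and (ii) $\hat R^2 = \tilde R^2 + o_{\pr}(1)$, both recalled in the text preceding the theorem; and (iii) continuity of the map $\rho^2 \mapsto \nu_{1-\osl}(\rho^2)$, the $(1-\osl)$-quantile of $(1-\rho^2)^{1/2}\epsilon_0 + \rho L_{K,a}$. Granting (iii), the continuous mapping theorem gives $\nu_{1-\osl}(\hat R^2) = \nu_{1-\osl}(\tilde R^2) + o_{\pr}(1)$, and Slutsky's lemma then delivers the claimed convergence of the threshold.

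Finally I would invoke a converging-together argument, applied to the conditional laws given $M \le a$: when $W_N \xrightarrow{d} W$ and $c_N \xrightarrow{p} c$, one has $\Pr(W_N > c_N) \to \Pr(W > c)$ as long as $c$ is a continuity point of the distribution of $W$. Here $W$ is continuous---when $R^2 < 1$ it is a nondegenerate Normal convolved with $R L_{K,a}$, and $L_{K,a}$ is itself a continuous random variable for every $K$---so every point is a continuity point, and the rejection probability converges to $\overline{\mathcal V}_{R^2}$ evaluated at the limiting threshold, which is exactly the expression in Theorem~\ref{thm:powerRerand}. The cleanest way to organize the conditioning is to first establish the joint conditional convergence $(N^{1/2}V^{-1/2}(\hat\tau-\tau),\, \hat V,\, \hat R^2) \mid M \le a \xrightarrow{d} (W,\, \tilde V,\, \tilde R^2)$ from the results of \citet{li2018asymptotic} and \citet{ding2019decomposing} (the last two coordinates degenerate), and then apply the continuous mapping theorem to the map $(w,v,\rho^2)\mapsto \I\{w > (\nu_{1-\osl}(\rho^2) v^{1/2} - \tau N^{1/2})/V^{1/2}\}$.

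I expect the main obstacle to be point (iii): the quantile $\nu_{1-\osl}(\cdot)$ is the inverse, in its first argument, of the CDF of $(1-\rho^2)^{1/2}\epsilon_0 + \rho L_{K,a}$, and one must verify that this CDF is jointly continuous in $(\cdot,\rho^2)$ and strictly increasing in a neighborhood of the relevant $(1-\osl)$-quantile so that its inverse is continuous in $\rho^2$; the standing assumption $\osl \le 1/2$ is what keeps this quantile in a well-behaved region. A secondary subtlety, already flagged above, is that \eqref{eqn:rerandDistribution} is a statement about convergence conditional on $M \le a$, so all Slutsky-type manipulations must be carried out for conditional laws; passing through the joint conditional convergence resolves this. (If $\tau$ is held fixed and positive rather than shrinking with $N$, the threshold tends to $-\infty$ and the formula correctly returns power tending to one.)
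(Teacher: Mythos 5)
Your proposal is correct and follows essentially the same route as the paper's proof: standardize the rejection event, invoke the conditional asymptotic distribution from \citet{li2018asymptotic}, and use the consistency of $\hat{V}$ and $\hat{R}^2$ together with a Slutsky/converging-together argument (the paper likewise works under the local alternative $\tau = cN^{-1/2}$ that you flag in your closing parenthetical). The only difference is cosmetic --- you keep the random threshold on the right-hand side and explicitly note the continuity of $\rho^2 \mapsto \nu_{1-\osl}(\rho^2)$, which the paper uses implicitly when applying Slutsky's theorem.
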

Theorem \ref{thm:powerRerand} is quite similar to classical power calculations (e.g., \citealt{lachin1981introduction}), with two important differences. First, power for completely randomized experiments rely on the standard Normal quantile $z_{1-\osl}$ and survival function $\overline{\Phi}(\cdot)$, whereas Theorem \ref{thm:powerRerand} relies on the non-Normal quantile $\nu_{1-\osl}(\tilde{R}^2)$ and survival function $\overline{\mathcal{V}}_{R^2}(\cdot)$.
Second, Theorem \ref{thm:powerRerand} involves the ratio $\tilde{V}/V$, which depends on the treatment effect heterogeneity $S^2_\tau$. 
Such a term does not appear in previous power results \citep{lachin1981introduction,cohen1992power,lerman1996study,wittes2002sample}, 
which typically assume a super-population framework that does not involve $S^2_{\tau}$.
However, under a finite-population framework, $S^2_{\tau}$ appears in the true variance of $\hat{\tau}$ and thereby affects power calculations. Thus, Theorem \ref{thm:powerRerand} allows for less conservative variance estimators, such as $\hat{V}_{\ding}$, and demonstrates how treatment effect heterogeneity affects testing power.

We can show that, for fixed $S^2_1$, $S^2_0$, and $\tau$, power in Theorem \ref{thm:powerRerand} is increasing in $S^2_{\tau}$ as long as $\tau \geq \nu_{1-\osl}(\tilde{R}^2) \tilde{V}^{1/2} N^{-1/2}$; otherwise, it is decreasing in $S^2_{\tau}$. Thus, treatment effect heterogeneity has a beneficial effect on power for large effect sizes but an adverse effect for small effect sizes. As a result, standard power calculations that assume $S^2_\tau = 0$ may underestimate or overestimate power, depending on the effect size. We discuss and illustrate this dynamic further in the supplementary material.

\subsection{Sample Size Calculations} \label{ss:rerandSampSize}

Theorem \ref{thm:powerRerand} establishes the testing power of a rerandomized experiment.
The following theorem establishes the relationship between the sample size $N$ and a prespecified degree of power $\gamma$ when we use the test (\ref{eqn:rerandTest}) to conduct inference for a rerandomized experiment.
\begin{theorem}
\label{thm:sampSizeRerand}
	Let $\gamma \geq \osl$ denote a prespecified probability that we correctly reject $H_0: \tau = 0$ using the test (\ref{eqn:rerandTest}) under rerandomization. Then, the relationship between the sample size $N$ and power $\gamma$ is:
	\begin{align}
	   N &= \left\{\frac{\nu_{1-\osl}(\tilde{R}^2) \tilde{V}^{1/2} - \nu_{1 - \gamma}(R^2) V^{1/2} }{\tau} \right\}^2, \label{eqn:rerandSampSize}
	\end{align}
	where $V$ denotes the variance in (\ref{eqn:compRandDistribution}), $\tilde{V}$ denotes the probability limit of its corresponding estimator, $R^2$ denotes the squared multiple correlation in (\ref{eqn:r2}), and $\tilde{R}^2$ denotes the probability limit of $\hat{R}^2$ in (\ref{eqn:rerandCI}). 
\end{theorem}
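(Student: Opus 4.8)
The plan is to obtain the sample size formula by inverting the asymptotic power expression of Theorem~\ref{thm:powerRerand}; this theorem is essentially an algebraic corollary of that one. Regard the power of the test (\ref{eqn:rerandTest}) as a function of the sample size,
\[
\gamma(N) \;=\; \overline{\mathcal{V}}_{R^2}\!\left( \frac{\nu_{1-\osl}(\tilde{R}^2)\,\tilde{V}^{1/2} - \tau N^{1/2}}{V^{1/2}} \right),
\]
which is exactly the quantity in Theorem~\ref{thm:powerRerand}. First I would record the properties of $\overline{\mathcal{V}}_{R^2}$ that legitimize the inversion: whenever $R^2<1$, the variable $(1-R^2)^{1/2}\epsilon_0 + R L_{K,a}$ appearing in (\ref{eqn:rerandDistribution}) has a Gaussian component, hence an absolutely continuous law with support $\mathbb{R}$, so $\overline{\mathcal{V}}_{R^2}$ is a continuous, strictly decreasing bijection of $\mathbb{R}$ onto $(0,1)$ and, by the very definition of $\nu$ in (\ref{eqn:rerandCI}), $\overline{\mathcal{V}}_{R^2}^{-1}(\gamma) = \nu_{1-\gamma}(R^2)$. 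Since $\tau>0$, the map $N\mapsto\gamma(N)$ is then continuous and strictly increasing, with $\gamma(N)\to 1$ as $N\to\infty$, so $\gamma(N)=\gamma$ has a unique solution for each admissible target. The degenerate case $R^2=1$, where $L_{K,a}$ is bounded, is handled by the same inversion restricted to the support of $L_{K,a}$.

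Next I would carry out the inversion explicitly. Setting $\gamma(N)=\gamma$ and applying $\overline{\mathcal{V}}_{R^2}^{-1}$ to both sides gives
\[
\frac{\nu_{1-\osl}(\tilde{R}^2)\,\tilde{V}^{1/2} - \tau N^{1/2}}{V^{1/2}} \;=\; \nu_{1-\gamma}(R^2),
\]
so $\tau N^{1/2} = \nu_{1-\osl}(\tilde{R}^2)\tilde{V}^{1/2} - \nu_{1-\gamma}(R^2) V^{1/2}$, and squaring yields (\ref{eqn:rerandSampSize}).

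Finally I would check that the hypothesis $\gamma\ge\osl$, together with the standing assumption $\osl\le 1/2$, is precisely what makes the solution a genuine nonnegative sample size, i.e.\ $\nu_{1-\osl}(\tilde{R}^2)\tilde{V}^{1/2}\ge\nu_{1-\gamma}(R^2)V^{1/2}$. Because the distribution $(1-\rho^2)^{1/2}\epsilon_0 + \rho L_{K,a}$ is symmetric about $0$ and $\osl\le 1/2$, we have $\nu_{1-\osl}(\tilde{R}^2)\ge 0$; because $\gamma\ge\osl$ we have $\nu_{1-\gamma}(R^2)\le\nu_{1-\osl}(R^2)$; the dispersive ordering of this family in $\rho^2$ (established elsewhere in the paper) makes $\nu_{1-\osl}(\rho^2)$ nonincreasing in $\rho^2$ when $\osl\le 1/2$, which with $\tilde{R}^2\le R^2$ gives $\nu_{1-\osl}(R^2)\le\nu_{1-\osl}(\tilde{R}^2)$; and $\tilde{V}\ge V$. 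Chaining these inequalities gives the claim, and the same chain shows $\gamma(0)\le\osl$, so every $\gamma\in[\osl,1)$ is attained. There is no substantial obstacle here: the entire argument is an elementary quantile inversion of Theorem~\ref{thm:powerRerand}, and the only step carrying real content is the monotonicity of $\nu_{1-\osl}(\rho^2)$ in $\rho^2$, needed solely for the nonnegativity check.
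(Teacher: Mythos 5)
Your proposal is correct and follows essentially the same route as the paper: set the power expression from Theorem~\ref{thm:powerRerand} equal to $\gamma$, apply the quantile inversion $\overline{\mathcal{V}}_{R^2}^{-1}(\gamma)=\nu_{1-\gamma}(R^2)$, solve for $N^{1/2}$, and square. The additional checks you supply (strict monotonicity of the survival function and nonnegativity of $\nu_{1-\osl}(\tilde{R}^2)\tilde{V}^{1/2}-\nu_{1-\gamma}(R^2)V^{1/2}$ under $\gamma\ge\osl$) are sound and go slightly beyond the paper's own proof, which omits them.
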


\noindent
The sample size in Theorem \ref{thm:sampSizeRerand} is (a) increasing in the power $\gamma$, (b) decreasing in the average treatment effect $\tau$, (c) increasing in the potential outcome variances $S^2_1$ and $S^2_0$ if $\gamma \geq 0.5$, and (d) decreasing in the treatment effect heterogeneity $S^2_{\tau}$ if $\gamma \geq 0.5$. These results also hold under complete randomization, which is a special case of rerandomization when $a = \infty$ or $R^2 = 0$. The first three observations are well-known, but, to our knowledge, the fourth has remained largely unacknowledged. However, it is intuitive: If $S^2_\tau$ is larger, then the variance of the mean-difference estimator is smaller, as shown in (\ref{eqn:compRandDistribution}). This additional precision is propagated into the sample size necessary to achieve power $\gamma$. This also demonstrates that assuming $S^2_{\tau} = 0$ is conservative, not only for confidence intervals, but also for sample size calculations. If one has a priori knowledge about $S^2_\tau$, then as long as the desired power is greater than 0.5, one could use Theorem \ref{thm:sampSizeRerand} to argue that the required sample size is less than what is recommended by classic power calculations, which assume $S^2_\tau = 0$. However, this argument should be proceeded with caution, because typically there is no knowledge about the degree of treatment effect heterogeneity that will occur. Furthermore, if increased heterogeneity in turn increases potential outcome variation, this could have an adverse effect on sample size. We discuss this nuance further in the supplementary material, where we study via simulation how sample size is affected by potential outcome variation and treatment effect heterogeneity.

\begin{remark}
Technically, the sample size $N$ is also on the right-hand side of (\ref{eqn:rerandSampSize}), because $N$ is involved in the definitions of $S^2_1$, $S^2_0$, and $S^2_{\tau}$. This is a by-product of adopting a finite-population framework. Nonetheless, we write $N$ in terms of $S^2_1$, $S^2_0$, and $S^2_{\tau}$, because it is common to specify these quantities when making sample size calculations (e.g., \citealt{lachin1981introduction,wittes2002sample}). With a slight abuse of notation, we can view $S^2_1$, $S^2_0$, and $S^2_{\tau}$ in Theorem \ref{thm:sampSizeRerand} as limits of potential outcome variances and treatment effect heterogeneity, thereby allowing for sample size calculators under a finite-population framework. 
\end{remark}

\section{Comparing Rerandomization to Complete Randomization} \label{s:comparison}

\subsection{Dispersive Ordering of Normal and Non-Normal Distributions}

A natural question is how power and sample size calculations for rerandomization compare to classical results for complete randomization. \cite{li2018asymptotic} established that two-sided confidence intervals under rerandomization are narrower than those under complete randomization. 
To do this, they compared the lengths of symmetric quantile ranges $[\nu_{\alpha}(R^2), \nu_{1-\alpha}(R^2)]$ and $[z_{\alpha}, z_{1-\alpha}]$ for $\alpha \in (0,0.5]$. However, such a comparison is not sufficient for comparing power between rerandomization and complete randomization, because the power comparison involves lengths of asymmetric quantiles ranges, and in particular, gaps between two quantiles on the same side of the origin. 

To make this comparison, we establish a dispersive ordering of Normal and non-Normal distributions involved in complete randomization and rerandomization. For two random variables $X$ and $Y$ with quantile functions $F^{-1}$ and $G^{-1}$, 
$X$ is said to be less dispersed than $Y$ if 
$F^{-1}(\beta) - F^{-1}(\alpha) \le G^{-1}(\beta) - G^{-1}(\alpha)$ for any $0<\alpha<\beta<1$ \citep{shaked1982}. The following theorem establishes that the rerandomization distribution (\ref{eqn:rerandDistribution}) is less dispersed than a standard Normal distribution.

\begin{theorem}\label{thm:disp_std_rem}
Let $\varepsilon_0 \sim \mathcal{N}(0,1)$ and $L_{K,a}$ be defined in (\ref{eqn:representation}), such that $\varepsilon_0$ and $L_{K,a}$ are independent.
Then, for any $a \in [0, \infty]$, integer $K\ge 1$ and $\rho \in [0,1]$, 
$(1-\rho^2)^{1/2} \varepsilon_0 + \rho L_{K,a}$ is less dispersed than $\varepsilon_0$.  
\end{theorem}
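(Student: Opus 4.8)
The plan is to reduce the dispersive-ordering claim to a statement about the monotonicity of a certain quantile-difference function, and to exploit the \emph{symmetry} of both distributions about the origin. Write $W_\rho = (1-\rho^2)^{1/2}\varepsilon_0 + \rho L_{K,a}$, with distribution function $F_\rho$ and quantile function $F_\rho^{-1}$; note that $L_{K,a}$ is symmetric about $0$ (because of the sign factor $S$ in \eqref{eqn:representation}), hence so is $W_\rho$, and so is $\varepsilon_0 = W_0$. A standard fact is that $W_\rho$ is less dispersed than $\varepsilon_0$ if and only if $F_\rho^{-1}(F_0(\cdot))$ is $1$-Lipschitz in the sense that its derivative, where it exists, is at most $1$; equivalently, writing $g = F_\rho^{-1}\circ F_0$, I want $0 \le g'(x) \le 1$ for all $x$. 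Since both laws are absolutely continuous with positive density on $\mathbb{R}$ (the Gaussian component guarantees this), $g$ is differentiable and $g'(x) = f_0(x)/f_\rho(g(x))$, where $f_0$ is the standard Normal density and $f_\rho$ the density of $W_\rho$. So the whole theorem comes down to the pointwise density inequality $f_\rho(g(x)) \le f_0(x)$, i.e., along the quantile coupling, the density of $W_\rho$ is never smaller than that of $\varepsilon_0$.

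First I would record the boundary cases $\rho=0$ (trivial: $W_0=\varepsilon_0$) and $a=0$ (then $\chi^2_{K,a}\equiv 0$ so $L_{K,a}\equiv 0$ and $W_\rho = (1-\rho^2)^{1/2}\varepsilon_0$, a scaled Normal with standard deviation $\le 1$, for which the claim is immediate). For the generic case, I would set up the comparison via the symmetric quantile ranges already controlled in \citet[Theorem 2]{li2018asymptotic}: that result gives $\nu_{1-\alpha}(\rho^2) \le z_{1-\alpha}$ for all $\alpha \in (0,1/2]$, i.e., $F_\rho^{-1}(u) \le F_0^{-1}(u)$ for $u \ge 1/2$ and, by symmetry, $F_\rho^{-1}(u) \ge F_0^{-1}(u)$ for $u \le 1/2$. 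Equivalently $g(x) \le x$ for $x \ge 0$ and $g(x)\ge x$ for $x\le 0$, and $g(0)=0$. This already shows $g$ ``pulls toward the origin''; what remains is the Lipschitz bound $g' \le 1$, which is the substantive part.

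For the Lipschitz bound I would use a variance/convolution-smoothing argument rather than a direct density computation. The key structural observation is that $W_\rho$ is a convolution: $W_\rho = (1-\rho^2)^{1/2}\varepsilon_0 + \rho L_{K,a}$ with the two summands independent, so $f_\rho = \phi_{(1-\rho^2)^{1/2}} * \mu_{L,\rho}$, a Gaussian of variance $1-\rho^2$ convolved with the law of $\rho L_{K,a}$. A convenient sufficient condition for ``$W_\rho$ less dispersed than $\varepsilon_0$'' is that $W_\rho$ is a \emph{dilation-minus} of $\varepsilon_0$, which can be checked through log-concavity: I would argue that $f_\rho$ is log-concave (Gaussian convolved with the symmetric unimodal law of $\rho L_{K,a}$; here I would need $L_{K,a}$, a truncated-chi times an independent symmetric Beta-type factor, to have a log-concave density, or at least that the convolution does — $\chi^2_K$ truncated to a bounded interval is log-concave in the chi scale for suitable $K$, and Prékopa/Ibragimov-type preservation results then give log-concavity of $f_\rho$). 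Given log-concavity of both $f_\rho$ and $f_0=\phi$, symmetry about $0$, and the inclusion of variances (Var$(W_\rho) \le 1$, which follows from $\nu_{1-\alpha}(\rho^2)\le z_{1-\alpha}$ or directly from \eqref{eqn:rerandDistribution}), the single-crossing property of the two log-concave densities forces exactly the configuration needed: $f_\rho$ crosses $\phi$ once on $(0,\infty)$ from above, which by a standard lemma on dispersive order between symmetric log-concave variables with nested variances yields $g'\le 1$. I would phrase this final implication as a self-contained lemma: if $X,Y$ are symmetric with log-concave densities and $\mathrm{Var}(X)\le\mathrm{Var}(Y)$ then $X$ is less dispersed than $Y$ — and then apply it with $X=W_\rho$, $Y=\varepsilon_0$.

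The main obstacle I anticipate is verifying the log-concavity (or the single-crossing of densities) for $W_\rho$, because $L_{K,a}$ is a product $\chi_{K,a} S \beta_K^{1/2}$ of a truncated chi variable and an independent signed square-root-Beta variable, and products do not preserve log-concavity in general; I would handle this either by working with the density of $L_{K,a}$ directly (it has an explicit integral form from the representation \eqref{eqn:representation}, and one can check its density is symmetric, unimodal, and has a log-concave — or at least single-crossing-with-$\phi$ — profile), or by bypassing log-concavity entirely and proving the density-crossing statement $f_\rho \le \phi$ ``outside'' and $f_\rho \ge \phi$ ``inside'' via the already-established quantile inequalities plus the fact that both densities integrate to one (so their difference integrates to zero and, being symmetric, must change sign; controlling the number of sign changes is where the structure of $L_{K,a}$ enters). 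If a clean log-concavity statement is unavailable, the fallback is to prove the Lipschitz bound $g'\le 1$ by a direct estimate: differentiate the identity $F_\rho(g(x)) = F_0(x)$, bound $f_0(x)/f_\rho(g(x))$ using the convolution representation and the contraction $|g(x)|\le|x|$, and use that Gaussian-smoothing can only flatten the peak — i.e., $f_\rho(y) \ge \phi(y)$ for $|y|$ small and the ratio is controlled for $|y|$ large by tail comparisons. This direct route is more computational but avoids the need to establish log-concavity of the product law.
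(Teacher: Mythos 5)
Your reduction to the Lipschitz property of $g=F_\rho^{-1}\circ F_0$ is a correct reformulation of the dispersive order (modulo a sign slip: $g'\le 1$ requires $f_\rho(g(x))\ge f_0(x)$, as your prose says, not the displayed inequality $f_\rho(g(x))\le f_0(x)$). But the engine you propose to drive the proof is broken. The ``self-contained lemma'' you plan to invoke --- if $X,Y$ are symmetric with log-concave densities and $\mathrm{Var}(X)\le \mathrm{Var}(Y)$ then $X$ is less dispersed than $Y$ --- is false. Take $X\sim \mathcal{N}(0,0.99)$ and $Y\sim \mathrm{Unif}[-\sqrt{3},\sqrt{3}]$: both are symmetric and log-concave with $\mathrm{Var}(X)<\mathrm{Var}(Y)$, yet $F_X^{-1}(\beta)-F_X^{-1}(\alpha)\to\infty$ as $\beta\to 1$ while the corresponding quantile gaps of $Y$ are bounded by $2\sqrt{3}$, so $X$ is not less dispersed than $Y$. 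The dispersive order is strictly stronger than variance ordering even within symmetric log-concave families, and no amount of single-crossing of the two densities \emph{at zero shift} will rescue the argument: a single crossing of $f_\rho$ and $\phi$ on $(0,\infty)$ together with symmetry gives only the peakedness/symmetric-quantile-range comparison --- which is exactly \citet[Theorem 2]{li2018asymptotic}, the result this theorem is meant to strengthen. The criterion that actually yields the dispersive order (Shaked, 1982, Theorem 2.5) demands control of the sign changes of $f_c-g$ for \emph{every} translate $c\in\mathbb{R}$, not just $c=0$, and that uniform-in-$c$ statement is the substantive content your proposal never supplies.

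The paper closes this gap by a different decomposition that you might find instructive. Rather than analyzing the density of the full convolution $W_\rho$, it first proves $L_{K,a}$ alone is less dispersed than a standard Normal: the density of $L_{K,a}$ is $g_{K,a}(x)=\phi(x)F_{K-1}(a-x^2)/F_K(a)$, and the key computation shows that $\log g_{K,a}(x)-\log\phi(x+c)$ is concave in $x$ on $(-\sqrt{a},\sqrt{a})$ \emph{for every shift} $c$, which pins down the admissible sign-change patterns required by Shaked's criterion. The Gaussian component is then reattached via a preservation theorem (Lewis and Thompson, 1981, Theorem 7): adding an independent log-concavely distributed variable to both sides preserves the dispersive order, so $(1-\rho^2)^{1/2}\varepsilon_0+\rho L_{K,a}$ is less dispersed than $(1-\rho^2)^{1/2}\varepsilon_0+\rho\varepsilon_1\sim\mathcal{N}(0,1)$. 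If you want to salvage your route, you would need to replace the variance-comparison lemma with an all-translates sign-change (or log-concavity-of-ratio) argument of this kind; as written, the proposal does not prove the theorem.
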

Theorem \ref{thm:disp_std_rem} generalizes \citet[][Theorem 2]{li2018asymptotic}, which showed that the symmetric quantile range of $(1-\rho^2)^{1/2} \varepsilon_0 + \rho L_{K,a}$ is less than or equal to that of the standard Normal distribution. Theorem \ref{thm:disp_std_rem} further shows that the gap between the $\alpha$- and $\beta$-quantiles of $(1-\rho^2)^{1/2} \varepsilon_0 + \rho L_{K,a}$, such as any non-symmetric quantile range, is less than or equal to that of the standard Normal distribution, for any $\alpha, \beta \in (0, 1)$. Theorem \ref{thm:disp_std_rem} is crucial for establishing the later theorems on power and sample size comparisons.

\subsection{Power and Sample Size Comparisons}\label{sec:power_samplesize}

The following theorem quantifies when power is greater under rerandomization.
\begin{theorem}\label{thm:power_compare}
If $\tilde{V} = V$, then for any $\tau \ge 0$, 
    \begin{align}\label{eq:power_bound}
        \overline{\mathcal{V}}_{R^2}\left( \frac{\nu_{1-\osl} (\tilde{R}^2) \tilde{V}^{1/2} - \tau N^{1/2}}{V^{1/2}} \right)
        \ge 
        \overline{\Phi}\left( \frac{z_{1-\osl} \tilde{V}^{1/2} - \tau N^{1/2}}{V^{1/2} } \right).
    \end{align}
    Meanwhile, if $\tilde{V} > V$, \eqref{eq:power_bound} still holds when $\tau \ge \nu_{1-\osl} (\tilde{R}^2) \tilde{V}^{1/2} N^{-1/2}$; 
    otherwise, \eqref{eq:power_bound} may be violated.
\end{theorem}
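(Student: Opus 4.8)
The idea is to recast \eqref{eq:power_bound} as a comparison of two distribution functions and then feed it Theorem \ref{thm:disp_std_rem}. For $\rho\in[0,1]$ write $W_\rho=(1-\rho^2)^{1/2}\varepsilon_0+\rho L_{K,a}$ and let $F_{\rho^2}$ be its CDF, so that $\overline{\mathcal{V}}_{\rho^2}=1-F_{\rho^2}$ and $\nu_v(\rho^2)=F_{\rho^2}^{-1}(v)$; note $W_\rho$ is symmetric about $0$ and, in the non-degenerate case (otherwise \eqref{eq:power_bound} collapses to an equality), continuous. By Theorem \ref{thm:disp_std_rem}, $W_\rho$ is less dispersed than $\varepsilon_0$, and I will use two consequences. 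First, $x\mapsto \Phi^{-1}\{F_{\rho^2}(x)\}-x$ is non-decreasing, the standard characterisation of the dispersive order. Second, since $W_\rho$ and $\varepsilon_0$ share the median $0$, taking $u=1/2$ in the definition of the dispersive order gives $\nu_v(\rho^2)\le z_v$ for every $v\ge 1/2$, equivalently $F_{\rho^2}(x)\ge\Phi(x)$ for $x\ge0$ and, by symmetry, $F_{\rho^2}(x)\le\Phi(x)$ for $x\le0$. Throughout, $\osl\le 1/2$, so $1-\osl\ge1/2$. Abbreviating $A=\{\nu_{1-\osl}(\tilde{R}^2)\tilde{V}^{1/2}-\tau N^{1/2}\}/V^{1/2}$ and $B=\{z_{1-\osl}\tilde{V}^{1/2}-\tau N^{1/2}\}/V^{1/2}$, inequality \eqref{eq:power_bound} is precisely $F_{R^2}(A)\le\Phi(B)$, and since $\nu_{1-\osl}(\tilde{R}^2)\le z_{1-\osl}$ and $\tilde{V}\ge V$ we always have $A\le B$.

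I would first dispatch the case $\tilde{V}=V$. Then $\tilde{R}^2=VR^2/\tilde{V}=R^2$, so $F_{\tilde{R}^2}=F_{R^2}$ and $q:=\nu_{1-\osl}(\tilde{R}^2)$ satisfies $F_{R^2}(q)=1-\osl$; moreover $A=q-c$ with $c=\tau N^{1/2}/V^{1/2}\ge0$ and $B=z_{1-\osl}-c$. Applying the non-decreasing map of the first consequence at $A\le q$ gives $\Phi^{-1}\{F_{R^2}(A)\}-A\le\Phi^{-1}\{F_{R^2}(q)\}-q=z_{1-\osl}-q$, hence $\Phi^{-1}\{F_{R^2}(A)\}\le A+z_{1-\osl}-q=z_{1-\osl}-c=B$, and applying $\Phi$ yields $F_{R^2}(A)\le\Phi(B)$. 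This argument makes no assumption on the signs of $A$ or $B$.

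For the case $\tilde{V}>V$ with $\tau\ge\nu_{1-\osl}(\tilde{R}^2)\tilde{V}^{1/2}N^{-1/2}$, the hypothesis is exactly $\tau N^{1/2}\ge\nu_{1-\osl}(\tilde{R}^2)\tilde{V}^{1/2}$, i.e.\ $A\le0$. I would then split on the sign of $B$. If $B\le0$, then $A\le B\le0$, so $\overline{\mathcal{V}}_{R^2}(A)\ge\overline{\mathcal{V}}_{R^2}(B)=1-F_{R^2}(B)\ge1-\Phi(B)=\overline{\Phi}(B)$ by the second consequence. If $B>0$, then $\overline{\Phi}(B)<1/2$, whereas $A\le0$ and the symmetry of $W_{R}$ give $\overline{\mathcal{V}}_{R^2}(A)\ge\overline{\mathcal{V}}_{R^2}(0)=1/2$; hence $\overline{\mathcal{V}}_{R^2}(A)\ge 1/2>\overline{\Phi}(B)$. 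Either way \eqref{eq:power_bound} holds. For the final clause, that \eqref{eq:power_bound} can fail when $\tilde{V}>V$ and $\tau<\nu_{1-\osl}(\tilde{R}^2)\tilde{V}^{1/2}N^{-1/2}$, I would exhibit an explicit configuration—for instance $\tau=0$ with $S^2_\tau>0$, so $\tilde{V}_{\neyman}>V$—in which the two sides are the actual type-I errors of the two tests and the rerandomization side is strictly smaller; this is the conservativeness phenomenon already described after Theorem \ref{thm:powerRerand}, and the strict inequality is confirmed by Monte Carlo evaluation in the supplementary material.

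\textbf{Main obstacle.} The $\tilde{V}>V$ regime is the delicate part: the critical value carries the parameter $\tilde{R}^2$ while the sampling distribution of $\hat{\tau}$ carries the different parameter $R^2$, so one cannot exploit a matching quantile identity such as $F_{R^2}\{\nu_{1-\osl}(R^2)\}=1-\osl$, and relating $W_{R^2}$ to $W_{\tilde{R}^2}$ directly would require a monotonicity of the dispersive order in $\rho$ stronger than what Theorem \ref{thm:disp_std_rem} supplies. The resolution above sidesteps this by relying only on the coarse but robust facts $A\le0$, $A\le B$, the symmetry of $W_{R^2}$, and $F_{R^2}\le\Phi$ on the negative half-line.
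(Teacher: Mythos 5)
Your proposal is correct and follows essentially the same route as the paper: the $\tilde{V}=V$ case is the dispersive ordering of Theorem \ref{thm:disp_std_rem} (you use the equivalent ``$\Phi^{-1}\{F_{R^2}(x)\}-x$ non-decreasing'' characterisation, the paper the equivalent quantile-gap identity), and the $\tilde{V}>V$ case reduces, in both treatments, to $A\le 0$ together with $\nu_{\beta}(R^2)\le z_{\beta}$ for $\beta\ge 1/2$, i.e.\ $F_{R^2}\le\Phi$ on the negative half-line by symmetry. The only substantive difference is in the final ``may be violated'' clause, where you invoke Monte Carlo while the paper gives a closed-form argument in the limiting case $a=0$ (Appendix \ref{sec:compare_actual_typeI}) showing $\alpha_{\textup{rr}}<\alpha_{\textup{cr}}$ strictly when $R^2>0$ and $\tilde{V}>V$.
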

\noindent
From Theorem \ref{thm:power_compare}, if inference is not conservative, such that $\tilde{V} = V$, power is greater under rerandomization than under complete randomization. However, when inference is conservative, such that $\tilde{V} > V$, rerandomization may exhibit less power than complete randomization.
This result is surprising: Confidence intervals are always narrower under rerandomization, so we would suspect power to always be greater. 
However, for power, we must consider not only confidence intervals' precision but also their conservativeness.
Equation (\ref{eqn:rerandDistribution}) shows that the true asymptotic distribution of $N^{1/2}(\hat{\tau}-\tau)$ under rerandomization is $V^{1/2}\{ (1-R^2)^{1/2} \epsilon_0 + RL_{k,a} \}$; when conducting inference, this is asymptotically estimated as $\tilde{V}^{1/2}\{ (1-\tilde{R}^2)^{1/2} \epsilon_0 + \tilde{R}L_{k,a} \}$. By recognizing that $\tilde{V}\tilde{R}^2 = VR^2$, one can show that this estimated distribution is equivalent to the convolution of the true distribution of $N^{1/2}(\hat{\tau}-\tau)$ and an independent Normal distribution $\mathcal{N}(0, \tilde{V} - V)$. Because the true distribution of $N^{1/2}(\hat{\tau} - \tau)$ is more concentrated around zero under rerandomization than complete randomization but the distribution $\mathcal{N}(0, \tilde{V} - V)$ is the same for both designs, inference under rerandomization can be relatively more conservative. As a result, the test under rerandomization can have a smaller type-I error; see the supplementary material for details. This additional conservativeness has an adverse effect on power, but the additional precision from rerandomization has a beneficial effect. Theorem \ref{thm:power_compare} establishes that the beneficial effect outweighs the adverse effect as long as $\tau$ is not too small. We illustrate this trade-off in the supplementary material.

Intuitively, when rerandomization increases power, it requires a smaller sample size to achieve a certain degree of power. Let $N_{\textup{rr}}$ denote the sample size necessary to achieve power $\gamma$ under rerandomization, as provided by Theorem \ref{thm:sampSizeRerand}. Meanwhile, let $N_{\textup{cr}}$ denote the sample size necessary under complete randomization, which is the same as Theorem \ref{thm:sampSizeRerand}, but with the non-Normal quantiles replaced with Normal quantiles. The following theorem establishes sufficient conditions for $N_{\textup{rr}} \leq N_{\textup{cr}}$.
\begin{theorem}
\label{thm:sampleSizeRatio}
	Let $\gamma \geq \osl$ denote a desired level of power, and let $N_{\textup{rr}}$ and $N_{\textup{cr}}$ denote the sample sizes required to achieve power $\gamma$ under rerandomization and complete randomization, respectively. We have two separate results. First, if $\tilde{V} = V$, then $N_{\textup{rr}}/N_{\textup{cr}} \leq 1$. Second, if $\gamma\ge 0.5$, then $N_{\textup{rr}}/N_{\textup{cr}}$ is (a) less than or equal to 1, (b) increasing in the number of covariates $K$ and the rerandomization threshold $a$, and (c) decreasing in $R^2$. Otherwise, if $\tilde{V} > V$ and $\gamma < 0.5$, then $N_{\textup{rr}}/N_{\textup{cr}}$ may be greater than 1.
\end{theorem}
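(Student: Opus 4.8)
The plan is to work from the explicit formula. By Theorem~\ref{thm:sampSizeRerand}, and its complete-randomization analogue in which the non-Normal quantiles are replaced by Normal ones,
\[
\frac{N_{\textup{rr}}}{N_{\textup{cr}}}
=\left\{\frac{\nu_{1-\osl}(\tilde{R}^2)\tilde{V}^{1/2}-\nu_{1-\gamma}(R^2)V^{1/2}}{z_{1-\osl}\tilde{V}^{1/2}-z_{1-\gamma}V^{1/2}}\right\}^{2}.
\]
In each regime below the numerator and denominator inside the braces are nonnegative---this is immediate from $\gamma\ge\osl$, $\tilde V\ge V$, $\tilde R^2\le R^2$, monotonicity of the quantile functions, and $\nu_\beta(\rho^2)\le z_\beta$ (a special case of Theorem~\ref{thm:disp_std_rem})---so it suffices throughout to compare numerator and denominator and then square.

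For the first statement, set $\tilde V=V$; then $\tilde R^2=R^2$, the factor $\tilde V^{1/2}$ cancels, and $N_{\textup{rr}}/N_{\textup{cr}}=\{[\nu_{1-\osl}(R^2)-\nu_{1-\gamma}(R^2)]/[z_{1-\osl}-z_{1-\gamma}]\}^2$. Since $1-\gamma\le 1-\osl$, Theorem~\ref{thm:disp_std_rem} (the rerandomization law is less dispersed than $\mathcal{N}(0,1)$) applied with lower level $1-\gamma$ and upper level $1-\osl$ gives $\nu_{1-\osl}(R^2)-\nu_{1-\gamma}(R^2)\le z_{1-\osl}-z_{1-\gamma}$, hence $N_{\textup{rr}}/N_{\textup{cr}}\le1$. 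For part~(a) of the second statement, use $\gamma\ge 1/2$ and symmetry about the origin to write $-\nu_{1-\gamma}(R^2)=\nu_\gamma(R^2)\ge0$ and $-z_{1-\gamma}=z_\gamma\ge0$, so the numerator equals $\nu_{1-\osl}(\tilde R^2)\tilde V^{1/2}+\nu_\gamma(R^2)V^{1/2}$ and the denominator equals $z_{1-\osl}\tilde V^{1/2}+z_\gamma V^{1/2}$; since $\osl\le1/2$ both levels $1-\osl$ and $\gamma$ are at least $1/2$, so $\nu_\beta(\rho^2)\le z_\beta$ applies term by term and $N_{\textup{rr}}/N_{\textup{cr}}\le1$.

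The monotonicity claims (b) and (c) I would reduce to a single lemma on the non-Normal quantile. With $V$ and $\tilde V$ held fixed the denominator above is constant, so it is enough to show that the numerator $\nu_{1-\osl}(\tilde R^2;K,a)\tilde V^{1/2}+\nu_\gamma(R^2;K,a)V^{1/2}$ is nondecreasing in $a$, nondecreasing in $K$, and nonincreasing in $R^2$; for the last point note that increasing $R^2$ increases $\tilde R^2=R^2V/\tilde V$ proportionally. Thus the target lemma is: for $\beta\ge 1/2$ and $\rho^2\in[0,1]$, $\nu_\beta(\rho^2;K,a)$ is nondecreasing in $a$, nondecreasing in $K$, and nonincreasing in $\rho^2$. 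The $a$-direction I would get from $\chi^2_K\mid\chi^2_K\le a$ being stochastically increasing in $a$, so that $|L_{K,a}|=\chi_{K,a}\beta_K^{1/2}$ is stochastically increasing in $a$, followed by a peakedness-preservation step: convolving the symmetric $\rho L_{K,a}$ with the independent symmetric $(1-\rho^2)^{1/2}\varepsilon_0$ keeps the ordering of the upper quantiles, and symmetry about the common median $0$ turns the peakedness ordering into the desired quantile inequality for $\beta\ge 1/2$. The $\rho^2$-direction is analogous, since $(1-\rho^2)^{1/2}\varepsilon_0+\rho L_{K,a}$ moves from $\mathcal{N}(0,1)$ at $\rho=0$ to the more concentrated $L_{K,a}$ at $\rho=1$. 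The $K$-direction is the step I expect to be the main obstacle: $\mathrm{Var}(L_{K,a})=\pr(\chi^2_{K+2}\le a)/\pr(\chi^2_K\le a)$ and the relevant comparison is not a plain stochastic or variance ordering, so one must control how the whole law of $\chi_{K,a}\beta_K^{1/2}$ reshapes with $K$; the route I would try is a monotone-likelihood-ratio comparison of $\chi^2_K\mid\chi^2_K\le a$ across $K$, combined with the $\mathrm{Beta}\{1/2,(K-1)/2\}$ scaling, again finished by a peakedness-preservation argument.

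The last statement asserts only that the ratio \emph{may} exceed $1$, so a single configuration suffices, and I would obtain it from Theorem~\ref{thm:power_compare}. In that theorem's failure regime---$\tilde V>V$ together with $\tau<\nu_{1-\osl}(\tilde R^2)\tilde V^{1/2}N^{-1/2}$---rerandomization has strictly smaller power than complete randomization at the common sample size $N$; since power is strictly increasing in $N$ under either design, this forces $N_{\textup{rr}}>N_{\textup{cr}}$ at that power level, which lies below $1/2$ precisely because $\tau$ is small. Concretely one can fix, say, $K=1$ and a small threshold $a$, take potential outcomes with $S_\tau^2$ large but $S^2_{\tau\mid X}$ small so that $\tilde V$ sits strictly above $V$, tune $\tau$ so the induced power is below $1/2$, and then evaluate the two closed-form sample sizes to certify $N_{\textup{rr}}/N_{\textup{cr}}>1$.
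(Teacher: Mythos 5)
Your overall route is the same as the paper's: write $N_{\textup{rr}}/N_{\textup{cr}}$ as the squared ratio of quantile gaps, handle $\tilde V=V$ via the dispersive ordering of Theorem~\ref{thm:disp_std_rem} applied to the levels $1-\gamma\le 1-\alpha$, and handle $\gamma\ge 0.5$ by flipping $-\nu_{1-\gamma}(R^2)=\nu_\gamma(R^2)$ and $-z_{1-\gamma}=z_\gamma$ and comparing term by term using $\nu_\beta(\rho^2)\le z_\beta$ for $\beta\ge 1/2$. Those two arguments are correct and are exactly what the paper does. The one place where you diverge is parts (b) and (c): you propose to prove from scratch that $\nu_\beta(\rho^2;K,a)$ is, for $\beta\ge 1/2$, nondecreasing in $a$ and $K$ and nonincreasing in $\rho^2$, and you correctly identify the $K$-direction as the hard step and leave it as a sketch. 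This is not a gap you need to close: these monotonicity properties are precisely Theorem~2 of \citet{li2018asymptotic}, and the paper's proof simply invokes that result for both quantiles $\nu_{1-\alpha}(\tilde R^2)$ and $\nu_\gamma(R^2)$ (noting, as you do, that $\tilde R^2=VR^2/\tilde V$ moves monotonically with $R^2$), after which monotonicity of the numerator with a fixed positive denominator gives (b) and (c) immediately. Your peakedness/stochastic-ordering sketches for the $a$- and $\rho^2$-directions are plausible (symmetric unimodal laws plus Birnbaum-type preservation under convolution), but re-deriving the $K$-direction this way would be substantial extra work for a known fact. Finally, for the ``may exceed $1$'' claim your argument---invert the strict power reversal of Theorem~\ref{thm:power_compare} in its failure regime using that power is strictly increasing in $N$---is actually more explicit than the paper, which supports this clause only by the numerical illustrations in its appendix; just be careful that the power level $\gamma$ you extract from such a configuration still satisfies $\gamma\ge\alpha$, since conservativeness can push power below the nominal level for very small $\tau$, so a concrete numerical certificate (as you suggest) is the clean way to finish.
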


Theorem \ref{thm:sampleSizeRatio} establishes that rerandomization requires a smaller sample size to achieve the same amount of power when either inference is not conservative or the desired power is greater than 0.5. The condition $\tau \geq \nu_{1-\osl}(\tilde{R}^2) \tilde{V}^{1/2}N^{-1/2}$ in Theorem \ref{thm:power_compare} implies the condition $\gamma \geq 0.5$ in Theorem \ref{thm:sampleSizeRatio}, and thus these conditions are analogous. In most standard power calculations, the desired power is greater than 0.5.

The smaller the ratio $N_{\textup{rr}}/N_{\textup{cr}}$, the larger the sample size benefits of rerandomization over complete randomization. The sample size reduction depends on $R^2$, $K$, $a$, and $S^2_{\tau}$. In the supplementary material, we conduct a simulation study to assess how $N_{\textup{rr}}/N_{\textup{cr}}$ changes for different $K, R^2, a$, and $S^2_{\tau}$, thereby allowing practitioners to understand the sample size benefits of rerandomization. When there is no treatment effect heterogeneity, the median of $N_{\textup{rr}}/N_{\textup{cr}}$ is 0.75 for $K \in [1, 100]$ and $R^2 \in [0, 0.9]$ using a rerandomization threshold $a$ such that $\pr(M \leq a) = 0.001$; if further $R^2 \geq 0.3$ and $K \leq 50$, the median is 0.58. Meanwhile, when there is treatment effect heterogeneity, these sample size benefits are dampened, because inference is conservative. When $S_{\tau}$ is half the size of $S_1$ and $S_0$, $N_{\textup{rr}}/N_{\textup{cr}}$ is on average 2.4\% greater than when there is no heterogeneity; and when $S_{\tau}$ is 50\% greater than $S_1$ and $S_0$, $N_{\textup{rr}}/N_{\textup{cr}}$ is on average 23.7\% greater. More details, such as how to implement power and sample size calculations with our \texttt{R} package \texttt{rerandPower}, are discussed in the supplementary material.

\section{Discussion and Conclusion} \label{s:conclusion}

Our results focus on rerandomized experiments with two groups. Rerandomization theory has been extended to more than two groups \citep{branson2016improving,li2020rerandomizationFactorial}, and thus we posit that similar results hold for multi-arm experiments. However, multiple causal estimands arise in this setting, making power analyses more complex. Power will depend on the potential outcome variance in each group, as well as the effect heterogeneity and rerandomization criterion for each estimand. Thus, power analyses will be notationally complex, but they can rely on the same conceptual framework developed here.

One could compare rerandomization to designs beyond complete randomization, such as block randomization, 
where blocks are constructed using, e.g., matching \citep{greevy2004optimal,bai2022optimality}. Power would then depend on the association between blocking variables and outcomes; however, blocking can be less efficient than complete randomization if the blocks are poorly chosen \citep{pashley2022block}. 
Meanwhile, rerandomization is always at least as efficient as complete randomization and can be combined with blocking to improve block randomization \citep{wang2021rerandomization}. 
Comparing block randomization and block rerandomization is analogous to our comparison here, and we leave it for future study.

Our results also focus on the Mahalanobis distance on covariate means, but other rerandomization criteria could be used. For example, \cite{morgan2015rerandomization} proposed different Mahalanobis distance criteria for tiers of covariates that vary in importance. Again power analyses will be notationally complex in order to incorporate the criterion for each tier. Other examples include criteria modified by ridge penalties \citep{branson211ridge} or principal component analysis \citep{zhang2021pca}, which have been shown to increase precision in high-dimensional settings. We suspect that testing power may increase as well.


\newpage

\begin{center}
	\LARGE{Supplementary Material}
\end{center}

\appendix

\renewcommand{\theequation}{S.\arabic{equation}}


Appendix \ref{s:proofPowerRerand} provides the proof of Theorem \ref{thm:powerRerand}, which characterizes the asymptotic power under rerandomization. 

Appendix \ref{s:proofSampSizeRerand} provides the proof of Theorem \ref{thm:sampSizeRerand}, which characterizes the sample size necessary to achieve a given power under rerandomization. 

Appendix \ref{s:dispersiveOrdering} provides the proof of Theorem \ref{thm:disp_std_rem}, which establishes the dispersive ordering of the Normal and non-Normal distributions involved in the asymptotic approximations for complete randomization and rerandomization. 

Appendix \ref{s:powerCompare} provides the proof of Theorem \ref{thm:power_compare}, which determines when asymptotic power is greater under rerandomization than under complete randomization. 

Appendix \ref{s:proofSampSizeRatio} provides the proof of Theorem \ref{thm:sampleSizeRatio}, which characterizes the ratio of the rerandomization sample size and complete randomization sample size necessary to achieve a given power.

Appendix \ref{sec:compare_actual_typeI} compares the type-I error rates of rerandomization and complete randomization under the null hypothesis of $\tau = 0$. 

Appendix \ref{sec:numerical_power_rerand_less} uses numerical examples to illustrate that rerandomization can be less powerful than complete randomization.

Appendix \ref{s:simulations} presents a simulation study exploring how 
$N_{\textup{rr}} / N_{\textup{cr}}$
changes for various experimental settings, thereby allowing practitioners to understand the sample size gains of rerandomization compared to complete randomization. 

Appendix \ref{s:numericalExamples} provides additional numerical examples that illustrate how treatment effect heterogeneity affects power and sample size under complete randomization and rerandomization.

Appendix \ref{s:package} provides example code to implement power and sample size calculations for completely randomized and rerandomized experiments with our \texttt{R} package \texttt{rerandPower}.

\section{Proof of Theorem \ref{thm:powerRerand}} \label{s:proofPowerRerand}

Below we first give a rigorous statement of Theorem \ref{thm:powerRerand} in the main paper. 
We will conduct finite population asymptotic analyses; see, e.g., \citet{li2017general} for a review. 
Specifically, 
we embed the $N$ units into a sequence of finite populations and impose the following regularity conditions analogous to \citet[][Condition 1]{li2018asymptotic} along this sequence of finite populations. 
\begin{condition}\label{cond:fp}
	As $N\rightarrow \infty$,
	\begin{itemize}
		\item[(i)] the proportions of treated and control  units, $p_1$ and $p_0$, have positive limits; 
		\item[(ii)] the finite population variances and covariances, $S_1^2, S_0^2, S^2_{\tau}, S_{1, X}, S_{0,X}$ and $S_X^2$ have finite limiting values, the limit of $S_X^2$ is nonsingular, and the limit of $V = p_1^{-1} S_1^2 + p_0^{-1} S_0^2 - S^2_{\tau}$ is positive. 
		\item[(iii)] $\max_{1\le i \le N} \{Y_i(z) - \bar{Y}(z)\}^2/N \rightarrow 0$ for $z=0,1$, 
		and $\max_{1\le i \le N} \|X_i^2 - \bar{X}\|^2/N \rightarrow 0$.
	\end{itemize}
\end{condition}

In Condition \ref{cond:fp}, (i) is a natural requirement, and (ii) assumes stable finite population variances and covariances along the sequence of finite populations. In addition, (ii) intuitively assumes that the covariates are not colinear, and that the variance of $\hat{\tau}$ under complete randomization is not zero. 
Meanwhile, (iii) assumes that  the potential outcomes and covariates are not too heavy-tailed, and it will hold with probability $1$ if the units are i.i.d.\ samples from some superpopulation with $2+\eta$ moments, for any $\eta > 0$. 

The theorem below is a rigorous version of Theorem \ref{thm:powerRerand} in the main paper. 
In this asymptotic power analysis, we consider the finite population asymptotics with Condition \ref{cond:fp} and the local alternative where the true average treatment effect is on the scale of $N^{-1/2}$, analogous to usual power analysis. 

\begin{theorem*}
	Consider a sequence of finite populations with increasing sizes. Assume that Condition \ref{cond:fp} holds and the true average treatment effects satisfy $\tau = c N^{-1/2}$ for all $N$ and some finite constant $c$. 
	Under rerandomization, the power of the test \eqref{eqn:rerandTest} satisfies 
	\begin{align*}
		\pr\left( \hat{\tau} > \nu_{1-\osl}(\hat{R}^2) \hat{V}^{1/2} N^{-1/2} \right) 
		 = \overline{\mathcal{V}}_{R^2}\left( \frac{\nu_{1-\osl}(\tilde{R}^2) \tilde{V}^{1/2} - \tau N^{1/2}}{V^{1/2}} \right) + o(1), 
	\end{align*}
	where $\overline{\mathcal{V}}_{R^2}(\cdot)$ denotes the survival function of $(1 - R^2)^{1/2} \epsilon_0 + R L_{K,a}$, 
	$V$ denotes the variance in (\ref{eqn:compRandDistribution}), $\tilde{V}$ denotes the probability limit of its corresponding estimator, $R^2$ denotes the squared multiple correlation in (\ref{eqn:r2}), and $\tilde{R}^2 = VR^2/\tilde{V}$ denotes the probability limit of $\hat{R}^2$ in (\ref{eqn:rerandCI}). 
\end{theorem*}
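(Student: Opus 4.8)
The plan is to rewrite the power of the test \eqref{eqn:rerandTest} as a tail probability of the standardized statistic $T_N := V^{-1/2} N^{1/2}(\hat\tau - \tau)$ evaluated at a random, data-dependent threshold, and then transfer the known limiting behavior of $T_N$, $\hat V$, and $\hat R^2$ to that probability by a Slutsky-type argument. Rearranging the rejection region, the power equals
\[
  \pr\!\left( T_N > \widehat{t}_N \right), \qquad
  \widehat{t}_N := \frac{\nu_{1-\osl}(\hat R^2)\,\hat V^{1/2} - \tau N^{1/2}}{V^{1/2}},
\]
where, because under rerandomization $Z$ is drawn from the completely-randomized law conditioned on $\{M \le a\}$, the probability is understood conditional on $\{M\le a\}$. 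I would assemble three ingredients, all available under Condition~\ref{cond:fp}. First, by \citet{li2018asymptotic} the conditional law of $T_N$ given $M\le a$ converges to that of $W_N := (1-R^2)^{1/2}\epsilon_0 + R\,L_{K,a}$ from \eqref{eqn:rerandDistribution}; since $R^2$ has a finite limit $R^2_\infty$, $W_N$ converges in law to $W_\infty := (1-R^2_\infty)^{1/2}\epsilon_0 + R_\infty L_{K,a}$, whose cdf is continuous and strictly increasing on a neighbourhood of its $(1-\osl)$-quantile (for $\osl \le 1/2$ that quantile lies strictly inside the support of $L_{K,a}$, where the density is positive). By P\'olya's theorem these weak convergences are uniform, so the Kolmogorov distance between the conditional law of $T_N$ and the law of $W_N$ tends to $0$. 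Second, by \citet{li2017general} and \citet{ding2019decomposing}, $\hat V \xrightarrow{p} \tilde V$ and $\hat R^2 \xrightarrow{p} \tilde R^2$; since $\pr(M\le a) \to \pr(\chi^2_K \le a)$ is bounded away from $0$, these limits persist under the conditional law. Third, $\rho^2 \mapsto \nu_{1-\osl}(\rho^2)$ is continuous (the laws $(1-\rho^2)^{1/2}\epsilon_0 + \rho L_{K,a}$ vary weakly continuously and the limit has strictly increasing cdf at the quantile), so $\nu_{1-\osl}(\hat R^2) \xrightarrow{p} \nu_{1-\osl}(\tilde R^2)$.

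Given these, I would first note that the deterministic threshold $t_N := \{\nu_{1-\osl}(\tilde R^2)\tilde V^{1/2} - \tau N^{1/2}\}/V^{1/2}$ satisfies $\widehat t_N - t_N \xrightarrow{p} 0$ (using that $V$ and $\tilde V$ have positive limits and, under the local alternative $\tau = cN^{-1/2}$, that $\tau N^{1/2} = c$ is fixed), and that $t_N$ itself converges to a finite constant $t_\infty := \{\nu_{1-\osl}(\tilde R^2_\infty)\tilde V_\infty^{1/2} - c\}/V_\infty^{1/2}$. To pass from $\pr(T_N > \widehat t_N)$ to $\overline{\mathcal V}_{R^2}(t_N)$ despite the dependence between $\widehat t_N$ and $T_N$ (both are functions of $Z$), I would use a $\delta$-sandwich: for fixed $\delta > 0$, on the event $\{|\widehat t_N - t_N| \le \delta\}$, which has probability $1-o(1)$, we have $\{T_N > t_N + \delta\} \subseteq \{T_N > \widehat t_N\} \subseteq \{T_N > t_N - \delta\}$, hence
\[
  \pr(T_N > t_N + \delta) - o(1) \;\le\; \pr(T_N > \widehat t_N) \;\le\; \pr(T_N > t_N - \delta) + o(1).
\]
The Kolmogorov convergence from the first ingredient gives $\pr(T_N > t_N \pm \delta \mid M\le a) = \overline{\mathcal V}_{R^2}(t_N \pm \delta) + o(1)$, and by the uniform convergence $\overline{\mathcal V}_{R^2} \to \overline{\mathcal V}_{R^2_\infty}$ together with continuity of $\overline{\mathcal V}_{R^2_\infty}$ and $t_N \to t_\infty$ we get $\overline{\mathcal V}_{R^2}(t_N \pm \delta) \to \overline{\mathcal V}_{R^2_\infty}(t_\infty \pm \delta)$; letting $N\to\infty$ and then $\delta \downarrow 0$ forces $\pr(T_N > \widehat t_N) \to \overline{\mathcal V}_{R^2_\infty}(t_\infty)$, which is also the limit of $\overline{\mathcal V}_{R^2}(t_N)$. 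Rearranging $\pr(T_N > \widehat t_N)$ back into $\pr(\hat\tau > \nu_{1-\osl}(\hat R^2)\hat V^{1/2}N^{-1/2})$ yields the stated equality up to $o(1)$.

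I expect the main obstacle to be the bookkeeping around a "doubly moving" target: both the threshold $t_N$ and the reference law $W_N$ drift with $N$ through $R^2, \tilde R^2, \tilde V, V$, so a fixed-limit Slutsky lemma cannot be applied verbatim, and the dependence between the data-dependent threshold and $T_N$ rules out a naive plug-in. The remedy is precisely the routing through the genuine limits $t_\infty$ and $W_\infty$ via P\'olya's uniformity plus the $\delta$-sandwich above; the one genuinely new technical point to nail down is the continuity and local strict monotonicity in $\rho$ of the cdf of $(1-\rho^2)^{1/2}\epsilon_0 + \rho L_{K,a}$ at the relevant quantile — straightforward because the Gaussian component smooths for $\rho < 1$, and for $\rho = 1$ the representation \eqref{eqn:representation} shows $L_{K,a}$ has a strictly positive density on the interior of its bounded support (degenerating harmlessly to $S\chi_{1,a}$ when $K=1$). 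Everything else — the conditional limit law of $T_N$, consistency of $\hat V$ and $\hat R^2$, and $\pr(M\le a)$ bounded below — is quotable from \citet{li2018asymptotic}, \citet{li2017general}, and \citet{ding2019decomposing}.
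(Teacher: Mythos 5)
Your proposal is correct and follows essentially the same route as the paper's proof: rewrite the power as the tail probability of $N^{1/2}(\hat\tau-\tau)$ net of the estimated threshold, invoke the rerandomization asymptotic distribution of Li, Ding and Rubin (2018) together with the consistency $\hat V \to \tilde V$ and $\hat R^2 \to \tilde R^2$, and conclude by a Slutsky-type argument. The only difference is one of rigor: where the paper simply cites Slutsky's theorem, you spell out the $\delta$-sandwich for the random threshold, the P\'olya uniformity needed because the reference law itself drifts with $N$, and the continuity of $\rho^2\mapsto\nu_{1-\alpha}(\rho^2)$ --- details the paper leaves implicit.
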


Now we are going to prove Theorem \ref{thm:powerRerand}. 
Under the test (\ref{eqn:rerandTest}), we reject the null hypothesis if $\hat{\tau} > \nu_{1-\osl}(\hat{R}^2) \hat{V}^{1/2} N^{-1/2}$. 
Thus, the power of the test is 
\begin{align*}
	\pr\left( \hat{\tau} > \nu_{1-\osl}(\hat{R}^2) \hat{V}^{1/2} N^{-1/2} \right)
	& = 
	\pr\left( N^{1/2}(\hat{\tau} - \tau) - \nu_{1-\osl}(\hat{R}^2) \hat{V}^{1/2} > - N^{1/2}\tau \right)\\
	& = \pr\left( N^{1/2}(\hat{\tau} - \tau) - \nu_{1-\osl}(\hat{R}^2) \hat{V}^{1/2} >  c \right),
\end{align*}
where the last equality holds due to the fact that $\tau = N^{-1/2} c$. 
From \citet{li2018asymptotic}, under rerandomization, 
$
\sqrt{N}(\hat{\tau} - \tau) \ \dot \sim \ V^{1/2} \{(1-R^2)^{1/2} \varepsilon_0 + R L_{K,a}\},
$
where $\dot\sim$ denotes that the two sequences of distributions or random variables converging weakly to the same distribution.  
Besides, $\tilde{V}-V = o_{\pr}(1)$ and $R^2 - \tilde{R}^2 = o_{\pr}(1)$. 
By Slutsky's theorem, we have $N^{1/2}(\hat{\tau} - \tau) - \nu_{1-\osl}(\hat{R}^2) \hat{V}^{1/2}\ \dot\sim \ V^{1/2} \{(1-R^2)^{1/2} \varepsilon_0 + R L_{K,a}\} - \nu_{1-\osl}(\tilde{R}^2) \tilde{V}^{1/2}$. 
Consequently, we have, as $N\rightarrow \infty$, 
\begin{align*}
	\pr\left( \hat{\tau} > \nu_{1-\osl}(\hat{R}^2) \hat{V}^{1/2} N^{-1/2} \right) 
	& = \pr\left( V^{1/2} \{(1-R^2)^{1/2} \varepsilon_0 + R L_{K,a}\} - \nu_{1-\osl}(\tilde{R}^2) \tilde{V}^{1/2} >  c \right) + o(1)\\
	& = 
	\overline{\mathcal{V}}_{R^2}\left( \frac{\nu_{1-\osl}(\tilde{R}^2) \tilde{V}^{1/2} - c}{V^{1/2}} \right) + o(1)\\
	& = \overline{\mathcal{V}}_{R^2}\left( \frac{\nu_{1-\osl}(\tilde{R}^2) \tilde{V}^{1/2} - \tau N^{1/2}}{V^{1/2}} \right) + o(1).
\end{align*}
Therefore, Theorem \ref{thm:powerRerand} holds.

\section{Proof of Theorem \ref{thm:sampSizeRerand}} \label{s:proofSampSizeRerand}

Let $\gamma$ denote a prespecified degree of power desired for a rerandomized experiment, where $\gamma$ is the probability we reject the null hypothesis using the test (\ref{eqn:rerandTest}). Then, by Theorem \ref{thm:powerRerand}, we have:
\begin{align*}
	\gamma = \overline{\mathcal{V}}_{R^2}\left( \frac{\nu_{1-\osl}(\tilde{R}^2) \tilde{V}^{1/2} - \tau N^{1/2}}{V^{1/2}} \right) = 1 - \mathcal{V}_{R^2}\left( \frac{\nu_{1-\osl}(\tilde{R}^2) \tilde{V}^{1/2} - \tau N^{1/2}}{V^{1/2}} \right), 
\end{align*}
where $\mathcal{V}_{R^2}(\cdot)$ denotes the distribution function of the distribution $(1 - R^2)^{1/2} \epsilon_0 + R L_{K,a}$ defined in (\ref{eqn:rerandDistribution}). Then, solving for $N$, we have:
\begin{align*}
    && \frac{\nu_{1-\osl}(\tilde{R}^2) \tilde{V}^{1/2} - \tau N^{1/2}}{V^{1/2}}  = \nu_{1-\gamma}(R^2) \\
  &  \Longrightarrow &  \nu_{1-\osl}(\tilde{R}^2) \tilde{V}^{1/2} - \tau N^{1/2}  = \nu_{1-\gamma}(R^2)V^{1/2} \\
   & \Longrightarrow &  N  = \left( \frac{\nu_{1-\osl}(\tilde{R}^2) \tilde{V}^{1/2} - \nu_{1-\gamma}(R^2) V^{1/2}}{\tau} \right)^2 .
\end{align*}

\section{Proof of Theorem \ref{thm:disp_std_rem}} \label{s:dispersiveOrdering}

To prove Theorem \ref{thm:disp_std_rem}, we need the following five lemmas.

\begin{lemma}\label{lemma:Ldensity}
For any integer $K\ge 1$ and threshold $a\in (0, \infty)$, 
the probability density function of $L_{K,a}$ is 
\begin{align*}
    g_{K,a}(x) = \phi(x) \frac{F_{K-1}(a-x^2)}{F_K(a)}, 
\end{align*}
where $\phi(\cdot)$ is the probability density of $\mathcal{N}(0,1)$ and $F_K(\cdot)$ is the distribution function of $\chi^2_K$, with $F_0(x) = \I(x\ge 0)$ being the distribution function of a point mass at $0$. 
\end{lemma}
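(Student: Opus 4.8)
The plan is to identify $L_{K,a}$ with the first coordinate of a standard $K$-dimensional Gaussian vector conditioned on its squared norm not exceeding $a$, and then read off the density by a one-line conditioning argument. Concretely, let $Z = (Z_1,\dots,Z_K)^\T \sim \mathcal{N}(0, I_K)$. Writing $Z = \|Z\|\,U$ with $\|Z\|$ and $U$ independent, $\|Z\|^2 \sim \chi^2_K$, and $U$ uniform on the unit sphere $\mathbb{S}^{K-1}$ in $\mathbb{R}^K$, the first coordinate decomposes as $Z_1 = \|Z\|\,U_1$, where $U_1^2 = Z_1^2/\|Z\|^2 \sim \mathrm{Beta}\{1/2,(K-1)/2\}$ and $\mathrm{sign}(U_1)$ is an independent symmetric sign, both facts following from the rotational invariance of $U$ (for the Beta law, $U_1^2 = \chi^2_1/(\chi^2_1 + \chi^2_{K-1})$ with independent numerator and denominator pieces). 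Since conditioning on $\|Z\|^2 \le a$ affects only the radial factor $\|Z\|$ and leaves $U$ untouched, the conditional law of $Z_1$ given $\|Z\|^2 \le a$ is exactly $\chi_{K,a}S\beta_K^{1/2}$ with the three factors as in \eqref{eqn:representation}; that is, $L_{K,a} \overset{\mathrm{d}}{=} Z_1 \mid \{\|Z\|^2 \le a\}$.

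Given this representation, I would compute the density directly, using that the coordinates of $Z$ are independent. For any $x$ with $x^2 \le a$,
\begin{align*}
\pr\big(Z_1 \in \mathrm{d}x,\ \|Z\|^2 \le a\big) = \phi(x)\,\mathrm{d}x \cdot \pr\Big(\textstyle\sum_{j=2}^K Z_j^2 \le a - x^2\Big) = \phi(x)\,F_{K-1}(a - x^2)\,\mathrm{d}x,
\end{align*}
because $\sum_{j=2}^K Z_j^2 \sim \chi^2_{K-1}$, interpreted as a point mass at $0$ when $K=1$, consistent with the convention $F_0(y) = \I(y \ge 0)$; for $x^2 > a$ this probability is zero. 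Dividing by $\pr(\|Z\|^2 \le a) = F_K(a)$ yields $g_{K,a}(x) = \phi(x)F_{K-1}(a - x^2)/F_K(a)$, which is the claim.

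The only genuine work is justifying the identification in the first step; an alternative that proceeds straight from \eqref{eqn:representation} without the geometric picture is to invoke the beta--gamma algebra. Write $\beta_K = G_1/(G_1+G_2)$ with $G_1 \sim \mathrm{Gamma}(1/2)$, $G_2 \sim \mathrm{Gamma}\{(K-1)/2\}$ independent; then $\beta_K\chi^2_K \sim \chi^2_1$ and $(1-\beta_K)\chi^2_K \sim \chi^2_{K-1}$ are independent, and the truncation $\chi^2_K \le a$ is precisely the event that their sum is at most $a$. Hence $L_{K,a}^2 = \beta_K\chi^2_{K,a}$ has density proportional to $f_{\chi^2_1}(u)\,F_{K-1}(a-u)$ on $(0,a]$ with normalizing constant $F_K(a)$; attaching the independent sign $S$ and passing from $L_{K,a}^2$ to $L_{K,a}$ via $r \mapsto r^2$ contributes a Jacobian factor $2|x|$ together with a factor $1/2$ from the sign, and the elementary identity $|x|\,f_{\chi^2_1}(x^2) = \phi(x)$ collapses everything to the stated formula. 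Either way, the main point requiring care is the degenerate case $K=1$, where $\chi^2_{K-1}$ is a point mass at $0$ and $g_{1,a}$ is just the Gaussian density truncated to $[-\sqrt{a},\sqrt{a}]$; this is absorbed into the convention for $F_0$ and needs no separate argument, so I expect no substantive obstacle beyond this bookkeeping.
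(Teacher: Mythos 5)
Your proposal is correct and follows essentially the same route as the paper: both identify $L_{K,a}$ with the first coordinate of a standard $K$-dimensional Gaussian conditioned on $\|Z\|^2\le a$ and then obtain the density from $\pr(Z_1\in \deri x,\ \|Z\|^2\le a)=\phi(x)F_{K-1}(a-x^2)\,\deri x$ divided by $F_K(a)$. The only difference is that you explicitly justify the identification with the representation \eqref{eqn:representation} via the polar decomposition (and, alternatively, the beta--gamma algebra), whereas the paper takes this identification from \citet[Proof of Proposition 2]{li2018asymptotic}; your added care there is sound.
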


\begin{proof}[of Lemma \ref{lemma:Ldensity}]
Lemma \ref{lemma:Ldensity} follows from \citet[][Proof of Proposition 2]{li2018asymptotic}. For completeness, we give a proof below. 
Let $\bs{D} = (D_1, \ldots, D_K)^\T \sim \mathcal{N}(\bs{0}, \bs{I}_K)$. 
For 
any $x\in \mathbb{R}$, 
we have
\begin{align*}
    \pr(L_{K,a} \le x) 
    & = 
    \pr(D_1 \le x \mid \bs{D}^\T \bs{D} \le a) \\
  &  = 
    \frac{\pr(D_1 \le x, \bs{D}^\T \bs{D} \le a)}{\pr(\bs{D}^\T \bs{D} \le a)}
    \\
    & = 
    \frac{1}{F_K(a)} \int_{-\infty}^{\infty} \pr\left(t \le x, t^2 + \sum_{j=2}^K D_j^2 \le a \right) \phi(t) \deri t  \\
&    = 
    \frac{1}{F_{K}(a)} \int_{-\infty}^{x} F_{K-1}(a-t^2) \phi(t) \deri t\\
    & \equiv \int_{-\infty}^{x} g_{K,a}(t) \deri t, 
\end{align*}
where $g_{K,a}(t) \equiv F_{K-1}(a-t^2) \phi(t) / F_{K}(a)$. 
Therefore, $g_{K,a}(\cdot)$ must be the probability density function of $L_{K,a}$, i.e., Lemma \ref{lemma:Ldensity} holds. 
\end{proof}

\begin{lemma}\label{lemma:diff_logdensity_L_epsilon}
For any $a\in (0,\infty)$, $K\ge 1$ and $c\in \mathbb{R}$, define 
\begin{align*}
    h_{K,a,c} (x) & = \log g_{K,a}(x) - \log \phi(x+c), \qquad (-\sqrt{a} < x < \sqrt{a}). 
\end{align*}
Then $\deri^2 h_{K,a,c} (x) / \deri x^2 \le 0$ for $x\in (-\sqrt{a}, \sqrt{a})$.
\end{lemma}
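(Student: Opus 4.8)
The plan is to strip away the Gaussian pieces, which contribute nothing to the second derivative, and reduce the statement to the log-concavity of a chi-squared distribution function.

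\emph{Step 1 (reduction).} Substituting the density formula from Lemma~\ref{lemma:Ldensity} gives $\log g_{K,a}(x) = \log \phi(x) + \log F_{K-1}(a-x^2) - \log F_K(a)$, hence
\begin{align*}
  h_{K,a,c}(x) = \big(\log\phi(x) - \log\phi(x+c)\big) + \log F_{K-1}(a-x^2) - \log F_K(a).
\end{align*}
Since $\log\phi(y) = -y^2/2 - \tfrac12\log(2\pi)$, the first group equals $cx + c^2/2$, which is affine in $x$, and $\log F_K(a)$ is a constant; therefore $h_{K,a,c}''(x) = \frac{\deri^2}{\deri x^2}\log F_{K-1}(a-x^2)$, and in particular $c$ drops out. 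So it suffices to show $x \mapsto \log F_{K-1}(a-x^2)$ is concave on $(-\sqrt a,\sqrt a)$. When $K=1$ this is immediate because $F_0(a-x^2) = \I(a-x^2 \ge 0) = 1$ there, so assume $m := K-1 \ge 1$.

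\emph{Step 2 (chain rule).} Put $\ell := \log F_m$ and, for $x\in(-\sqrt a,\sqrt a)$, $u := a-x^2 \in (0,a]$. Then $\frac{\deri^2}{\deri x^2}\ell(a-x^2) = 4x^2\,\ell''(u) - 2\,\ell'(u)$. Because $\ell'(u) = f_m(u)/F_m(u) > 0$ (with $f_m$ the $\chi^2_m$ density), the term $-2\ell'(u)$ is negative, so everything reduces to proving $\ell''(u) \le 0$, i.e.\ that the chi-squared distribution function $F_m$ is log-concave.

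\emph{Step 3 (log-concavity of $F_m$, the crux).} I would show the reversed hazard rate $w_m := f_m/F_m = \ell'$ is nonincreasing. Using $f_m'(u)/f_m(u) = (m-2)/(2u) - 1/2$, a one-line computation gives $w_m'(u) = w_m(u)\big((m-2)/(2u) - 1/2 - w_m(u)\big)$, so the claim reduces to the pointwise bound $w_m(u) \ge (m-2)/(2u) - 1/2$. For $m\in\{1,2\}$ the right-hand side is $\le 0 < w_m$, so there is nothing to prove; for $m\ge 3$ I would use the auxiliary function $\sigma_m(u) := 2u f_m(u) + u F_m(u) - (m-2)F_m(u)$, checking that $\sigma_m(0)=0$ and, after substituting the formula for $f_m'$, $\sigma_m'(u) = 2f_m(u) + F_m(u) \ge 0$; hence $\sigma_m \ge 0$ on $[0,\infty)$, which rearranges to exactly the desired bound. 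Two alternative routes would also work: (i) for $m\ge 2$ invoke the classical fact that CDFs of log-concave densities are log-concave (here $(\log f_m)'' = -(m-2)/(2u^2)\le 0$) and treat $m=1$ directly via $(\log F_1)'(u) = \phi(\sqrt u)/[\sqrt u\,(2\Phi(\sqrt u)-1)]$, which is visibly decreasing; or (ii) note that $L_{K,a}$ is the first coordinate of a standard Gaussian in $\mathbb{R}^K$ conditioned to a Euclidean ball, so $g_{K,a}$ is $1$-strongly log-concave (restricting to a convex set and marginalizing both preserve strong log-concavity), giving $(\log g_{K,a})'' \le -1$ and thus concavity of $\log g_{K,a}(x) + x^2/2 = \log F_{K-1}(a-x^2) + \text{const}$.

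Once $\ell'' \le 0$ is in hand, $4x^2\ell''(u) - 2\ell'(u) \le 0$, which is $h_{K,a,c}''(x)\le 0$, completing the proof. The only genuine work is Step 3, and the subtlety I expect to be the main obstacle is that $f_m$ itself fails to be log-concave at $m=1$, so a single uniform argument requires either the auxiliary function $\sigma_m$ or an explicit treatment of the $\chi^2_1$ distribution function.
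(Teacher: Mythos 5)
Your proof is correct and follows essentially the same route as the paper's: both reduce the claim to concavity of $x\mapsto\log F_{K-1}(a-x^2)$ and both hinge on the identical pointwise inequality $2uf_m(u)\ge(m-2-u)F_m(u)$, which you prove via the auxiliary function $\sigma_m$ with $\sigma_m(0)=0$ and $\sigma_m'\ge 0$ and the paper proves via its (equivalent, up to the factor $-(m-2-u)$) functions $\Delta_K$ and $\tilde\Delta_K$. Your chain-rule display $4x^2\ell''(u)-2\ell'(u)$ is in fact slightly more careful than the paper's, which omits the harmless negative term $-2\ell'(u)$, but since everything reduces to the same bound this is the same proof in different packaging.
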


\begin{proof}[of Lemma \ref{lemma:diff_logdensity_L_epsilon}]
From Lemma \ref{lemma:Ldensity}, 
\begin{align*}
    h_{K,a,c} (x) & = \log g_{K,a}(x) - \log \phi(x+c)
    = 
    \log \phi(x) + \log F_{K-1}(a-x^2) - \log F_K(a) - \log \phi(x+c)\\
    & = \log F_{K-1}(a-x^2) + cx + c^2/2 - \log F_K(a). 
\end{align*}
When $K=1$, $h_{1,a,c} (x)$ reduces to $cx+c^2/2 - \log F_1(a)$, which is a linear function of $x$. Consequently, $\deri^2 h_{1,a,c}(x)/\deri x^2 = 0$ for all $x\in (-\sqrt{a}, \sqrt{a})$, i.e., Lemma \ref{lemma:diff_logdensity_L_epsilon} holds for $K=1$. 
Below we consider only the case with $K > 1$.

Let $f_K(x)$ be the density of $\chi^2_K$, and $\dot{f}_K(x) = \deri f_K(x)/ \deri x$ be its derivative over $x$. 
We have
\begin{align}\label{eq:deriv_chi_density}
    \dot{f}_K(x) = f_K(x) \cdot \left( \frac{K/2-1}{x} - \frac{1}{2}\right)
    = f_K(x) \cdot \frac{K-2-x}{2x}. 
\end{align}
Consequently, for $K > 1$, the second derivative of $h_{K,a,c}$ reduces to
\begin{align*}
    \frac{\deri^2}{\deri x^2} h_{K,a,c} (x)
    & = \frac{\deri^2}{\deri x^2} \log F_{K-1}(a-x^2)
    \\
    & = \frac{\deri}{\deri x} 
    \left\{ 
    \frac{f_{K-1}(a-x^2) \cdot (-2x)}{F_{K-1}(a-x^2)} 
    \right\}\\
    & = 
    \frac{\dot{f}_{K-1}(a-x^2) \cdot (-2x)^2 \cdot F_{K-1}(a-x^2) - \{f_{K-1}(a-x^2) \cdot (-2x)\}^2}{\{F_{K-1}(a-x^2)\}^2}
    \\
    & = 
    \frac{4x^2\cdot f_{K-1}(a-x^2)}{\{F_{K-1}(a-x^2)\}^2}
    \left\{
    \frac{K-1-2-(a-x^2)}{2(a-x^2)} \cdot F_{K-1}(a-x^2) - f_{K-1}(a-x^2)
    \right\}
    \\
    & \equiv
    \frac{4x^2\cdot f_{K-1}(a-x^2)}{\{F_{K-1}(a-x^2)\}^2} \cdot \Delta_{K-1}(a-x^2), 
\end{align*}
where 
\begin{align*}
    \Delta_K(x) = \frac{K-2-x}{2x} \cdot F_{K}(x) - f_{K}(x). 
\end{align*}
Thus, to prove Lemma \ref{lemma:diff_logdensity_L_epsilon}, it suffices to prove $\Delta_K(x) \le 0$ for all $K > 0$ and $x\in (0, \infty)$. 
Note that $\Delta_K(x) \le - f_{K}(x) \le 0$ when $x\ge K-2$. 
It suffices to show $\Delta_K(x) \le 0$ for all $K>2$ and $x\in (0, K-2)$. 

For $K>2$ and $x\in (0, K-2)$, 
define 
\begin{align*}
    \tilde{\Delta}_K(x)
    & = \frac{2x}{K-2-x} \Delta_K(x) = F_K(x) - \frac{2x}{K-2-x} f_K(x). 
\end{align*}
It then suffices to show $\tilde{\Delta}_K(x) \le 0$ for all $K>2$ and $x\in (0, K-2)$. 
By some algebra and \eqref{eq:deriv_chi_density}, for $K>2$ and $x\in (0, K-2)$, 
\begin{align*}
    \frac{\deri}{\deri x}\tilde{\Delta}_K(x)
    & = 
    f_K(x) - \frac{2(K-2)}{(K-2-x)^2} f_K(x) - \frac{2x}{K-2-x} \dot{f}_K(x)
    =  - \frac{2(K-2)}{(K-2-x)^2} f_K(x) \le 0. 
\end{align*}
We can verify that $\lim_{x\rightarrow 0+} \tilde{\Delta}_K(x) = 0$ for any $K > 2$. 
Thus, we must have $\tilde{\Delta}_K(x) \le 0$ for all $K > 2$ and $x\in (0, K-2)$. 

From the above, Lemma \ref{lemma:diff_logdensity_L_epsilon} holds. 
\end{proof}

For a real function $\psi$ defined on $\mathcal{I} \subset \mathbb{R}$, 
define the number of sign changes of $\psi$ in $\mathcal{I}$ as 
\begin{align}\label{eq:sign_change}
    S^{-}_{\mathcal{I}}(\psi) = S^{-}_{\mathcal{I}}(\psi(x)) = \sup S^{-}_{\mathcal{I}} [\psi(x_1), \psi(x_2), \ldots, \psi(x_m)]
\end{align}
where $S^{-}_{\mathcal{I}}(y_1, y_2, \ldots, y_m)$ is the number of sign changes of the sequence $(y_1, y_2, \ldots, y_m)$ with the zero terms being discarded, and the supremum in \eqref{eq:sign_change} is over all sets $x_1 < x_2 < \cdots < x_m$ with $x_i\in \mathcal{I}$ and $m<\infty$. 
For any $c\in \mathbb{R}$ and function $\psi$, define $\psi_c(x) = f(x-c)$. 

\begin{lemma}\label{lemma:disperse_sign_change}
Let $F$ and $G$ be two absolutely continuous distributions having intervals as their support, in the sense that each of $F$ and $G$ has a probability density function that takes positive values on an interval and zero values otherwise, 
and let $f$ and $g$ be the corresponding densities. 
If $\mathcal{S}^{-}_{\mathbb{R}}(f_c-g)\le 2$ for all $c\in \mathbb{R}$, 
with the sign sequence being $-, +, -$ in case of equality, then $F$ is less dispersed than $G$. 
\end{lemma}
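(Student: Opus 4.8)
The plan is to reduce ``$F$ less dispersed than $G$'' to a single-crossing property of the translated distribution functions, and then to obtain that property from the two-crossing hypothesis on the densities by a variation-diminishing argument. I would start from the standard characterization that $F$ is less dispersed than $G$ if and only if $G^{-1}(F(x)) - x$ is nondecreasing on the interior of $\mathrm{supp}(F)$, and rewrite this (by taking upper level sets, applying the strictly increasing maps $F$ and $G$ on their supports, and then translating) as: for every $c \in \mathbb{R}$ the set $\{x : F(x-c) \ge G(x)\}$ is an up-set, i.e.\ an interval unbounded above. Writing $H_c(x) = F(x-c) - G(x)$, this says $H_c$ changes sign at most once and, if once, from $-$ to $+$. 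Granting this, the dispersive inequality $F^{-1}(\beta) - F^{-1}(\alpha) \le G^{-1}(\beta) - G^{-1}(\alpha)$ for $0 < \alpha < \beta < 1$ falls out by taking $c = G^{-1}(\alpha) - F^{-1}(\alpha)$, which makes $H_c(G^{-1}(\alpha)) = 0$, and then reading off $H_c(G^{-1}(\beta)) \ge 0$ from the up-set property and rearranging.

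The core step is then to verify the single-crossing of $H_c$ for each fixed $c$. Here $H_c$ is absolutely continuous with $H_c' = f_c - g$ and $H_c(-\infty) = H_c(+\infty) = 0$, since $f_c$ and $g$ integrate to $1$. I would split on $S^{-}_{\mathbb{R}}(f_c - g) \in \{0,1,2\}$: if $0$, then $f_c - g$ has constant sign and integrates to $0$, so $f_c = g$ a.e.\ and $H_c \equiv 0$; if $1$ with pattern $-,+$, then $H_c$ is nonincreasing then nondecreasing and, starting and ending at $0$, is $\le 0$ throughout (pattern $+,-$ gives $H_c \ge 0$ throughout); if $2$ with the prescribed pattern $-,+,-$, then $H_c$ is nonincreasing, then nondecreasing, then nonincreasing on consecutive intervals, so the outer pieces force $H_c \le 0$ on the left and $H_c \ge 0$ on the right (using the boundary value $0$), while on the middle, nondecreasing piece $H_c$ moves from a value $\le 0$ to a value $\ge 0$ and so crosses $0$ exactly once. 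In all cases $\{H_c > 0\}$ is an up-set and $\{H_c < 0\}$ a down-set, which is exactly the property required above.

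I expect the main obstacle to be the bookkeeping in the reduction, not the variation-diminishing count. Since $H_c$ vanishes at $\pm\infty$ and may vanish identically on an initial or terminal segment where both $F(\cdot - c)$ and $G$ equal $0$ or $1$, the naive set $\{H_c \ge 0\}$ can pick up a spurious boundary piece and fail to be a literal up-set. I would handle this with the standing assumption that the supports are intervals carrying positive densities: $G^{-1}(\alpha)$ then lies strictly inside $\mathrm{supp}(G)$, so the only way $H_c(G^{-1}(\beta))$ could fail to be $\ge 0$ is a genuine crossing of $H_c$ from $+$ to $-$, which the previous paragraph has excluded; equivalently, one restricts attention to $\{H_c \ge 0\} \cap \mathrm{int}\,\mathrm{supp}(G)$ throughout. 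This is also precisely where the hypothesis that the sign sequence is $-,+,-$ in the equality case (and not the reverse $+,-,+$) is used: the reverse pattern would instead make $H_c$ nondecreasing, then nonincreasing, then nondecreasing, creating an interior positive bump of $H_c$ that breaks the up-set structure and, correspondingly, the dispersive ordering.
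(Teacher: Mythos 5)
You are proving a result the paper itself does not prove: the authors simply cite Shaked (1982, Theorem 2.5) for this lemma. Your overall strategy --- integrate the two-crossing condition on the densities to a one-crossing condition on $H_c = F_c - G$, then translate that into the dispersive order --- is the right one and is essentially Shaked's. Your case analysis on $S^{-}_{\mathbb{R}}(f_c-g)\in\{0,1,2\}$ is also correct as far as it goes: it shows that for each fixed $c$ the function $H_c$ is either identically $0$, everywhere $\le 0$, everywhere $\ge 0$, or $\le 0$ followed by $\ge 0$; equivalently, $S^{-}_{\mathbb{R}}(H_c)\le 1$ with sign sequence $-,+$ in case of equality.

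The gap is in the next step. You assert that ``in all cases $\{H_c>0\}$ is an up-set and $\{H_c<0\}$ a down-set,'' and your reduction needs the stronger statement that $\{H_c\ge 0\}$ is an up-set, since with $c=G^{-1}(\alpha)-F^{-1}(\alpha)$ you only get $H_c(G^{-1}(\alpha))=0$ and must conclude $H_c(G^{-1}(\beta))\ge 0$. This fails exactly in your pattern-$(-,+)$ case: there you correctly deduce $H_c\le 0$ throughout, but then $\{H_c<0\}$ is an interior interval (not a down-set) and $\{H_c\ge 0\}=\{H_c=0\}$ need not be an up-set; $H_c$ can touch zero at an interior point and then dip negative without any sign change. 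Concretely, let $G$ be uniform on $[0,2]$ and let $F$ have density $1/2$ on $[0,1]$, $1/2-\delta$ on $[1,1.5]$, and $1/2+\delta$ on $[1.5,2]$. For $c=0$ the sign sequence of $f-g$ is $(-,+)$, so your hypothesis at $c=0$ is satisfied, yet $H_0(1)=0$ and $H_0(1.5)=-\delta/2<0$, so with $\alpha=1/2$, $\beta=3/4$ your argument breaks. (This pair violates the lemma's hypothesis at other values of $c$, so the lemma is safe --- but that is the point: the implication you need cannot be obtained one $c$ at a time, and your proposed patch about ``a genuine crossing from $+$ to $-$'' does not apply since no crossing occurs.) The standard repair is to argue by contradiction: if $G^{-1}(\beta)-F^{-1}(\beta)<G^{-1}(\alpha)-F^{-1}(\alpha)$, choose $c$ strictly between these two numbers; then $H_c(G^{-1}(\alpha))>0$ and $H_c(G^{-1}(\beta))<0$ (using that $F$ and $G$ are strictly increasing on the interiors of their supports), which is a genuine $+,-$ pattern for $H_c$ and contradicts the $c$-wise conclusion $S^{-}_{\mathbb{R}}(H_c)\le 1$ with sign sequence $-,+$. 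With that substitution your proof goes through.
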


\begin{proof}[of Lemma \ref{lemma:disperse_sign_change}]
Lemma \ref{lemma:disperse_sign_change} follows from \citet[][Theorem 2.5]{shaked1982}. 
\end{proof}

\begin{lemma}\label{lemma:L_disperse}
For any $a \in [0, \infty]$ and integer $K\ge 1$, 
$L_{K,a}$ is less dispersed than $\varepsilon_0$. 
\end{lemma}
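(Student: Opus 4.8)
The plan is to apply Lemma~\ref{lemma:disperse_sign_change}, taking $F$ to be the law of $L_{K,a}$ (with density $f=g_{K,a}$ supplied by Lemma~\ref{lemma:Ldensity}) and $G$ to be the law of $\varepsilon_0$ (with density $g=\phi$). Both laws have an interval as support ($[-\sqrt a,\sqrt a]$ for $L_{K,a}$ when $a\in(0,\infty)$, and $\mathbb{R}$ for $\varepsilon_0$), so the only hypothesis left to check is that, for every $c\in\mathbb{R}$, the function $x\mapsto g_{K,a}(x-c)-\phi(x)$ has at most two sign changes on $\mathbb{R}$, with sign sequence $-,+,-$ whenever there are exactly two. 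Before verifying this I would clear away the two degenerate thresholds. For $a=\infty$ the acceptance constraint is vacuous and the representation~\eqref{eqn:representation} then coincides with the familiar polar decomposition of one coordinate of a spherically symmetric $K$-dimensional Gaussian vector, so $L_{K,\infty}$ and $\varepsilon_0$ have the same law and the dispersive inequality holds with equality; for $a=0$, $L_{K,0}$ is a point mass at $0$, whose quantile function is constant, so it is trivially less dispersed than $\varepsilon_0$. Hence assume $a\in(0,\infty)$.

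Fix $c\in\mathbb{R}$. For $x\notin(c-\sqrt a,\,c+\sqrt a)$ we have $g_{K,a}(x-c)=0$, so $g_{K,a}(x-c)-\phi(x)=-\phi(x)<0$. For $x\in(c-\sqrt a,\,c+\sqrt a)$, both $g_{K,a}(x-c)$ and $\phi(x)$ are strictly positive, so $g_{K,a}(x-c)-\phi(x)$ has the same sign as $\log g_{K,a}(x-c)-\log\phi(x)=h_{K,a,c}(x-c)$, with $h_{K,a,c}$ the function of Lemma~\ref{lemma:diff_logdensity_L_epsilon}. That lemma states $h_{K,a,c}$ is concave on $(-\sqrt a,\sqrt a)$; since the set on which a concave function is positive is an interval (possibly empty), the set $\{x\in(c-\sqrt a,\,c+\sqrt a):g_{K,a}(x-c)-\phi(x)>0\}$ is an interval. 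Combining this with $g_{K,a}(x-c)-\phi(x)<0$ for all $x$ outside $(c-\sqrt a,\,c+\sqrt a)$, the full positivity set $\{x:g_{K,a}(x-c)-\phi(x)>0\}$ is a bounded interval, so along $\mathbb{R}$ the difference is negative, then positive, then negative; that is, $S^{-}_{\mathbb{R}}\{g_{K,a}(\cdot-c)-\phi\}\le 2$, with sign sequence $-,+,-$ in the case of equality. Lemma~\ref{lemma:disperse_sign_change} then yields that $L_{K,a}$ is less dispersed than $\varepsilon_0$, which is the claim.

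The analytic core of the argument — concavity of $h_{K,a,c}$, equivalently log-concavity of $x\mapsto g_{K,a}(x-c)/\phi(x)$ — is already done in Lemma~\ref{lemma:diff_logdensity_L_epsilon}, so I do not foresee a genuine obstacle; the only place deserving care is the behaviour at the support endpoints $x=c\pm\sqrt a$. For $K>1$ this is immediate, since $g_{K,a}$ vanishes at $\pm\sqrt a$, so the difference is continuous and negative through those points. For $K=1$ the density $g_{1,a}$ is a truncated Gaussian, which does not vanish at $\pm\sqrt a$, so the difference can jump from positive to negative there; but $h_{1,a,c}$ is linear (hence concave), its positivity set is again an interval, and this jump is already absorbed into the ``bounded interval'' description above, so the same conclusion holds.
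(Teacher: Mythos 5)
Your argument is correct and is essentially the paper's own proof: dispose of $a\in\{0,\infty\}$ trivially, then for $a\in(0,\infty)$ combine the density formula of Lemma~\ref{lemma:Ldensity}, the concavity of $h_{K,a,c}$ from Lemma~\ref{lemma:diff_logdensity_L_epsilon}, and the sign-change criterion of Lemma~\ref{lemma:disperse_sign_change}, noting that the difference is negative off the support so the sign pattern is $-,+,-$. Your treatment of the $K=1$ endpoint issue matches the paper's device of setting $g_{1,a}(\pm\sqrt a)=0$.
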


\begin{proof}[of Lemma \ref{lemma:L_disperse}]
Lemma \ref{lemma:L_disperse} holds obviously when $a$ equals zero or infinity. Below we consider only the case with $a\in (0, \infty)$.

Let $g_{K,a}$ and $\phi$ be the densities of $L_{K,a}$ and $\varepsilon_0$. We have derived the form of $g_{K,a}$ in Lemma \ref{lemma:Ldensity}. Furthermore, we define $g_{1, a}(x)$ to be zero when $x^2=a$; obviously, $g_{1, a}(x)$ is still the density of $L_{K,a}$.
Let $\mathcal{I} = (-\sqrt{a}, \sqrt{a})$ be the support of $L_{K,a}$. For any $c\in \mathbb{R}$, define $g_{K,a,c}(x) = g_{K,a}(x-c)$, $\mathcal{I}_c = (-\sqrt{a}+c, \sqrt{a}+c)$, and $h_{K,a,c}$ the same as in Lemma \ref{lemma:diff_logdensity_L_epsilon}. 
We then have, for any $x\in \mathcal{I}_c$, 
\begin{align*}
    \text{sign}\left\{ g_{K,a,c}(x) - \phi(x) \right\}
    & = \text{sign}\left\{ g_{K,a}(x-c) - \phi(x-c+c) \right\}
    = \text{sign}\left\{ h_{K,a,c}(x-c) \right\}. 
\end{align*}
Therefore, 
$S^{-}_{\mathcal{I}_c} (g_{K,a,c} - \phi) = S^{-}_{\mathcal{I}}(h_{K,a,c})$. 
By Lemma \ref{lemma:diff_logdensity_L_epsilon}, 
$h_{K,a,c}$ is a concave function on $\mathcal{I}$. 
This then implies that 
\begin{align*}
    S^{-}_{\mathcal{I}_c} (g_{K,a,c} - \phi) = S^{-}_{\mathcal{I}}(h_{K,a,c}) 
    = 
    \begin{cases}
    0, & \text{ with sign being $+$ or $-$}, \\
    1, & \text{ with sign sequence being $(-, +)$ or $(+, -)$}, \\
    2, & \text{ with sign sequence being $(-, +, -)$}. 
    \end{cases}
\end{align*}
Note that $g_{K,a,c}(x) = 0 < \phi(x)$ for $x\notin \mathcal{I}_c$. 
We can then verify that $S^{-}_{\mathbb{R}} (g_{K,a,c} - \phi)$ must have the following forms:
\begin{align*}
    S^{-}_{\mathbb{R}} (g_{K,a,c} - \phi) 
    & = 
    \begin{cases}
    0, & \text{ with sign being $-$}, \\
    2, & \text{ with sign sequence being $(-, +, -)$}. 
    \end{cases}
\end{align*}
By Lemma \ref{lemma:disperse_sign_change}, $L_{K,a}$ is less dispersed than $\varepsilon_0$. 
Therefore, Lemma \ref{lemma:L_disperse} holds. 
\end{proof}

\begin{lemma}\label{lemma:disp_sum}
Assume $X$ is less dispersed than $Y$. 
Let $W$ be a random variable independent of $X$ and $Y$. 
Let $f(w)$ be the density of $W$. 
If $f(w)>0$ and $\deri^2 \log f(w)/\deri w^2 \leq 0$ for all $w$, 
then $X+W$ is less dispersed than $Y+W$. 
\end{lemma}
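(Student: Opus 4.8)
The plan is to reduce the claim to the sign-change criterion for the dispersive order supplied by Lemma~\ref{lemma:disperse_sign_change}, and then exploit the log-concavity of $W$'s density through the classical variation-diminishing property of P\'olya-frequency kernels. Throughout I will assume that $X$ and $Y$ themselves have densities $f_X, f_Y$ with interval supports; this is the only case needed in the application to Theorem~\ref{thm:disp_std_rem}, and the general case can be recovered by a routine approximation argument. Since $W$ has a density, $X+W$ and $Y+W$ automatically have densities whose supports are Minkowski sums of intervals, hence again intervals, and on those intervals the convolution densities are positive; so Lemma~\ref{lemma:disperse_sign_change} will indeed be applicable to $F_{X+W}$ and $F_{Y+W}$ at the end.

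First I would record what the hypothesis $X$ less dispersed than $Y$ buys at the level of densities. By the equivalence underlying Lemma~\ref{lemma:disperse_sign_change} (\citealt[Theorem 2.5]{shaked1982}, of which the stated lemma is the ``if'' direction), $X$ being less dispersed than $Y$ implies that for every $c \in \mathbb{R}$ the function $(f_X)_c - f_Y$ has at most two sign changes, with the pattern $-,+,-$ in case it has exactly two. Next I would write the densities of the convolutions: denoting by $f$ the density of $W$ as in the statement, a change of variables gives, for every $c$,
\[
 f_{X+W}(x-c) - f_{Y+W}(x) \;=\; \bigl(\{(f_X)_c - f_Y\} * f\bigr)(x).
\]

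The heart of the argument is then the following. The assumption $f>0$ together with $\mathrm{d}^2 \log f(w)/\mathrm{d} w^2 \le 0$ says exactly that $f$ is log-concave, equivalently a P\'olya frequency function of order $2$, equivalently that the kernel $(x,y)\mapsto f(x-y)$ is totally positive of order $2$. By the variation-diminishing property of such kernels (Karlin's theory of total positivity), convolution with $f$ cannot increase the number of sign changes of a function, and when the number of sign changes is preserved the \emph{arrangement} of the signs, read in increasing order of the argument, is preserved as well. Applying this to $\psi_c := (f_X)_c - f_Y$ and combining with the two previous displays yields, for every $c$,
\[
 S^{-}_{\mathbb{R}}\!\bigl( f_{X+W}(\cdot - c) - f_{Y+W} \bigr) \;\le\; S^{-}_{\mathbb{R}}(\psi_c) \;\le\; 2 ,
\]
with sign sequence $-,+,-$ in case of equality. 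This is precisely the hypothesis of Lemma~\ref{lemma:disperse_sign_change} for $F_{X+W}$ and $F_{Y+W}$, which therefore yields that $X+W$ is less dispersed than $Y+W$.

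I expect the third step --- invoking the variation-diminishing property in the refined form that controls the arrangement of signs, not merely their count --- to be the main obstacle, together with the bookkeeping needed to make the chain of equivalences ``log-concave density $\Leftrightarrow$ $\mathrm{PF}_2$ $\Leftrightarrow$ $\mathrm{TP}_2$ convolution kernel'' precise; this is the single place where the hypothesis on $\log f$ is actually used. Secondary care is needed in quoting the converse half of the Shaked (1982) equivalence in the second step, in the reduction to the case where $X$ and $Y$ have densities with interval support, and in checking the support/positivity conditions required by Lemma~\ref{lemma:disperse_sign_change}.
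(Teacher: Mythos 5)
Your overall strategy---push the comparison through convolution and invoke the variation-diminishing property of the log-concave ($\mathrm{PF}_2$) kernel $f(x-y)$---is the right one, and it is essentially how the cited result of Lewis and Thompson (1981, Theorem 7), which is all the paper itself invokes here, is actually proved. However, as written your argument has a genuine gap in the second step. You need the implication ``$X$ less dispersed than $Y$ $\Rightarrow$ for every $c$, $(f_X)_c - f_Y$ has at most two sign changes, with pattern $-,+,-$ in case of equality,'' and you justify it by appealing to ``the equivalence underlying Lemma~\ref{lemma:disperse_sign_change}.'' But Shaked's Theorem 2.5 (and Lemma~\ref{lemma:disperse_sign_change} as stated) is a one-directional sufficient condition, not an equivalence. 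The density-level double-crossing condition is strictly stronger than the dispersive order: the order is characterized by the behaviour of the quantile functions, and one can have $X$ less dispersed than $Y$ while $f_X(\cdot-c)$ and $f_Y$ cross arbitrarily many times (e.g.\ perturb the density of a strictly less dispersed $X$ by a small high-frequency oscillation; the slack in the quantile-density inequality preserves the order, but the shifted densities now cross many times near each original crossing). So the input you feed into the variation-diminishing step is not available from the hypothesis.

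The repair is to run the identical argument one level up, at the distribution functions, where a single-crossing condition \emph{is} an exact characterization: $X$ is less dispersed than $Y$ if and only if, for every $c$, the function $F_{X}(\cdot - c) - F_{Y}$ changes sign at most once, with sign sequence $-,+$ in case of equality. Since $F_{X+W}(x-c) - F_{Y+W}(x) = \{(F_X(\cdot - c) - F_Y) * f\}(x)$ and convolution with the $\mathrm{TP}_2$ kernel $f(x-y)$ does not increase the number of sign changes and preserves their arrangement, the single-crossing property passes to $X+W$ versus $Y+W$, and the equivalence then gives the conclusion---with no need to assume $X$ and $Y$ have densities with interval support, so the approximation step you flag also becomes unnecessary. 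With that substitution your proof is correct and self-contained, whereas the paper simply cites \citet[Theorem 7]{lewis_thompson_1981}.
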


\begin{proof}[of Lemma \ref{lemma:disp_sum}]
Lemma \ref{lemma:disp_sum} follows from \citet[][Theorem 7]{lewis_thompson_1981}.
\end{proof}

Equipped with the above lemmas, we now prove Theorem \ref{thm:disp_std_rem}.

\begin{proof}[of Theorem \ref{thm:disp_std_rem}]
Let $\varepsilon_1\sim \mathcal{N}(0,1)$ be independent of $(\varepsilon_0, L_{K,a})$. 
From Lemma \ref{lemma:L_disperse}, 
$L_{K,a}$ is less dispersed than $\varepsilon_1$, 
which immediately implies that $\rho L_{K,a}$ is less dispersed than $\rho \varepsilon_1$.  
Thus, Theorem \ref{thm:disp_std_rem} holds obviously when $\rho = 1$. 
Below we consider only the case where $0\le \rho <1$. 

By some algebra, the second derivative of the log-density of $\sqrt{1-\rho^2} \varepsilon_0 \sim \mathcal{N}(0, 1-\rho^2)$ is a constant $-(1-\rho^2)^{-1} < 0$. 
Thus, from Lemma \ref{lemma:disp_sum},
$\sqrt{1-\rho^2} \varepsilon_0 + \rho L_{K,a}$ is less dispersed than $\sqrt{1-\rho^2} \varepsilon_0 + \rho \varepsilon_1 \sim \mathcal{N}(0,1) \sim \varepsilon_0$.

From the above, Theorem \ref{thm:disp_std_rem} holds. 
\end{proof}

\section{Proof of Theorem \ref{thm:power_compare}} \label{s:powerCompare}

Let $\beta_{\textup{rr}}$ and $\beta_{\textup{cr}}$ denote the left and right hand sides of \eqref{eq:power_bound}, respectively.
We first consider the case with $\tilde{V} = V$, which implies $\tilde{R}^2 = R^2$. 
We can then verify that  
\begin{align*}
    \nu_{1-\osl}(R^2) - \nu_{1-\beta_{\textup{rr}}}(R^2)  = V^{-1/2}N^{1/2} \tau = z_{1-\osl} - z_{1-\beta_{\textup{cr}}}.  
\end{align*}
Assuming $\tau\ge 0$, we have $1-\osl \ge 1-\beta_{\textup{rr}}$. 
From Theorem \ref{thm:disp_std_rem}, 
$\nu_{1-\osl}(R^2) - \nu_{1-\beta_{\textup{rr}}}(R^2) \le z_{1-\osl} - z_{1-\beta_{\textup{rr}}}$. 
This then implies that 
$z_{1-\osl} - z_{1-\beta_{\textup{cr}}} \le z_{1-\osl} - z_{1-\beta_{\textup{rr}}}$. 
Consequently, we must have $\beta_{\textup{rr}} \ge \beta_{\textup{cr}}$, i.e., the inequality in \eqref{eq:power_bound} holds. 

We then consider the case where $\tau \ge \nu_{1-\osl} (\tilde{R}^2) \tilde{V}^{1/2} N^{-1/2}$.  
We can verify that 
\begin{align*}
    \nu_{1-\osl}(\tilde{R}^2) \tilde{V}^{1/2} - \nu_{1-\beta_{\textup{rr}}}(R^2) V^{1/2} 
    = 
    N^{1/2} \tau 
    = 
    z_{1-\osl} \tilde{V}^{1/2} - z_{1-\beta_{\textup{cr}}} V^{1/2}. 
\end{align*}
Assuming $\tau \ge \nu_{1-\osl} (\tilde{R}^2) \tilde{V}^{1/2} N^{-1/2}$, we have 
$\beta_{\textup{\textup{rr}}} \ge 1/2$. 
Note that the distribution $(1 - R^2)^{1/2} \epsilon_0 + R L_{K,a}$ is symmetric around zero.
From \citet[][Theorem 2]{li2018asymptotic},  
we can verify that 
$- \nu_{1-\beta_{\textup{\textup{rr}}}}(R^2) = \nu_{\beta_{\textup{rr}}}(R^2) \le z_{\beta_{\textup{rr}}} = - z_{1-\beta_{\textup{rr}}}$ 
and 
$
\nu_{1-\osl}(\tilde{R}^2) \le z_{1-\osl}. 
$
These imply that 
\begin{align*}
    z_{1-\osl} \tilde{V}^{1/2} - z_{1-\beta_{\textup{cr}}} V^{1/2}
    & = \nu_{1-\osl}(\tilde{R}^2) \tilde{V}^{1/2} - \nu_{1-\beta_{\textup{\textup{rr}}}}(R^2) V^{1/2}
    \le z_{1-\osl} \tilde{V}^{1/2} - z_{1-\beta_{\textup{\textup{\textup{rr}}}}} V^{1/2}.
\end{align*}
Consequently, we must have  $\beta_{\textup{\textup{rr}}} \ge \beta_{\textup{cr}}$, i.e., the inequality in \eqref{eq:power_bound} holds. 

From the above, Theorem \ref{thm:power_compare} holds.

\section{Proof of Theorem \ref{thm:sampleSizeRatio}} \label{s:proofSampSizeRatio}

Let $N_{\text{cr}}$ and $N_{\text{rr}}$ denote the sample sizes necessary to achieve power $\gamma$ under complete randomization and rerandomization, respectively, as provided by Theorem \ref{thm:sampSizeRerand}. We have:
\begin{align*}
	\frac{N_{\text{rr}}}{N_{\text{cr}}} = 
	    \left( 
	    \frac{\nu_{1-\osl}(\tilde{R}^2) \tilde{V}^{1/2} - \nu_{1 - \gamma}(R^2) V^{1/2} }{z_{1-\osl} \tilde{V}^{1/2} - z_{1 - \gamma} V^{1/2}} \right)^2. 
\end{align*}

We first consider the case with $\tilde{V} = V$. In this case, the ratio simplifies to 
\begin{align*}
    \frac{N_{\text{rr}}}{N_{\text{cr}}} = 
	    \left(
	    \frac{\nu_{1-\osl}(R^2) V^{1/2} - \nu_{1 - \gamma}(R^2) V^{1/2} }{z_{1-\osl} V^{1/2} - z_{1 - \gamma} V^{1/2}} \right)^2
	    = 
	    \left(
	    \frac{\nu_{1-\osl}(R^2) - \nu_{1 - \gamma}(R^2) }{z_{1-\osl} - z_{1 - \gamma}} \right)^2. 
\end{align*}
From Theorem \ref{thm:disp_std_rem}, when $\gamma\ge \osl$, 
we have 
$\nu_{1-\osl}(R^2) - \nu_{1 - \gamma}(R^2) \le z_{1-\osl} - z_{1 - \gamma}$. 
This immediately implies that $N_{\text{rr}}/N_{\text{cr}}\leq 1$. 

We then consider the case with $\gamma \ge 0.5$. 
From \citet[][Theorem 2]{li2018asymptotic}, both  $\nu_{1-\osl}(R^2)$ and $\nu_{\gamma}(R^2)$ are decreasing in $R^2$ and increasing in $K$ and $a$, 
and they are less than or equal to $z_{1-\osl}$ and $z_{\gamma}$, respectively. 
Consequently, 
\begin{align*}
    \nu_{1-\osl}(\tilde{R}^2) \tilde{V}^{1/2} - \nu_{1 - \gamma}(R^2) V^{1/2}
    & 
    = 
    \nu_{1-\osl}(\tilde{R}^2) \tilde{V}^{1/2} + \nu_{\gamma}(R^2) V^{1/2}
    \\
    & \le 
    z_{1-\osl} \tilde{V}^{1/2} + z_{\gamma} V^{1/2}\\
    & = z_{1-\osl} \tilde{V}^{1/2} - z_{1-\gamma} V^{1/2}, 
\end{align*}
which immediately implies that $N_{\text{rr}}/N_{\text{cr}} \leq 1$. 
Moreover, 
$\nu_{1-\osl}(\tilde{R}^2) \tilde{V}^{1/2} - \nu_{1 - \gamma}(R^2) V^{1/2}
= 
\nu_{1-\osl}(\tilde{R}^2) \tilde{V}^{1/2} + \nu_{\gamma}(R^2) V^{1/2}$ is decreasing in $R^2$ and increasing in $K$ and $a$. 
This then implies that $N_{\text{rr}}/N_{\text{cr}}$ is decreasing in $R^2$ and increasing in $K$ and $a$. 

From the above, Theorem \ref{thm:sampleSizeRatio} holds.

\section{Type-I Error Rates under Rerandomization and Complete Randomization}\label{sec:compare_actual_typeI}
From Theorem \ref{thm:powerRerand}, the type-I error rates of the $\alpha$-level tests under rerandomization and complete randomization are, respectively,  
\begin{align*}
	\alpha_{\textup{rr}} = \overline{\mathcal{V}}_{R^2}\left( \frac{\nu_{1-\osl} (\tilde{R}^2) \tilde{V}^{1/2}}{V^{1/2}} \right)
	\ \ \ \text{ and } \ \ \  
	\alpha_{\textup{cr}} = 
	\overline{\Phi}\left\{ \frac{z_{1-\osl} \tilde{V}^{1/2}}{V^{1/2} } \right\}.
\end{align*}
It is challenging to compare $\alpha_{\textup{rr}}$ and $\alpha_{\textup{cr}}$ theoretically, due to the possible difference between $R^2$ and $\tilde{R}^2$. Nevertheless, we conjecture that $\alpha_{\textup{rr}}\le \alpha_{\textup{cr}}$, due to the same reason discussed in \S \ref{sec:power_samplesize} of the main paper: We conservatively estimate the true distribution of $N^{1/2}(\hat{\tau} - \tau)$ by the same amount under both designs, and the true distribution under rerandomization is more concentrated around zero. See Appendix \ref{sec:numerical_power_rerand_less} for a numerical study.

Below we consider the limiting case with $a=0$, which can be a good approximation for rerandomization with a small threshold as suggested in \citet{morgan2012rerandomization}. 
When $a=0$, $\overline{\mathcal{V}}_{R^2}$ simplifies to the survival function of $\mathcal{N}(0, 1-R^2)$, 
and $\nu_{1-\osl} (\tilde{R}^2)$ simplifies to the $(1-\alpha)$-quantile of $\mathcal{N}(0, 1-\tilde{R}^2)$. 
Consequently, the type-I error rate under rerandomization simplifies to 
\begin{align}\label{eqn:power_rr_0}
	\alpha_{\textup{rr}}  =  \overline{\Phi}\left\{ \frac{z_{1-\osl}\tilde{V}^{1/2}(1-\tilde{R}^2)^{1/2}}{V^{1/2}(1-R^2)^{1/2}} \right\} = 
	\overline{\Phi}\left\{ \frac{z_{1-\osl}(\tilde{V}-V R^2)^{1/2}}{(V-VR^2)^{1/2}} \right\}, 
\end{align}
where the last equality holds because $VR^2 = \tilde{V} \tilde{R}^2$. 
Because 
\begin{align*}
	\frac{\tilde{V}-V R^2}{V-VR^2} - \frac{\tilde{V}}{V}
	& = 
	\frac{VR^2(\tilde{V}-V)}{V(V-VR^2)} \ge 0, 
\end{align*}
we have $\alpha_{\textup{rr}}  \le \alpha_{\textup{cr}}$. 
Moreover, if $R^2>0$ and $\tilde{V}> V$,  $\alpha_{\textup{rr}}$ is strictly less than $\alpha_{\textup{cr}}$. 

When rerandomization has a small threshold and the treatment effect has a small size, the power of rerandomization can be close to that in \eqref{eqn:power_rr_0}. 
This implies that the power under rerandomization can be smaller than that under complete randomization.

\section{Rerandomization Can Be Less Powerful Than Complete Randomization}\label{sec:numerical_power_rerand_less}

Theorem \ref{thm:power_compare} establishes that testing power is greater under rerandomization than complete randomization when $\tilde{V} = V$ or when the treatment effect $\tau \geq \nu_{1-\osl}(\tilde{R}^2) \tilde{V}^{1/2} N^{-1/2}$. Otherwise, rerandomization may exhibit less testing power than complete randomization, because inference under rerandomization can be more conservative than that under complete randomization, as discussed in  Appendix \ref{sec:compare_actual_typeI}. This additional conservativeness decreases power, but the additional precision from rerandomization increases power. To illustrate this trade-off, we consider a simple numerical example below. 

Suppose that $V = 1$ and $R^2 = 0.5$. We consider two cases, which correspond to $\tilde{V} = V$ and $\tilde{V} > V$, as in Theorem \ref{thm:power_compare}. In Case (i), the probability limits of our estimators are the same as the corresponding truth, i.e., $\tilde{V} = V$ and $\tilde{R}^2 = R^2$. Meanwhile, in Case (ii), inference is asymptotically conservative, in the sense that $\tilde{V}/V= 1.1 > 1$ and $\tilde{R}^2 = VR^2/\tilde{V} \approx 0.455$. Figure \ref{fig:power_example} shows the power (as in Theorem \ref{thm:powerRerand}) of the $0.05$-level one-sided test based on
the mean-difference estimator under complete randomization and rerandomization, with the scaled average treatment effect $\tau N^{1/2}/V^{1/2}$ ranging from $0$ to $0.5$. From Fig. \ref{fig:power_example}(a), when inference is not conservative, the power at $\tau=0$ equals the nominal level $0.05$ under both designs, and rerandomization provides better power than complete randomization. From Fig. \ref{fig:power_example}(b), when we can only conduct conservative inference, the power at $\tau=0$ is less than the nominal $0.05$ under both designs, and moreover, the test is more conservative under rerandomization. However, the power of rerandomization quickly passes that of complete randomization when $N^{1/2}\tau/V^{1/2}$ is not too small, and the cutoff for $N^{1/2}\tau/V^{1/2}$ is much smaller than the theoretical cutoff $\nu_{1-\osl} (\tilde{R}^2) \tilde{V}^{1/2}/V^{1/2}\approx 1.27$ in Theorem \ref{thm:power_compare}. 
In addition, in Fig. \ref{fig:power_example}(c) we also consider Case (iii), which is the same as Case (ii), except that our inference is much more conservative with $\tilde{V} = 10$. 
In this case, the power of rerandomization also passes that of complete randomization when $N^{1/2}\tau/V^{1/2}$ is not too small, 
and the cutoff becomes closer to the theoretical cutoff $\nu_{1-\osl} (\tilde{R}^2) \tilde{V}^{1/2}/V^{1/2}\approx 5.07$.

\begin{figure}
    \centering
    \begin{subfigure}{0.33\textwidth}
    \centering
        \includegraphics[width=\textwidth]{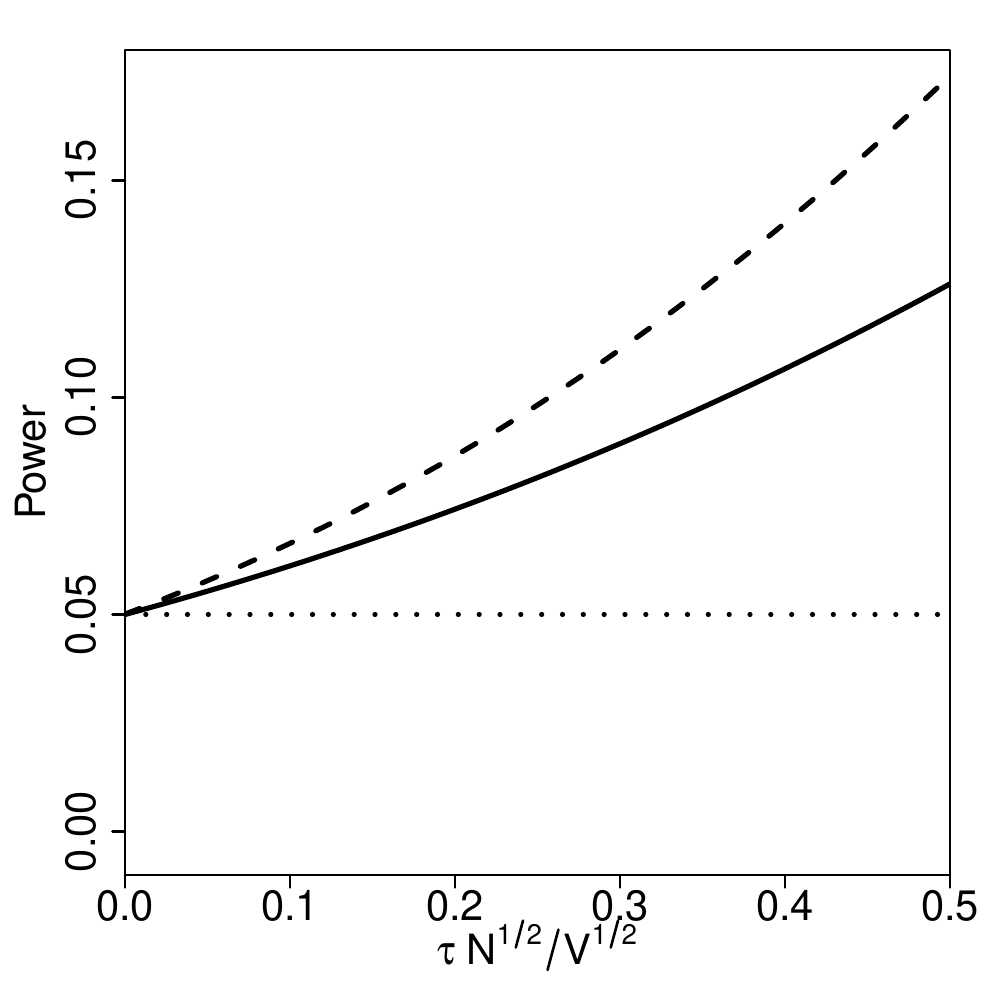}
        \caption{Case (i): $\tilde{V} = V$.}
    \end{subfigure}%
    \begin{subfigure}{0.33\textwidth}
    \centering
        \includegraphics[width=\textwidth]{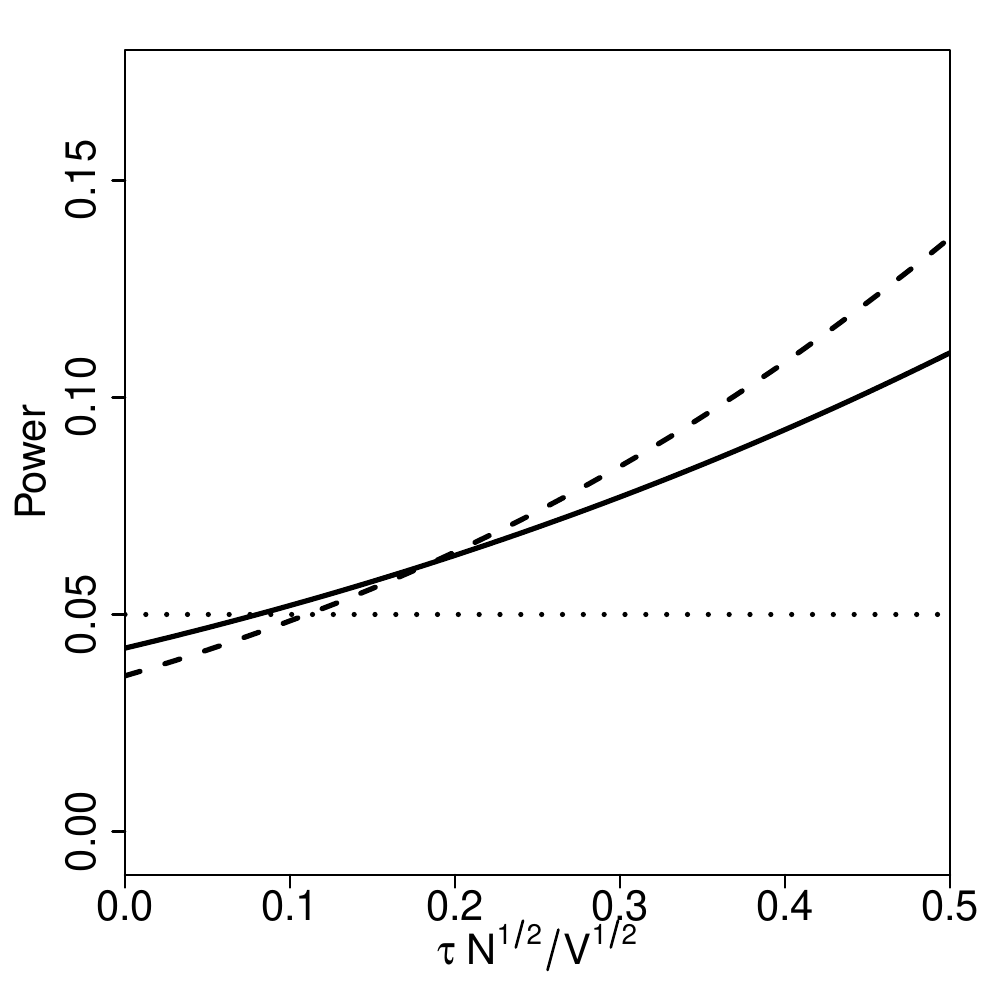}
        \caption{Case (ii): $\tilde{V} = 1.1 V$
        }
    \end{subfigure}%
    \begin{subfigure}{0.33\textwidth}
    \centering
        \includegraphics[width=\textwidth]{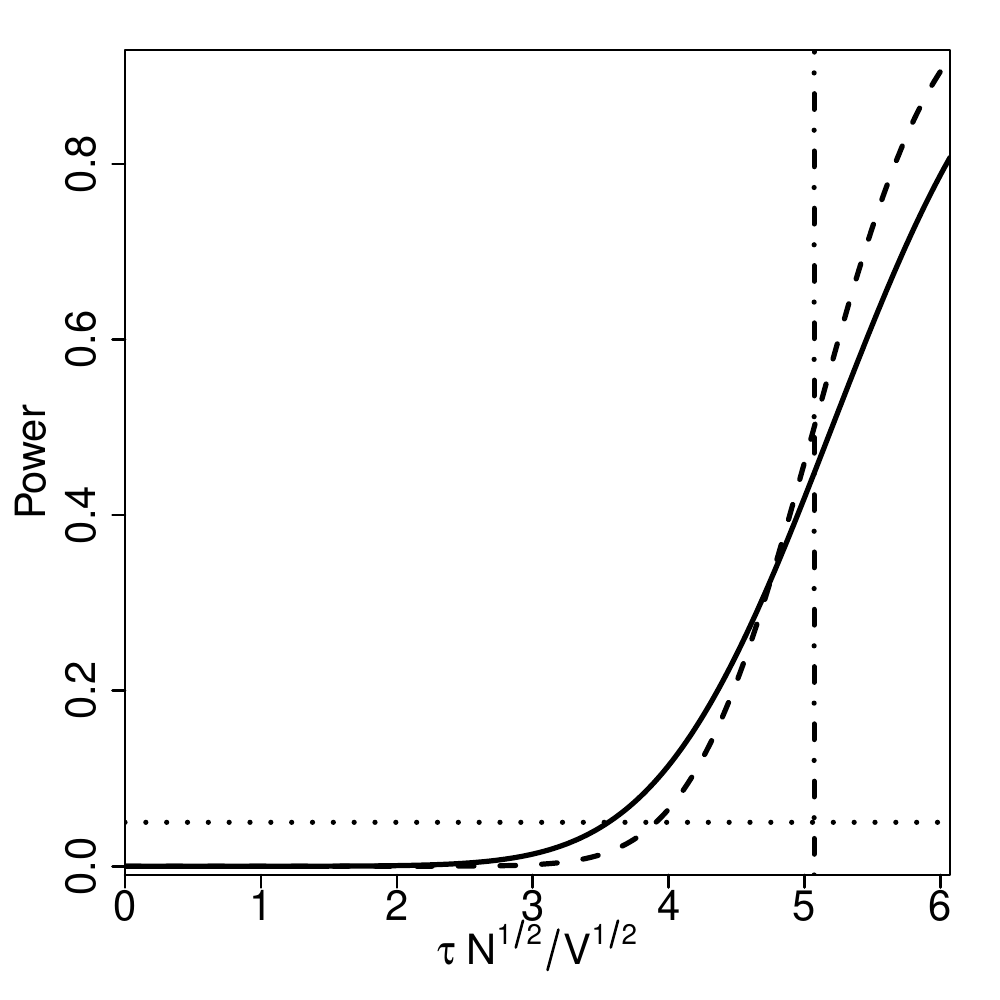}
        \caption{Case (iii): $\tilde{V} = 10 V$
        }
    \end{subfigure}
    \caption{
    Lower bound of the power for the 0.05-level two-sided test using the mean-difference estimator under complete randomization (solid line) and rerandomization (dashed line). 
    The dotted horizontal line denotes 0.025.
    The dotdash vertical line in (c) refers to the threshold $\nu_{1-\osl} (\tilde{R}^2) \tilde{V}^{1/2}$ as in Theorem \ref{thm:power_compare}. 
    }
    \label{fig:power_example}
\end{figure}

\section{Comparing Sample Size for Rerandomization and Complete Randomization} \label{s:simulations}

\subsection{Setup and Parameters}

Theorem \ref{thm:sampleSizeRatio} establishes, for any significance level $\alpha$ and power $\gamma$, the ratio between the sample size needed under rerandomization to achieve power $\gamma$ and the sample size needed under complete randomization:  
\begin{align}
    \frac{N_{\text{rr}}}{N_{\text{cr}}} = 
        \left\{ 
        \frac{\nu_{1-\osl}(\tilde{R}^2) \tilde{V}^{1/2} - \nu_{1 - \gamma}(R^2) V^{1/2} }{z_{1-\osl} \tilde{V}^{1/2} - z_{1 - \gamma} V^{1/2}} \right\}^2 \label{eqn:sampSizeRatio}
\end{align}
where $V$ denotes the variance defined in (\ref{eqn:compRandDistribution}), $\tilde{V}$ denotes the probability limit of its corresponding estimator defined in (\ref{eqn:V_tilde}), $\nu_{\alpha}(\rho^2)$ denotes the $\alpha$-quantile of the distribution $(1 - \rho^2)^{1/2} \epsilon_0 + \rho L_{K,a}$ in (\ref{eqn:rerandDistribution}), and $z_\alpha$ denotes the $\alpha$-quantile of the standard Normal distribution. This ratio depends on the number of covariates $K$, the correlation $R^2$, and the rerandomization threshold $a$. In this section, we present a simulation study to better understand how $N_{\text{rr}}/N_{\text{cr}}$ behaves for different $K$, $R^2$, and $a$, as well as varying levels of treatment effect heterogeneity. The smaller the ratio, the larger the benefits of rerandomization over complete randomization in terms of sample size.

We consider the dimension of covariates $K \in \{1, 10, 20, \dots, 100\}$, correlation $R^2 \in \{0, 0.1, \dots, 0.9\}$, and acceptance probabilities $p_a \in \{0.001, 0.01, 0.1\}$, where $p_a = \pr(M \leq a)$, i.e., the probability that a given randomization fulfills the rerandomization criterion. For simplicity, we focus on significance level $\alpha = 0.05$ and power $\gamma = 0.8$, both of which are common values in the power analysis literature. We found that results were consistent across other values of $\alpha$ and $\gamma$. The sample size ratio $N_{\text{rr}}/N_{\text{cr}}$ in Theorem \ref{thm:sampleSizeRatio} also depends on the non-Normal quantiles $\nu_{1-\osl}(\tilde{R}^2)$ and $\nu_{1-\gamma}(R^2)$, which in turn depend on $K$, $R^2$, and $p_a$. For each $K$, $R^2$, and $p_a$, we simulate $10^6$ draws from the non-Normal distributions $(1 - \tilde{R}^2)^{1/2} \epsilon_0 + \tilde{R} L_{K,a}$ and $(1 - R^2)^{1/2} \epsilon_0 + R L_{K,a}$, in order to approximate the quantiles $\nu_{1-\osl}(\tilde{R}^2)$ and $\nu_{1-\gamma}(R^2)$, respectively. Note that, when there is no treatment effect heterogeneity, $\tilde{R}^2 = R^2$; and when there is treatment effect heterogeneity, $\tilde{R}^2 = VR^2/\tilde{V}$.

We will first consider the case where there is no treatment effect heterogeneity, such that $S^2_{\tau} = 0$ and thus $\tilde{V} = V$. As a result, the sample size ratio does not depend on the potential outcome variances $S^2_1$ and $S^2_0$. Then we will consider the case where there is treatment effect heterogeneity, and thus $S^2_1$, $S^2_0$, and $S^2_{\tau}$ will affect the sample size ratio.

\subsection{Without Treatment Effect Heterogeneity} \label{ss:additivitySimulations}

Figure \ref{fig:sampleSizeRatios} displays $N_{\text{rr}}/N_{\text{cr}}$ for different combinations of $K$, $R^2$, and $p_a$. There are several observations from Fig. \ref{fig:sampleSizeRatios}, all of which validate the statements made in Theorem \ref{thm:sampleSizeRatio}. First, the ratio is always below 1. This confirms that there are always sample size benefits when running a rerandomized experiment, compared to a completely randomized experiment, at least when $\gamma \geq 0.5$. Furthermore, the ratio is decreasing in $R^2$ and increasing in $K$ and $p_a$. This demonstrates that the sample size benefits of rerandomization are large when a stringent criterion is used to balance a few covariates that are strongly related with experimental outcomes. More generally, Fig. \ref{fig:sampleSizeRatios} shows that rerandomization can lead to substantial sample size gains: For example, if $p_a = 0.001$, the median of the ratios in Fig. \ref{fig:sampleSizeRatios} is 0.75, and if further $R^2 \geq 0.3$ and $K \leq 50$, the median of the ratios is 0.58. This suggests that rerandomization can reduce sample size by 25\% to 40\%, compared to complete randomization.

\begin{figure}
\centering
    \includegraphics[scale=0.4]{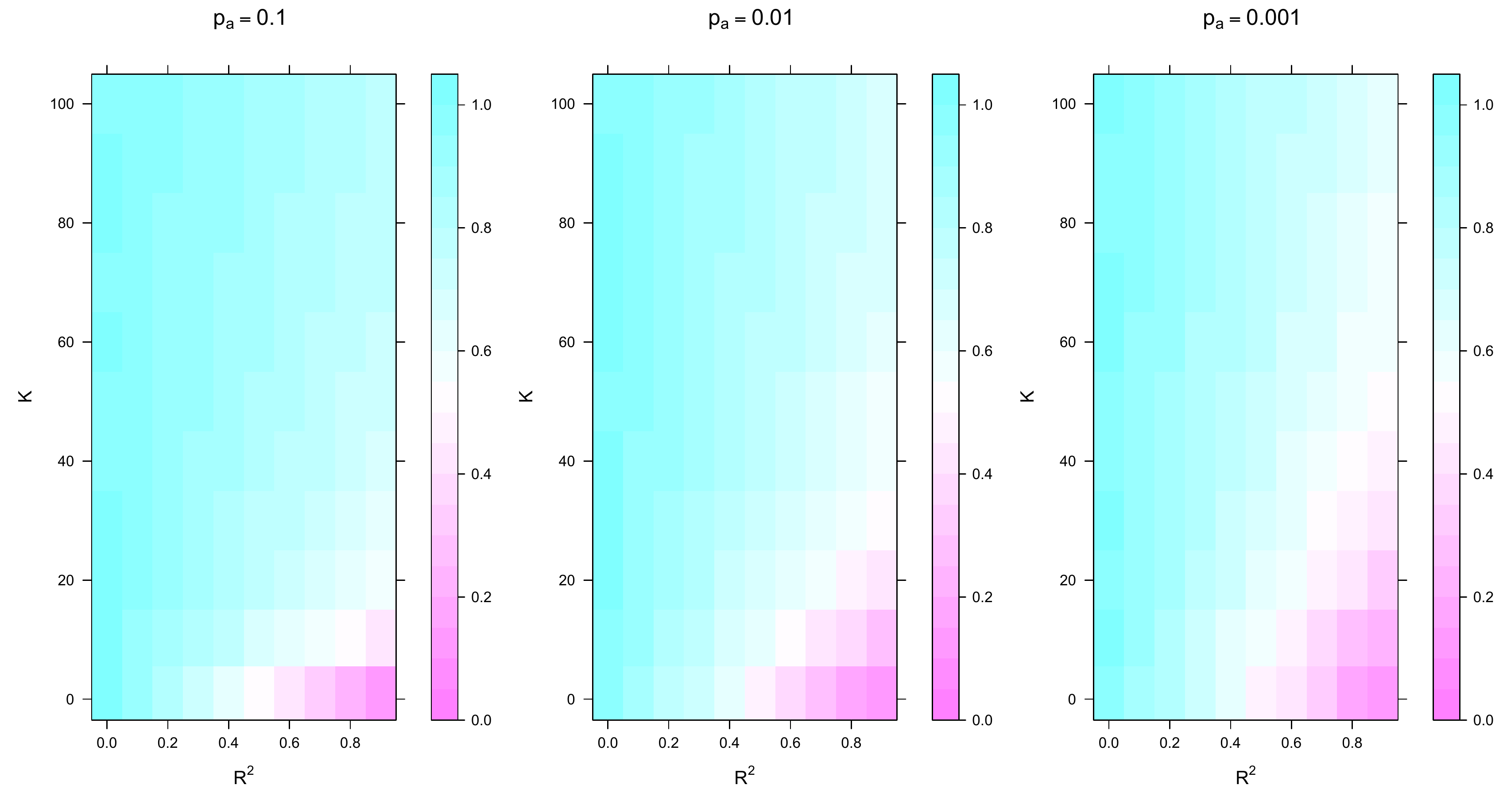}
    \caption{The ratio $N_{\text{rr}}/N_{\text{cr}}$ for different $K$, $R^2$, and $p_a$, when there is no treatment effect heterogeneity.}
    \label{fig:sampleSizeRatios}
\end{figure}

Because there is no treatment effect heterogeneity and thus $\tilde{V} = V$, the results in Fig. \ref{fig:sampleSizeRatios} hold for any degree of potential outcome variation $S^2_1$ and $S^2_0$ and any average treatment effect $\tau$, as shown by (\ref{eqn:sampSizeRatio}). According to Theorem \ref{thm:sampSizeRerand}, the sample size needed to achieve power $\gamma$ is increasing in $S^2_1, S^2_0$ and decreasing in $\tau$ for both rerandomization and complete randomization, which is a special case of rerandomization when $a = \infty$ or $R^2 = 0$. Thus, as $S^2_1$ and $S^2_0$ increase and as $\tau$ decreases, the nominal sample size gains from rerandomization can be arbitrarily large, at least as long as the desired power is greater than 50\%. For example, consider conducting an experiment where we desire 80\% power and $S_1 = S_0 = 4$. When $\tau = 2$, i.e. half a standard deviation, which is a medium effect according to a commonly used effect size rule-of-thumb by \cite{cohen2013statistical}, the necessary sample size under complete randomization is $N_{\text{cr}} \approx 99$. In this case, one may view the results in Fig. \ref{fig:sampleSizeRatios} as modest: If the covariates are modestly related to the outcomes ($R^2 = 0.3$), there are a moderate amount of covariates ($K = 50$), and we use a somewhat stringent rerandomization criterion ($p_a = 0.01$), we would expect only an approximately 13.3\% reduction in sample size under rerandomization, or approximately 14 fewer subjects. However, when we consider a small effect $\tau = 0.8$, or one-fifth of a standard deviation, $N_{\text{cr}} \approx 619$. In this scenario, a 13.3\% sample size reduction, or approximately 83 fewer subjects, may be considered quite large.

\subsection{With Treatment Effect Heterogeneity} \label{ss:heterogeneitySimulations}

Now we consider the case where $S_{\tau}^2 > 0$, and thus power and sample size will depend on the potential outcome variances $S^2_1$ and $S^2_0$ in addition to $S^2_{\tau}$. To our knowledge the literature has not discussed how treatment effect heterogeneity affects the power of completely randomized experiments, let alone rerandomized experiments. First we will discuss how treatment effect heterogeneity affects power and sample size for complete randomization and rerandomization, and then we will discuss how heterogeneity affects the sample size ratio $N_{\text{cr}}/N_{\text{rr}}$.

The asymptotic power for rerandomized experiments is characterized by Theorem \ref{thm:powerRerand}; for fixed values of $S^2_1$, $S^2_0$, and $\tau$, the power is increasing in $S^2_{\tau}$ as long as $\tau \geq \nu_{1-\osl}(\tilde{R}^2) \tilde{V}^{1/2} N^{-1/2}$; otherwise, it is decreasing in $S^2_{\tau}$. Because complete randomization is a special case of rerandomization, a similar result holds for completely randomized experiments, where power is increasing in $S^2_{\tau}$ as long as $\tau \geq z_{1-\osl} \tilde{V}^{1/2} N^{-1/2}$. Thus, treatment effect heterogeneity has a beneficial effect on power for relatively large effect sizes but an adverse effect for relatively small effect sizes. Furthermore, because $\nu_{1-\osl}(\tilde{R}^2) \leq z_{1-\osl}$ for all $\alpha \in (0, 0.5]$, power is increasing in $S^2_{\tau}$ for a wider range of effect sizes under rerandomization than under complete randomization. In other words, treatment effect heterogeneity is less likely to adversely affect power under rerandomization than under complete randomization. We illustrate this point further with numerical examples in Section \ref{s:numericalExamples}.

However, the results in the previous paragraph only hold when the variances $S^2_1$ and $S^2_0$ are fixed, and it's difficult to imagine a scenario where an increase in $S_\tau^2$ does not also increase $S^2_1$, which adversely affects power. For example, previous works studying treatment effect heterogeneity have considered data-generating models like $Y_i(1) = Y_i(0) + \tau + \sigma_{\tau} Y_i(0)$ for some heterogeneity parameter $\sigma_{\tau}$ \citep{ding2016randomization,branson2020sampling}. In this case, $S_1^2 = (1 + \sigma_{\tau})^2 S_0^2$ and $S^2_{\tau} = \sigma_{\tau}^2 S_0^2$, and thus more heterogeneity increases both $S^2_1$ and $S^2_{\tau}$. Because power tends to be decreasing in $S^2_1$ and $S^2_{\tau}$ for large $S^2_1$, this suggests that treatment effect heterogeneity generally has an adverse effect on power.

Meanwhile, from Theorem \ref{thm:sampSizeRerand}, the sample size necessary to achieve power $\gamma$ is decreasing in $S^2_\tau$ as long as $\gamma \geq 0.5$. Thus, for a fixed $\tau$ and power $\gamma \geq 0.5$, treatment effect heterogeneity has a beneficial effect on sample size for both completely randomized and rerandomized experiments. Indeed, this is analogous to the aforementioned results on power, because $\gamma \geq 0.5$ when $\tau \geq z_{1-\osl} \tilde{V}^{1/2} N^{-1/2}$ for completely randomized experiments and when $\tau \geq \nu_{1-\osl}(\tilde{R}^2) \tilde{V}^{1/2} N^{-1/2}$ for rerandomized experiments. However, if increased heterogeneity results in increased potential outcome variation, this may have an adverse affect on sample size, in the sense that increased $S^2_1$ will in turn increase sample size, as communicated in Theorem \ref{thm:sampSizeRerand}. These results are also illustrated further in Section \ref{s:numericalExamples}.

Finally, we consider how treatment effect heterogeneity affects the sample size ratio $N_{\text{rr}}/N_{\text{cr}}$. As communicated in Theorem \ref{thm:sampleSizeRatio}, when $S_{\tau} > 0$, the sample size ratio depends on two conservative estimators: $\tilde{V}$, which impacts inference for complete randomization and rerandomization, and $\tilde{R}^2 = V R^2 / \tilde{V}$, which only impacts inference for rerandomization. As a result, the sample size under rerandomization $N_{\text{rr}}$ is doubly-impacted by the conservative estimator $\tilde{V}$, thereby diminishing the sample size benefits of rerandomization when there is treatment effect heterogeneity. To demonstrate, let's consider an experiment where $p_1 = p_0 = 0.5$ and $S_1 = S_0 = 4$, the significance level is $\alpha = 0.05$, the desired power is $\gamma = 0.8$, and acceptance probability is $p_a = 0.001$. Figure \ref{fig:heteroSampSizeRatio} shows the resulting $N_{\text{rr}}/N_{\text{cr}}$ for treatment effect heterogeneity $S_{\tau} \in \{2, 4, 6\}$ for different $K$ and $R^2$. Many of the results from Fig. \ref{fig:sampleSizeRatios} still hold: $N_{\text{rr}}/N_{\text{cr}}$ is decreasing in $R^2$, increasing in $K$, and always below 1, as established by Theorem \ref{thm:sampleSizeRatio}. However, we see that this ratio is increasing in the treatment effect heterogeneity $S^2_{\tau}$; thus, rerandomization has less ability to reduce sample sizes when there is large treatment effect heterogeneity. When $S^2_\tau = 2$, $N_{\text{rr}}/N_{\text{cr}}$ is on average 2.4\% greater than when $S^2_{\tau} = 0$; when $S^2_{\tau} = 4$, $N_{\text{rr}}/N_{\text{cr}}$ is on average 9.9\% greater; and when $S^2_{\tau} = 6$, $N_{\text{rr}}/N_{\text{cr}}$ is on average 23.7\% greater. However, $S_\tau = 6$ denotes unusually large effect heterogeneity, because it is larger than $S_1$ and $S_0$. Furthermore, it's important to remember that the ratio result in Theorem \ref{thm:sampleSizeRatio} holds for any $\tau$; thus, as discussed in Section \ref{ss:additivitySimulations}, when $\tau$ is small, $N_{\text{cr}}$ will be large, making even small multiplicative sample size reductions possibly worthwhile.

\begin{figure}
    \centering
    \includegraphics[scale=0.4]{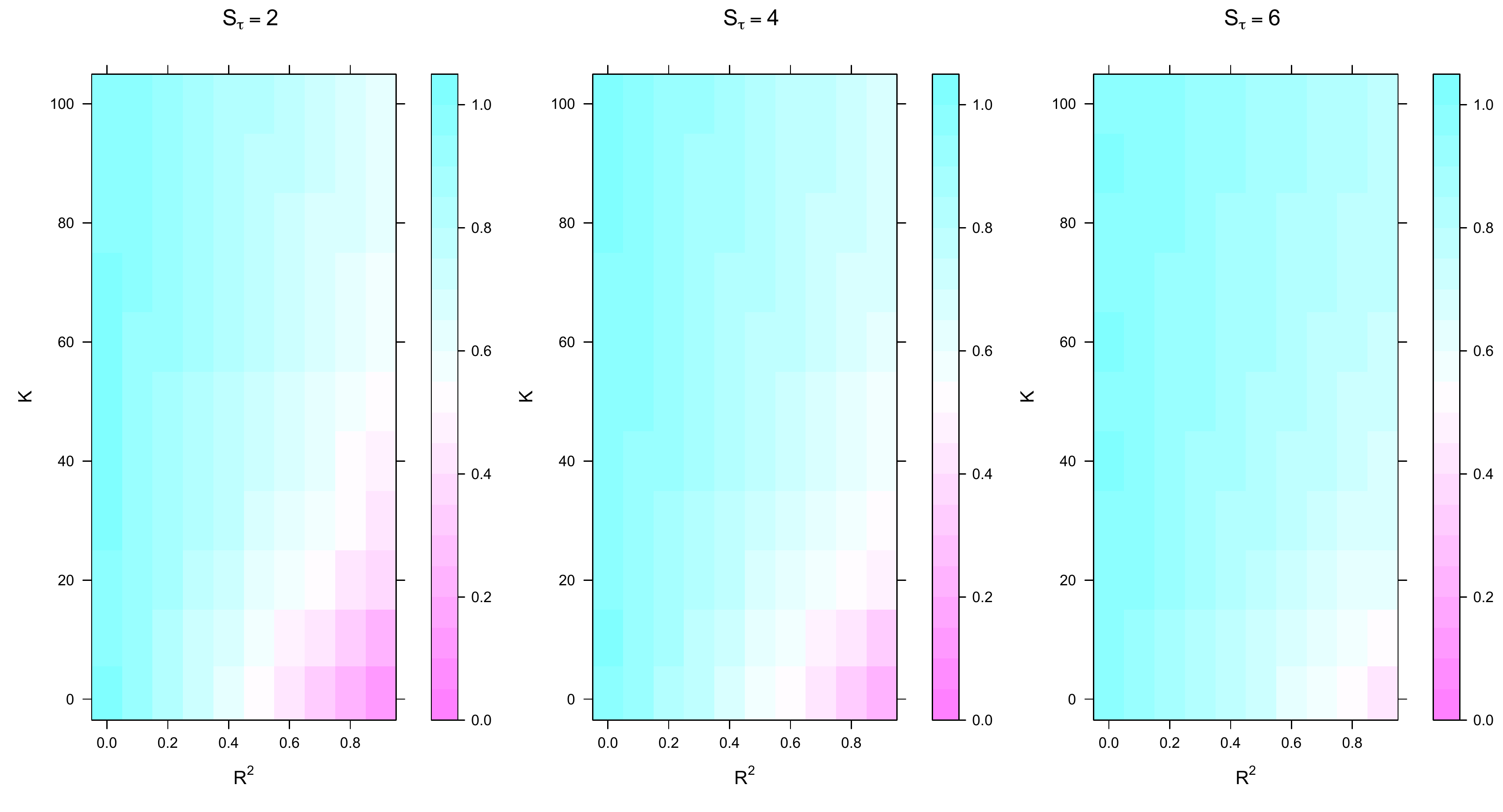}
    \caption{The sample size ratio $N_{\text{rr}}/N_{\text{cr}}$ when running an experiment with $p_1 = p_0 = 0.5$ and $S_1 = S_0 = 4$, where $\alpha = 0.05$, $\gamma = 0.8$, and $p_a = 0.001$. The three panels correspond to heterogeneity $S_{\tau} \in \{2, 4, 6\}$.}
    \label{fig:heteroSampSizeRatio}
\end{figure}

Furthermore, because the potential outcome variances $S_1^2$ and $S_0^2$ also impact power and sample size, they may also impact the ratio $N_{\text{rr}}/N_{\text{cr}}$. Let us consider the same example in Fig. \ref{fig:heteroSampSizeRatio}, but where we fix $K = 10$ and vary $S_1, S_0, S_{\tau}$, and $R^2$. Figure \ref{fig:heteroSampSizeRatio2} shows the ratio for different values of $S_1, S_0, S_{\tau}$, and $R^2$; in Fig. \ref{fig:heteroSampSizeRatio2} we restricted the color scale to $[0.25, 1.0]$ to more easily see trends for this plot. We see that as $S_1$ and $S_0$ increase, $N_{\text{rr}}/N_{\text{cr}}$ somewhat decreases, signaling that rerandomization can lead to larger sample size reductions when potential outcome variances are high. However, it appears that treatment effect heterogeneity has a relatively larger adverse impact on these sample size reductions; in other words, there is more variation with respect to the vertical axis in Fig. \ref{fig:heteroSampSizeRatio2} than the horizontal axis. Thus, if higher treatment effect heterogeneity in turn induces higher potential outcome variation, the adverse effects of heterogeneity will likely outweigh the beneficial effects of higher variation, thereby limiting the amount of sample size reductions we can expect from rerandomization.

\begin{figure}
    \centering
    \includegraphics[scale=0.4]{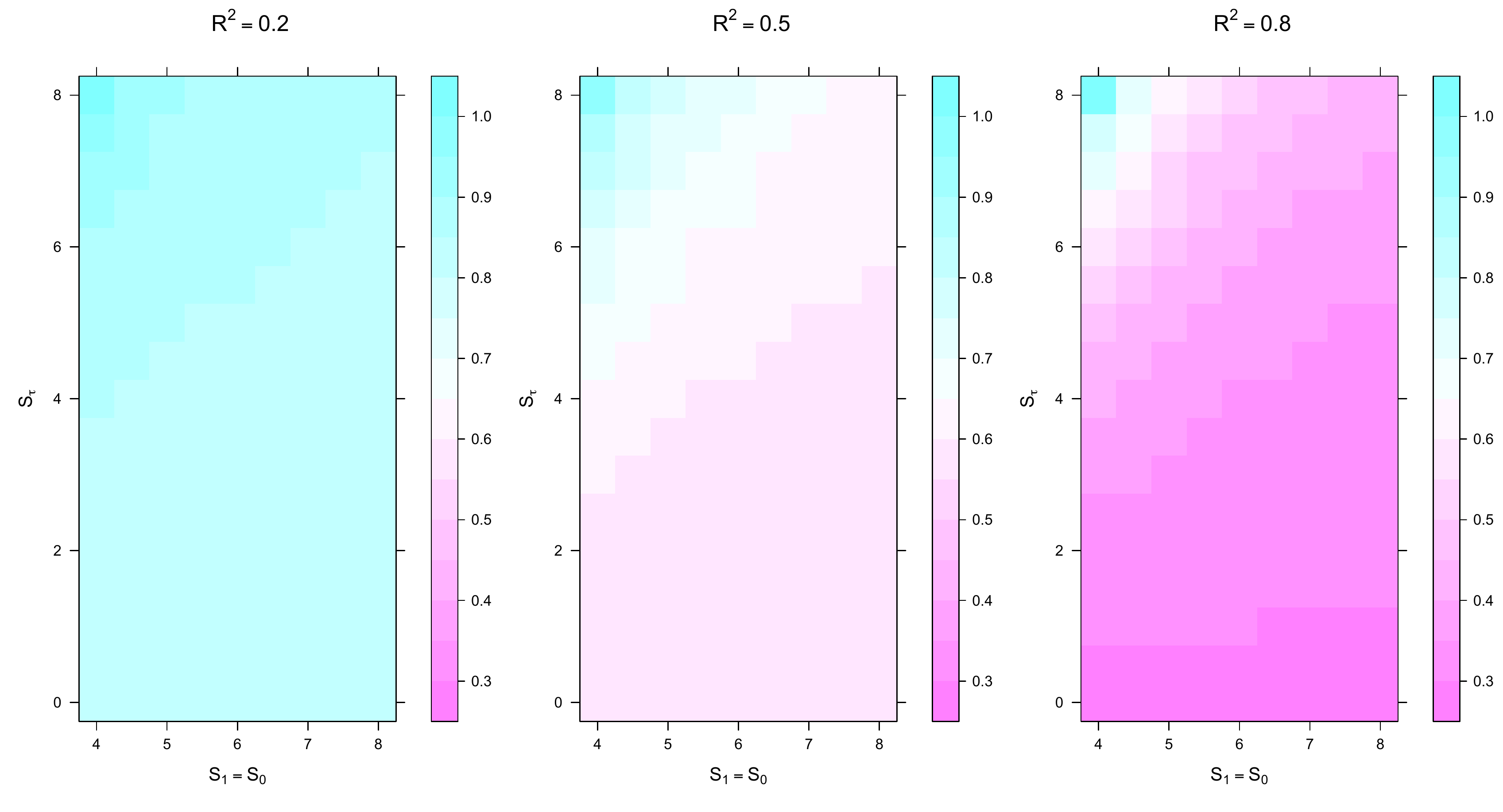}
    \caption{The sample size ratio $N_{\text{rr}}/N_{\text{cr}}$ when running an experiment with $p_1 = p_0 = 0.5$ and $S_1 = S_0 = 4$, where $\alpha = 0.05$, $\gamma = 0.8$, $K = 10$, and $p_a = 0.001$. The three panels correspond to $R^2 \in \{0.2, 0.5, 0.8\}$.
    }
    \label{fig:heteroSampSizeRatio2}
\end{figure}

\section{Additional Numerical Examples: How Treatment Effect Heterogeneity Affects Power and Sample Size in Randomized and Rerandomized Experiments} \label{s:numericalExamples}

In Section \ref{ss:heterogeneitySimulations}, we discussed how treatment effect heterogeneity affects power and sample size for completely randomized and rerandomized experiments according to Theorems \ref{thm:powerRerand} and \ref{thm:sampSizeRerand}. In this section, we present numerical examples to supplement that discussion. First we will present examples for completely randomized experiments, because, to our knowledge, the literature has not discussed how treatment effect heterogeneity affects power for completely randomized experiments, let alone rerandomized experiments. Then, we will discuss how these examples apply to rerandomized experiments.

As mentioned in Section \ref{ss:heterogeneitySimulations}, for fixed values of $S^2_1$, $S^2_0$, and $\tau$, testing power for completely randomized experiments is increasing in $S^2_{\tau}$ as long as $\tau \geq z_{1-\osl} \tilde{V}^{1/2} N^{-1/2}$, where $\tilde{V}$ is the probability limit of the estimator for $V$, defined in (\ref{eqn:compRandDistribution}). As a toy example, consider a completely randomized experiment where the proportions of treatment and control subjects are $p_1 = p_0 = 0.5$ and there are $N = 100$ subjects. Furthermore, say $S_1 = S_0 = 4$ and we use the ubiquitous variance estimator $\hat{V}_{\neyman} = p_1^{-1} s_1^2 + p_0^{-1}$ and thus $\tilde{V}_{\neyman} = p_1^{-1} S_1^2 + p_0^{-1} S_0^2$. Thus, power is increasing in $S^2_{\tau}$ as long as $\tau \geq z_{1-\osl} \cdot 8 \cdot 0.1 \approx 1.3$ for $\alpha = 0.05$. Figure \ref{fig:powerPlotSTau} shows power for this toy example when we vary $S_{\tau}$ for $\tau = 2$ and $\tau = 0.8$; we see that power is monotonically increasing in $S_{\tau}$ for the former but monotonically decreasing for the latter. This suggests that treatment effect heterogeneity has a beneficial effect on power for large effect sizes but an adverse effect for small effect sizes. Thus, if we incorrectly assume $S^2_{\tau} = 0$, which is common in power analyses, then we may underestimate power for large effect sizes but overestimate power for small effect sizes. Note that, in Fig. \ref{fig:powerPlotSTau}, $S_{\tau} = 8$ is very extreme; in this case, $V = 0$, and thus power is either 0 or 1, depending on whether $\tau \geq z_{1-\osl} \tilde{V}^{1/2} N^{-1/2}$.

Alternatively, we can consider a fixed average treatment effect $\tau$ and study power when we vary $S_1^2$ and $S_0^2$ in addition to $S_{\tau}^2$. Figures \ref{fig:powerPlotTau2} and \ref{fig:powerPlotTau0.8} show the power for the aforementioned toy example for $\tau = 2$ and $\tau = 0.8$, respectively, for different values of $S_1, S_0$ and $S_\tau$. When $\tau = 2$, power is monotonically increasing in $S_{\tau}$, as we saw in Fig. \ref{fig:powerPlotSTau}, but only for small values of $S_1$ and $S_0$; otherwise, it is monotonically decreasing. Meanwhile, we see that power is always monotonically decreasing in $S_{\tau}$ when $\tau = 0.8$. Furthermore, we see in Fig. \ref{fig:powerPlotTau2} and \ref{fig:powerPlotTau0.8} that power is monotonically decreasing in $S_1$ and $S_0$, which is already a well-known phenomenon in power analyses. Taking all of Fig. \ref{fig:powerPlots} together, treatment effect heterogeneity can have an adverse effect on power if $\tau$ is small or the potential outcome variances are large. It also appears that the potential outcome variances tend to have a more consequential effect on power than treatment effect variation. 

\begin{figure}
    \centering
    \begin{subfigure}[b]{0.32\textwidth}
     \includegraphics[width=\textwidth]{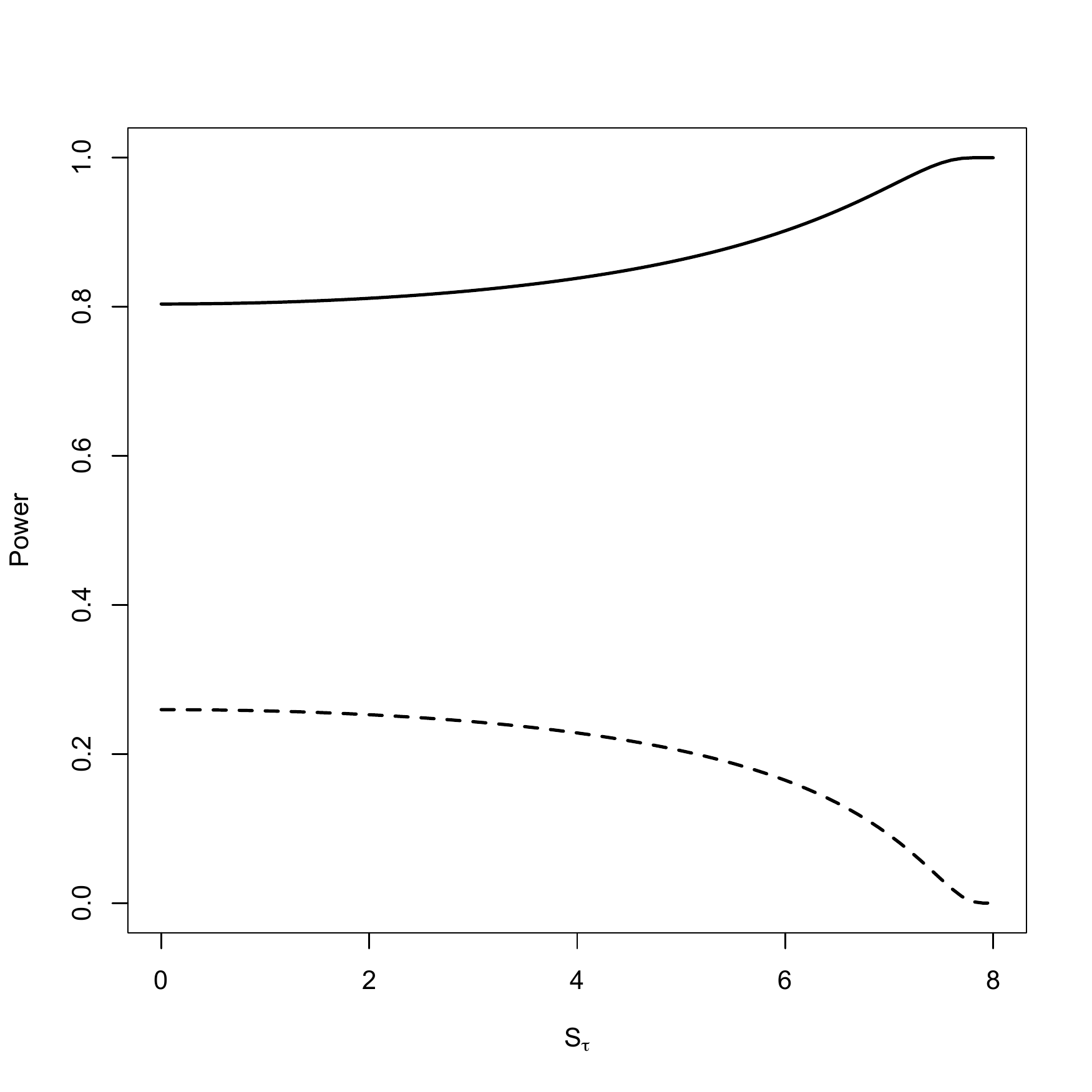}
     \caption{Power when varying $S_{\tau}$ for $S_1 = S_0 = 4$, when $\tau = 2$ (solid line) and $\tau = 0.8$ (dotted line).}
         \label{fig:powerPlotSTau}
    \end{subfigure}
    \hfill
     \begin{subfigure}[b]{0.32\textwidth}
         \centering
         \includegraphics[width=\textwidth]{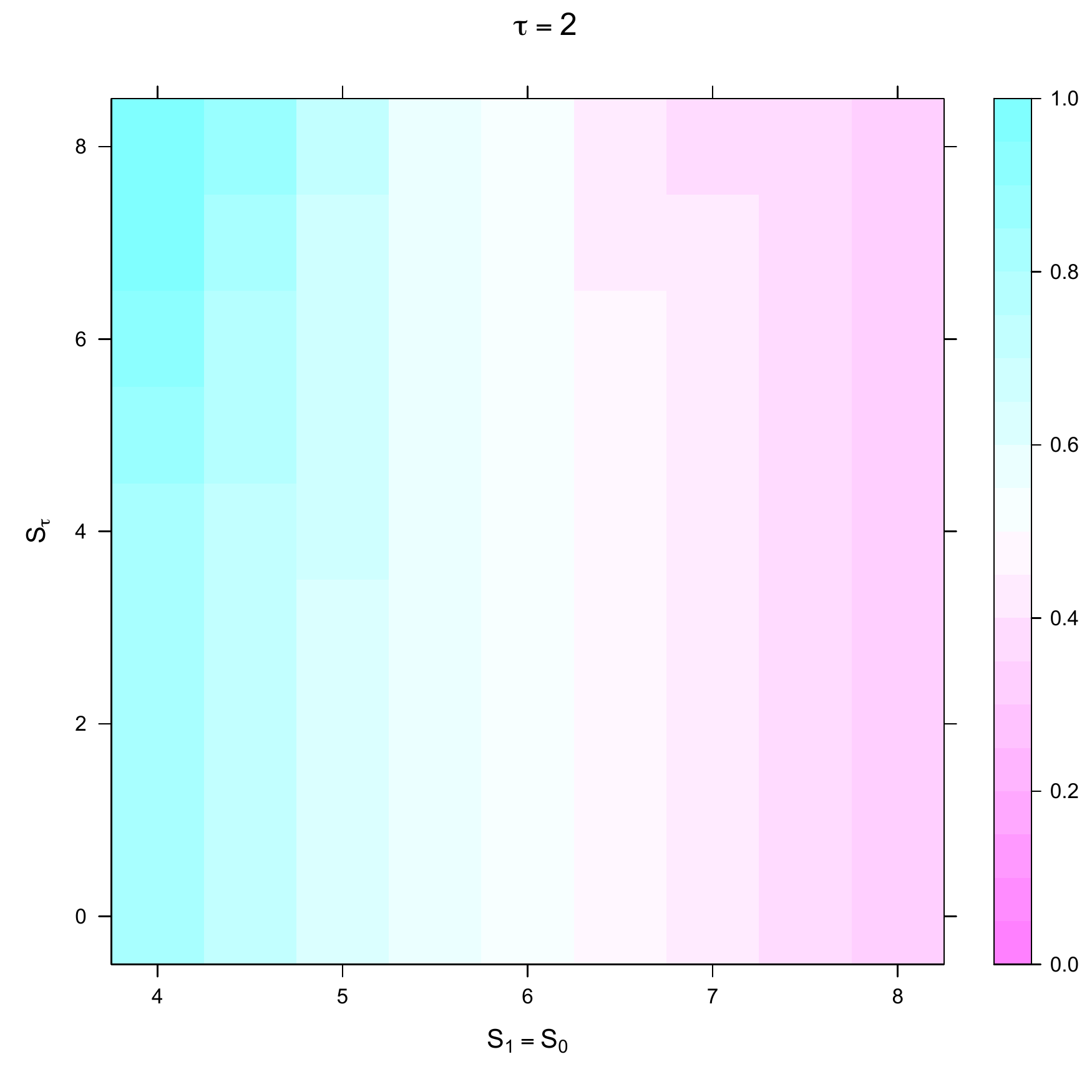}
         \caption{Power when varying $S_{\tau}$ as well as $S_1$ and $S_0$ for $\tau = 2$. Power ranges from 32.4\% to 100.0\%.}
         \label{fig:powerPlotTau2}
     \end{subfigure}
     \hfill
     \begin{subfigure}[b]{0.32\textwidth}
         \centering
         \includegraphics[width=\textwidth]{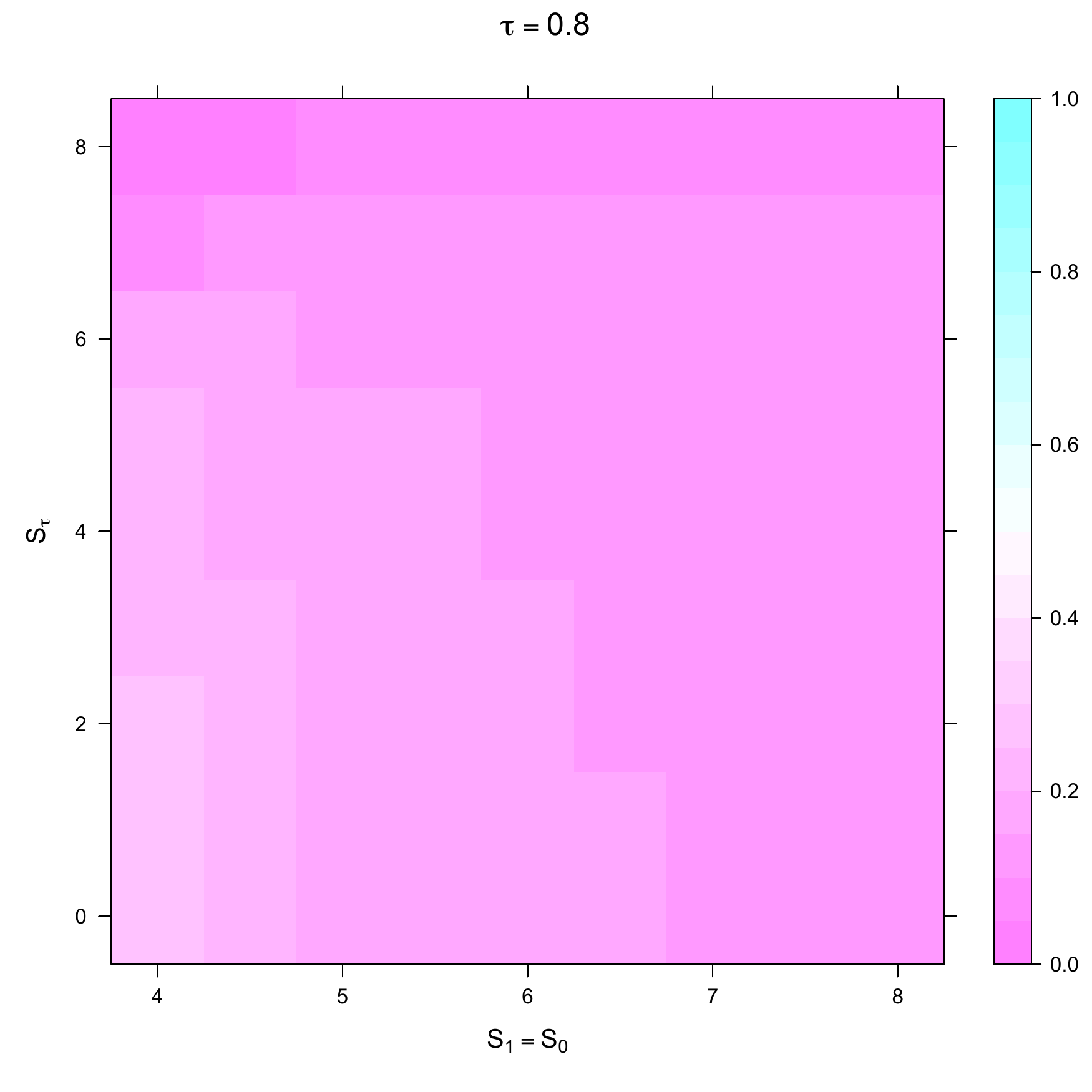}
         \caption{Power when varying $S_{\tau}$ as well as $S_1$ and $S_0$ for $\tau = 0.8$. Power ranges from 0.0\% to 26.0\%.}
         \label{fig:powerPlotTau0.8}
     \end{subfigure}
     \caption{Power of a completely randomized experiment under several scenarios when $p_1 = p_0 = 0.5$ and $N = 100$.}
     \label{fig:powerPlots}
\end{figure}

Now we'll consider sample size. When $\tau \geq z_{1-\osl} \tilde{V}^{1/2}N^{-1/2}$, power will always be greater than or equal to 50\%, as a consequence of Theorem \ref{thm:powerRerand}. Thus, for fixed $\tau$ and power $\gamma \geq 0.5$, treatment effect heterogeneity actually has a beneficial effect on the required sample size to achieve power $\gamma$, in the sense that larger $S^2_{\tau}$ leads to a smaller required sample size, as a consequence of Theorem \ref{thm:sampSizeRerand}. To demonstrate, let us again consider our toy example where $p_1 = p_0 = 0.5$, $S_1 = S_0 = 4$, and $\tau = 2$. Figure \ref{fig:sampSizePlotSTau} displays the sample size $N_{\text{cr}}$ to achieve power $\gamma = 0.8$ and $\gamma = 0.4$ under complete randomization for increasing values of $S_{\tau}$. When $\gamma = 0.8$, sample size is increasing in $S_{\tau}$, but it is decreasing in $S_{\tau}$ when $\gamma = 0.4$; this is analogous to our previous finding that power is increasing in $S_{\tau}$ only for treatment effects above a certain magnitude. Furthermore, note that in the extreme case when $S_{\tau} = 8$, $V = 0$, and thus $N_{\text{cr}}$ is no longer a function of $\gamma$.

Again we can also consider varying $S_1^2$ and $S_0^2$, in addition to $S^2_{\tau}$; the resulting sample size $N_{\text{cr}}$ required to achieve power $\gamma = 0.8$ is shown in Figures \ref{fig:sampSizePlotTau2} and \ref{fig:sampSizePlotTau0.8} for $\tau = 2$ and $\tau = 0.8$, respectively. For both of these scenarios, the sample size is decreasing in $S^2_{\tau}$, again suggesting that larger treatment effect heterogeneity can have a beneficial effect on the sample size $N_{\text{cr}}$ if $\gamma \geq 0.5$. However, in Fig. \ref{fig:sampSizePlotTau2} and \ref{fig:sampSizePlotTau0.8}, we see that there is more variation in $N_{\text{cr}}$ across the horizontal axis than the vertical axis. This suggests that potential outcome variances have a larger effect on sample size than treatment effect heterogeneity, validating common power analyses that focus on these quantities rather than treatment effect heterogeneity. In particular, if increased treatment effect heterogeneity in turn increases potential outcome variances, then in general heterogeneity may have an adverse effect on sample size.

\begin{figure}
    \centering
    \begin{subfigure}[b]{0.32\textwidth}
     \includegraphics[width=\textwidth]{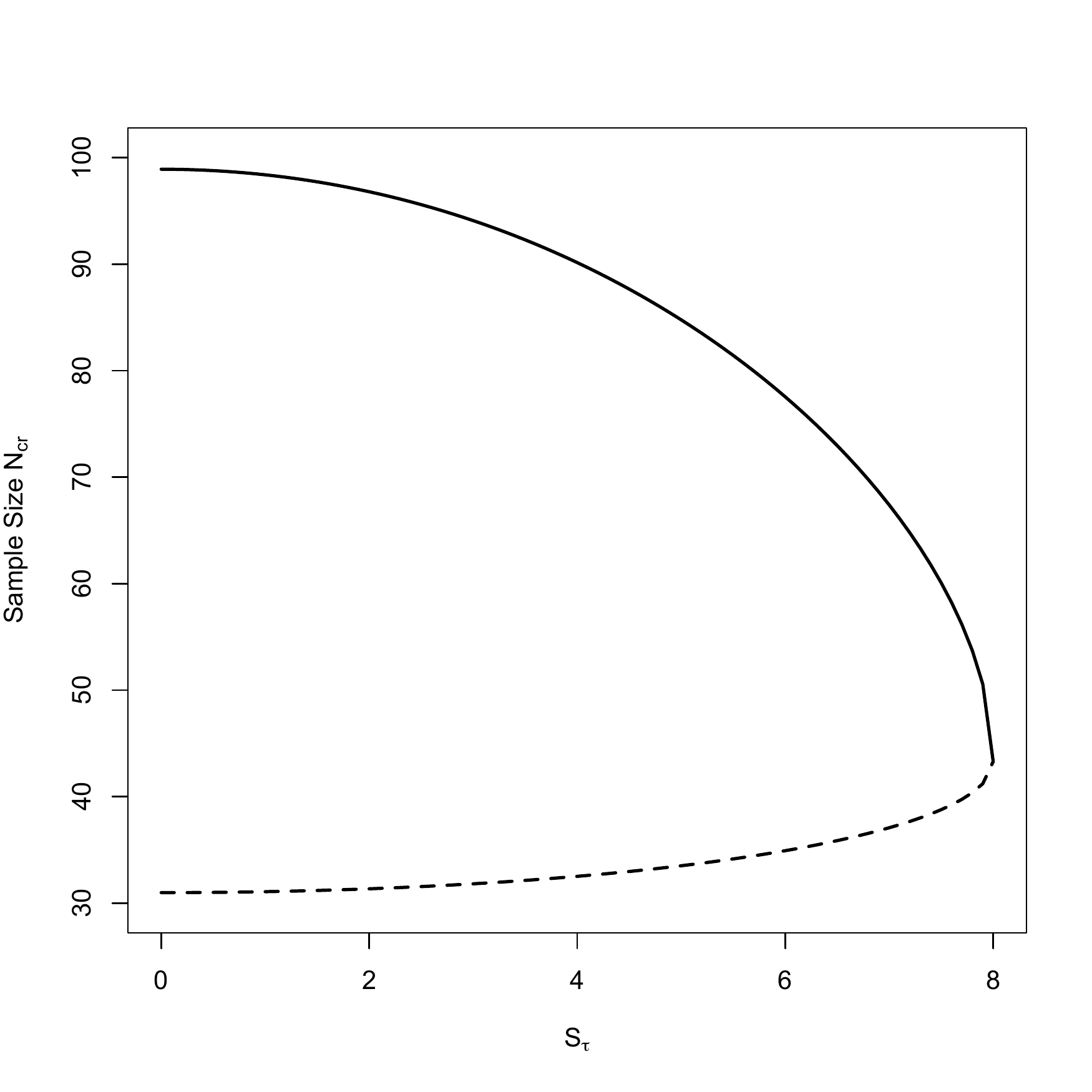}
     \caption{$N_{\text{cr}}$ when varying $S_{\tau}$ for $S_1 = S_0 = 4$ and $\tau = 2$, when $\gamma = 0.8$ (solid line) and $\gamma = 0.4$ (dotted line).}
         \label{fig:sampSizePlotSTau}
    \end{subfigure}
    \hfill
     \begin{subfigure}[b]{0.32\textwidth}
         \centering
         \includegraphics[width=\textwidth]{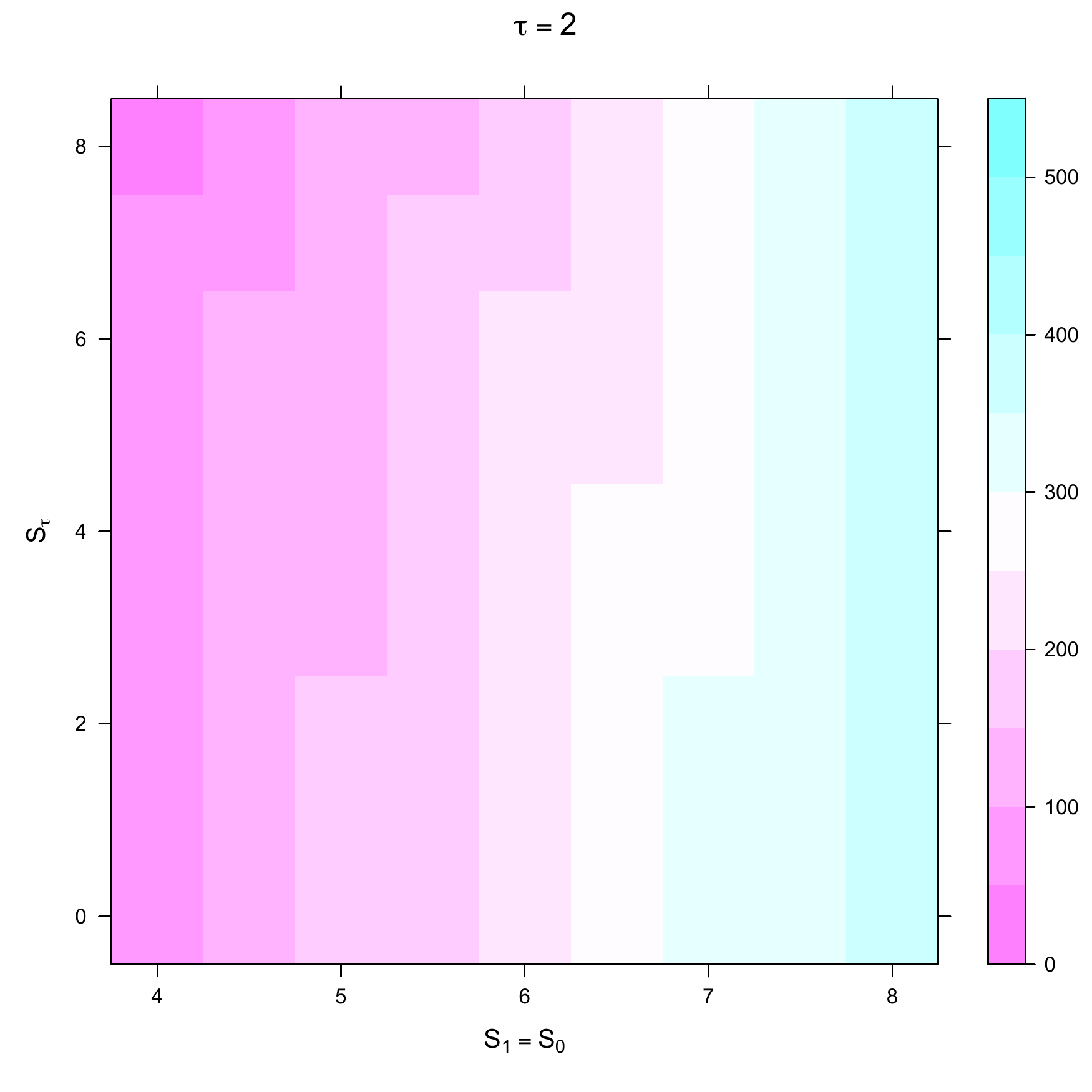}
         \caption{$N_{\text{cr}}$ when varying $S_{\tau}, S_1, S_0$ for $\tau = 2$. $N_{\text{cr}}$ ranges from approximately 44 to approximately 396.}
         \label{fig:sampSizePlotTau2}
     \end{subfigure}
     \hfill
     \begin{subfigure}[b]{0.32\textwidth}
         \centering
         \includegraphics[width=\textwidth]{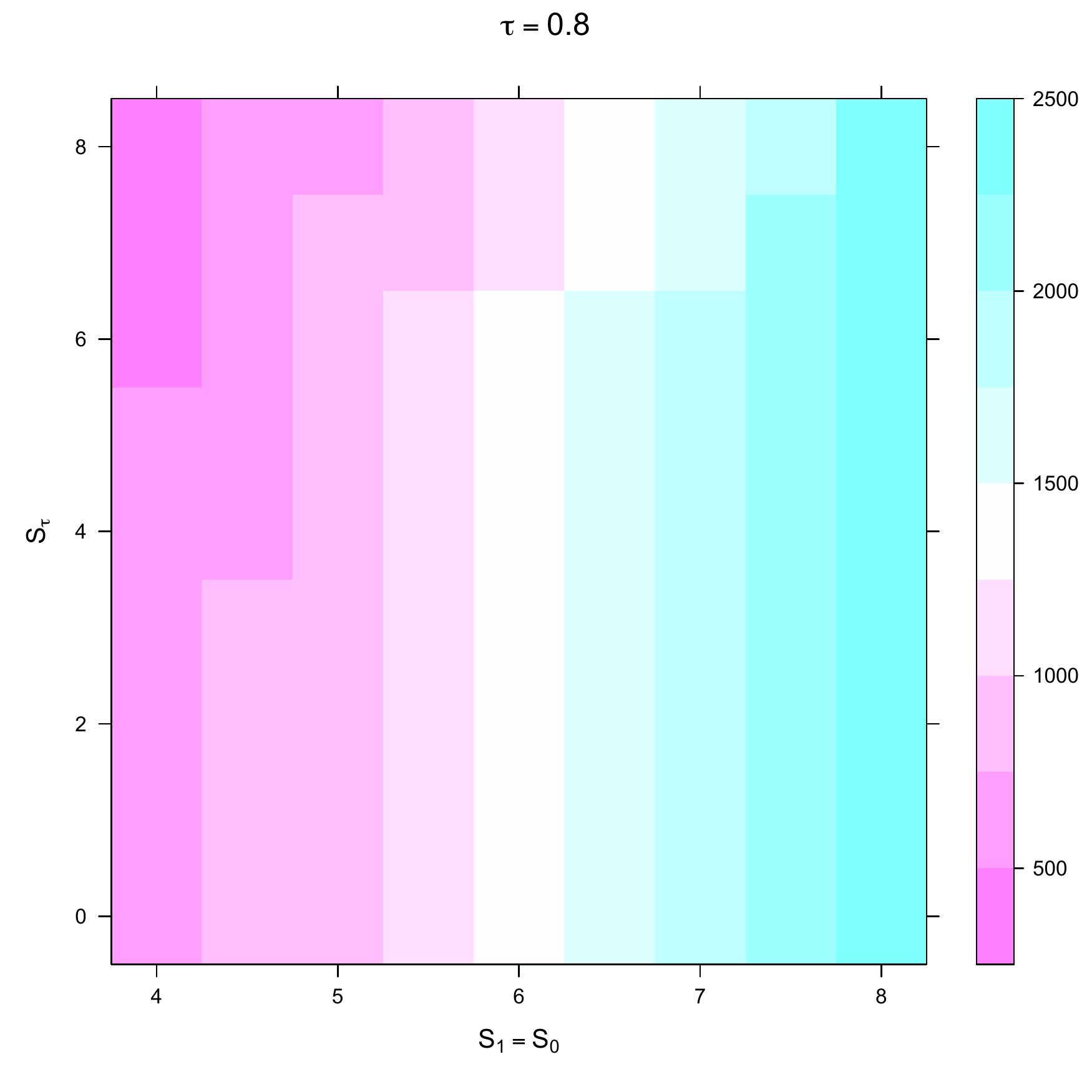}
         \caption{$N_{\text{cr}}$ when varying $S_{\tau}, S_1, S_0$ for $\tau = 0.8$. $N_{\text{cr}}$ ranges from approximately 271 to approximately 2473.}
         \label{fig:sampSizePlotTau0.8}
     \end{subfigure}
     \caption{Sample size $N_{\text{cr}}$ required to achieve power $\gamma$ when running a completely randomized experiment with $p_1 = p_0 = 0.5$ under different scenarios. In (b) and (c), $\gamma = 0.8$.}
     \label{fig:sampSizePlots}
\end{figure}

Finally, we can also consider how treatment effect heterogeneity affects power and sample size for rerandomized experiments. According to Theorem \ref{thm:powerRerand}, for fixed $S^2_1$, $S^2_0$, and $\tau$, power under rerandomization is increasing in $S^2_{\tau}$ as long as $\tau \geq \nu_{1-\osl}(\tilde{R}^2) \tilde{V}^{1/2} N^{-1/2}$, where $\nu_{1-\osl}(\tilde{R}^2)$ denotes the $(1-\osl)$-quantile of the distribution $\sqrt{1 - \tilde{R}^2} \epsilon_0 + \sqrt{\tilde{R}^2} L_{K,a}$ and $\tilde{R}^2 = VR^2 / \tilde{V}$. Note that $\nu_{1-\osl}(\tilde{R}^2) \leq z_{1-\osl}$ for all $\alpha \in (0,0.5)$, with equality only if $\tilde{R}^2 = 0$. Thus, the same conclusions made in this section for completely randomized experiments also hold for rerandomized experiments, but for smaller effect sizes. In other words, under rerandomization, a smaller $\tau$ is required in order for power to be increasing in $S^2_{\tau}$; or conversely, a smaller sample size $N_{\text{rr}}$ is required to achieve a certain level of power $\gamma \geq 0.5$, as established by Theorem \ref{thm:sampleSizeRatio}.

\section{Power and Sample Size Calculations using the \texttt{R} package \texttt{rerandPower}} \label{s:package} 

Practitioners may be interested in implementing power and sample size calculations for completely randomized and rerandomized experiments based on the results presented in this paper. Our \texttt{R} package \texttt{rerandPower}, available on \texttt{CRAN}, has four functions: \texttt{power.rand()}, \texttt{power.rerand()}, \texttt{sampleSize.rand()}, and \texttt{sampleSize.rerand()}.

The functions \texttt{power.rand()} and \texttt{power.rerand()} compute power for given sample sizes $N_1$ and $N_0$, potential outcome standard deviations $S_1$ and $S_0$, treatment effect heterogeneity standard deviation $S_\tau$, and average treatment effect $\tau$. The significance level $\alpha$ can also be specified. Let's consider the toy example in Section \ref{s:numericalExamples}, where $N_1 = N_0 = 50$, $S_1 = S_0 = 4$, $\tau = 2$, and $S_{\tau} = 0$ or $S_{\tau} = 4$ for a completely randomized experiment. The following lines of code implement the power calculations for these two cases, presented in Figure \ref{fig:powerPlots}:

\begin{verbatim}
> power.rand(N1 = 50, N0 = 50, s1 = 4, s0 = 4, tau = 2)
[1] 0.8037649
> power.rand(N1 = 50, N0 = 50, s1 = 4, s0 = 4, s.tau = 4, tau = 2)
[1] 0.838286
\end{verbatim}
We see that power increases when $S_\tau > 0$ because $\tau \geq z_{1-\osl} \tilde{V}^{1/2} N^{-1/2}$, as discussed in Section \ref{s:numericalExamples}. The calculation made in the first line of code, which by default sets $S_\tau = 0$, is widely available in other power analysis software; however, to our knowledge, other available software does not allow one to specify $S_\tau > 0$, as done in the second line of code.

Similar calculations can be made for rerandomized experiments using \texttt{power.rerand()}, except one also has to specify the number of covariates $K$, the correlation between covariates and potential outcomes $R^2$, and the acceptance probability $p_a = \pr(M \leq a)$, where $M$ is the Mahalanobis distance defined in (\ref{eqn:md}). When designing an experiment in practice, one can control $K$ and $p_a$, but of course one will not have knowledge about $R^2$ until the experiment has been conducted. Thus, $R^2$ should be specified based on subject-matter knowledge, or based on best- and worst-case scenarios. For example, consider conducting a rerandomized experiment where there are $K = 10$ covariates, $p_a = 0.01$, and there is a moderate correlation of $R^2 = 0.3$. Then the power under rerandomization for the same toy example above, with $S_\tau = 4$, is:
\begin{verbatim}
> power.rerand(N1 = 50, N0 = 50, s1 = 4, s0 = 4, s.tau = 4, tau = 2,
               K = 10, pa = 0.01, R2 = 0.3)
[1] 0.901424
\end{verbatim}
We see that, compared to the complete randomization example, power is higher in this case.

Meanwhile, the functions \texttt{sampleSize.rand()} and \texttt{sampleSize.rerand()} compute the sample size $N$ necessary to achieve a prespecified level of power $\gamma$ for given $S_1,S_0,S_\tau,\tau$, and sample size proportions $p_1 = N_1/N$ and $p_0 = N_0/N$. Let's again consider the toy example from Section \ref{s:numericalExamples}, where $p_1 = p_0 = 0.5$, $S_1 = S_0 = 4$, $\tau = 2$, and $S_\tau = 0$ or $S_\tau = 4$. The following lines of code implement sample size calculations for these two cases when power $\gamma = 0.8$, presented in Figure \ref{fig:sampSizePlots}:
\begin{verbatim}
> sampleSize.rand(power = 0.8, s1 = 4, s0 = 4, tau = 2)
[1] 98.92092
> sampleSize.rand(power = 0.8, s1 = 4, s0 = 4, s.tau = 4, tau = 2)
[1] 90.15267
\end{verbatim}
We see that the necessary sample size decreases when there is treatment effect heterogeneity, because $\gamma \geq 0.5$. Again, the calculation made in the first line of code is also widely available in other power analysis software, but to our knowledge, that in the second line of code is not.

We can again make similar calculations for a rerandomized experiment. Let's again consider the case where $K = 10$, $p_a = 0.01$, and $R^2 = 0.3$. Then, the sample size calculation for the case where $S_\tau = 4$ is:
\begin{verbatim}
> sampleSize.rerand(power = 0.8, s1 = 4, s0 = 4, s.tau = 4, tau = 2,
                    K = 10, pa = 0.01, R2 = 0.3)
[1] 72.6096
\end{verbatim}
We see that, for this example, rerandomization requires a smaller sample size than complete randomization, which will always be the case when $\gamma \geq 0.5$, as established by Theorem \ref{thm:sampleSizeRatio}. For this example, rerandomization decreases the sample size requirement by approximately 19.5\%.

More details and examples are available in the documentation for \texttt{rerandPower} on \texttt{CRAN}.

\bibliography{rerandPowerBib}

\bibliographystyle{biometrika}

\end{document}